\numberwithin{equation}{section}
\numberwithin{definition}{section}
\numberwithin{lemma}{section}
\numberwithin{proposition}{section}
\numberwithin{theorem}{section}
\numberwithin{corollary}{section}
\numberwithin{figure}{section}
\newcommand{\E}{\mathsf{E}}
\renewcommand{\P}{\mathsf{P}}
\newcommand{\Q}{\mathsf{Q}}
\newcommand{\QQ}{\mathbb{Q}}
\newcommand{\Acc}{\cC}
\newcommand{\Att}{\mathsf{A}}
\newcommand\cF{{\mathcal F}}
\newcommand\cL{{\mathcal L}}
\newcommand\cB{{\mathcal B}}
\newcommand\cC{{\mathcal C}}
\newcommand\cH{{\mathcal H}}
\newcommand\cM{{\mathcal M}}
\newcommand\cZ{{\mathcal Z}}
\newcommand{\XX}{\mathfrak{X}}
\renewcommand{\XX}{\R^d}
\newcommand{\R}{\mathbb{R}}
\newcommand{\Lnot}[1][0]{\cL^#1}
\newcommand{\barLnot}[1][0]{{\overline\cL}^#1}
\newcommand{\Lts}[1][s]{\cL^p_{t,#1}}
\newcommand{\Ltg}[2]{\cL^p_{#1,#2}}
\newcommand{\LuT}[1][u]{\cL^p_{#1,T}}
\newcommand{\Ltsp}[1][\infty]{\cL^{#1}_{t,s}}
\newcommand{\Ltsbar}{\overline{\cL}^p_{t,s}}
\newcommand{\Ltsbarp}[1][1]{\overline{\cL}^{#1}_{t,s}}
\newcommand{\tn}{|\!|\!|}
\newcommand{\NAS}{$\mathrm{(}\mathrm{NA}^{\mathrm{s}}\mathrm{)}$}
\newcommand{\NAT}{$\mathrm{(NA2)}$}
\newcommand{\NRA}[1][]{$\mathrm{(NRA)}_{#1}$}
\newcommand{\SNR}[1][]{$\mathrm{(SNR)}_{#1}$}
\newcommand{\NARA}[1][]{$\mathrm{(NARA)}_{#1}$}
\newcommand{\NRAT}{$\mathrm{(NRA2)}$}
\newcommand{\SNA}{$\mathrm{(SNA)}$}
\newcommand{\SNAR}{$\mathrm{(SNAR)}$}
\renewcommand{\subseteq}{\subset}
\renewcommand{\supseteq}{\supset}
\newcommand{\risk}{r}
\newcommand{\vecrisk}{\mathsf{r}}
\newcommand{\rhos}{\mathsf{R}}
\newcommand{\rhonot}{\mathsf{R}^0}
\newcommand{\bX}{{\mbf X}}
\newcommand{\bY}{{\mbf Y}}
\newcommand{\bK}{{\mbf K}}
\newcommand{\bM}{{\mbf M}}
\renewcommand{\Gamma}{\bX}
\let\Xi\varXi
\renewcommand{\epsilon}{\varepsilon}
\renewcommand{\bX}{X}
\renewcommand{\bY}{Y}
\renewcommand{\bK}{K}
\renewcommand{\bM}{M}
\newcommand{\Gr}{\mathrm{Gr}}
\newcommand{\cl}{{\mathrm{cl}}}
\newcommand{\cm}{\mathbf{m}}
\newcommand{\CM}{\mathbf{M}}
\newcommand{\Int}{\mathrm{int}}
\newcommand{\esssup}[1][\cH]{\mathrm{ess\,sup}_{#1}}
\newcommand{\essinf}[1][\cH]{\mathrm{ess\,inf}_{#1}}
\newcommand{\one}{{\mathbf{1}}}
\newcommand{\eqdef}{=}
\newlength{\querylen}
\begin{document}


\title{Risk Arbitrage and Hedging to Acceptability
  under Transaction Costs}

\titlerunning{Risk arbitrage}

\author{Emmanuel Lepinette \and Ilya Molchanov}

\institute{E. Lepinette \at
              Paris-Dauphine University, Place du Mar\'echal De Lattre De
  Tassigny,
  75775 Paris cedex 16, France, and 
  GOSAEF, Tunis-El Manar University, 2092-ElManar, Tunisia\\
  \email{emmanuel.lepinette@ceremade.dauphine.fr}           
  \and I. Molchanov \at Institute of Mathematical Statistics and
  Actuarial Science, University of Bern,
  Alpeneggstr. 22,
  3012 Bern,
  Switzerland\\
  \email{ilya.molchanov@stat.unibe.ch}
}

\date{\today}

\maketitle

\begin{abstract}
  The classical discrete time model of proportional transaction costs
  relies on the assumption that a feasible portfolio process has
  solvent increments at each step. We extend this setting in two
  directions, allowing for convex transaction costs and assuming that
  increments of the portfolio process belong to the sum of a
  solvency set and a family of multivariate acceptable positions,
  e.g. with respect to a dynamic risk measure.  We describe the sets
  of superhedging prices, formulate several no (risk) arbitrage
  conditions and explore connections between them. In the special case
  when multivariate positions are converted into a single fixed asset,
  our framework turns into the no good deals setting. However, in
  general, the possibilities of assessing the risk with respect to any
  asset or a basket of the assets lead to a decrease of superhedging
  prices and the no arbitrage conditions become stronger.
  The mathematical technique relies on results for unbounded and
  possibly non-closed random sets in Euclidean space.
  \keywords{acceptance set \and risk arbitrage \and risk measure \and
    superhedging \and good deal \and solvency set \and random set \and
    transaction costs}
  \subclass{91G20, 60D05, 60G42} 
\end{abstract}

\section{Introduction}
\label{sec:introduction}

Transaction costs in financial markets are often described using
solvency sets, which consist of all
financial positions (in physical quantities) regarded better than the
zero position or at least equivalent to it. 
In the dynamic discrete time setting, the solvency sets form a
set-valued random process $(\bK_t)_{t=0,\dots,T}$ adapted to the
underlying filtration $(\cF_t)_{t=0,\dots,T}$. The no arbitrage
conditions are usually formulated in terms of selections of these
solvency sets, that is, for random vectors that a.s. belong to the solvency
sets and so correspond to particular choices of solvent portfolios. In many
cases, solvency sets are polyhedral cones and the corresponding model is known as
Kabanov's model with proportional transaction costs, see
\cite{kab:ras:str02,kab:saf09,schach01}, where the no arbitrage
conditions are thoroughly discussed. 

If $\xi$ is a claim that matures at time $T$, then the set of initial
positions suitable as a starting value for a self-financing portfolio
process $(V_t)_{t=0,\dots,T}$ paying $\xi$ at maturity forms the
family of superhedging prices for $\xi$.  In the multivariate setting,
the starting values are vectors which are not necessarily comparable
to each other, and so, instead of comparing them by a single numerical
quantity, it is sensible to look at the whole set of superhedging
prices.
The self-financing requirement amounts to the fact that the increment
$V_{t-1}-V_t$ of the portfolio process is solvent at all times, that
is, it a.s.\ belongs to $\bK_t$ for all $t$ (in other words, the increment is
a \emph{selection} of $\bK_t$). 

In order to reduce these superhedging prices, it is possible to
require that the shortfall of the terminal value of the portfolio, in
comparison with the claim, is acceptable with respect to a certain
risk measure. This approach may provide arbitrage opportunities as
\emph{Good Deals}, i.e. terminal claims attainable from the zero
capital and such that the risk of the claim is strictly negative,
equivalently, the utility is strictly positive.  The No Good Deal
condition, first introduced in \cite{coc:saa-00} and then formalised
in \cite{cer:hod01,cher07}, requires that this situation is
impossible.  Unlike the univariate setting, the existence of a good
deal in the multivariate setting does not necessarily mean the
existence of a claim whose multivariate utility belongs to
$\R_+^d$. Indeed, a vector-valued financial position may be acceptable
if some acceptable components compensate for the non-acceptable ones.
This may result in various types of arbitrage opportunities.

Indeed, it is possible to strengthen the no arbitrage requirement by
also considering hedging strategies, where the self-financing
condition is replaced by the acceptability of all intermediate
portfolio changes with respect to a dynamic risk measure, see
\cite{cher07fs}. The setting of \cite{cher07fs,cher08} involves at
least two assets exchangeable without transaction costs and pinpoints
a particular asset that is used as the cash equivalent. The portfolio
is converted to its cash equivalent, with the acceptability condition
imposed on the increments of these cash values for consecutive time
moments. The idea of converting portfolios to a single numerical
quantity with acceptable increments in view of superhedging
one-dimensional claims has been further explored in
\cite{cher:kup:tan17}.

However, if there are several currencies (exchangeable with random
frictionless rates or with transaction costs), it may well be the case
that the position expressed in one currency is acceptable, while the
position in the other one is not, see \cite[Ex.~1.1]{cas:mol14}. This may lead to regulatory arbitrages, see \cite{wil17}. If the regulator is prepared to apply a relaxed acceptability criterion for one currency, then it would be logical to expect the same policy
with respect to another currency or a basket of currencies.  We show
how to handle this in a way that treats all components of a portfolio
in the same manner.\smallskip

The key idea of this work is to extend the family of self-financing
portfolio processes by requiring that $V_{t-1}-V_t$ equals the sum of
a selection of $\bK_t$ (a solvent position) and another random vector
that is not necessarily solvent, but is acceptable with respect to a
dynamic multivariate risk measure. It is worth mentioning that $\bK_t$
is only supposed to be convex, contrarily to the classical literature
of linear transaction costs. With this \emph{hedging to acceptability}
approach, all components of the portfolio are treated in the same
way. Then, $(V_t)_{t=0,\dots,T}$ is called an \emph{acceptable
  portfolio process}. For example, the classical superhedging setting
arises if the componentwise conditional essential infimum is chosen as
the risk measure, so that acceptable random vectors necessarily have
all a.s. non-negative components. The hedging to acceptability
substantially increases the choice of possible hedging strategies, but
in some cases may lead to arbitrage.

\begin{example}
  \label{ex:risk-arbitrage}
  Let $\risk$ be any coherent risk measure.  Consider the one
  period zero interest model with two currencies as the assets. Assume
  that the exchange rate $\pi$ (so that $\pi$ units of the second
  asset buy one unit of the first one) at time one is log-normally
  distributed (in the real world) and the exchanges are free from
  transaction cost. 
  Then, the positions $\gamma'=(-a,\pi a)$ and $\gamma''=(a,-\pi a)$
  for $a>0$ are reachable from $(0,0)$ at zero cost. Their
  risks are $(a,a\risk(\pi))$ and $(-a,a\risk(-\pi))$. In order to
  secure the capital reserves for $\gamma'$, the agent has to reserve
  $a$ of the first currency and $a\risk(\pi)$ of the second one (note
  that $\risk(\pi)<0$). If the exchange rate at time zero is $\pi_0$,
  the initial cost expressed in the second currency is
  \begin{displaymath}
    \pi_0 a+ a \risk(\pi)=a(\pi_0+\risk(\pi)).
  \end{displaymath}
  In order to secure $\gamma''$, the initial cost is
  $a(-\pi_0+\risk(-\pi))$. If $\pi_0$ does not belong to the interval
  $[-\risk(\pi),\risk(-\pi)]$, then either $\pi_0+\risk(\pi)<0$ or
  $-\pi_0+\risk(-\pi)<0$, and we let $a$ grow to release infinite
  capital at time zero. Note that this model does not admit financial
  arbitrage, since there exists a martingale measure. 
  This example can be easily modified by accounting for proportional 
  transaction costs.
\end{example}

It is recognised by now that risks of multivariate positions involving
possible exchanges of assets and transaction costs are described as
sets, see \cite{cas:mol07,ham:hey10}. The multiasset setting naturally
makes it possible to offset a risky position using various
combinations of assets. In this framework it is also natural to
consider the family of all attainable positions as a set-valued
portfolio.
Treating both arguments and values of
a risk measure as random sets leads to law invariant risk measures and
makes it possible to iterate the construction, which is essential to
handle dynamic risk measures.

One of the aims of this paper is to introduce a geometric
characterisation of superhedging prices. On the way, we suggest a
constructive definition of dynamic multivariate risks based on the
families of acceptable positions and so extend the existing works on
dynamic set-valued risk measures \cite{fein:rud13,fein:rud14fs} by
letting the arguments of risks and their values be sets of random
vectors in $\R^d$. In many instances, these sets may be interpreted as
random (possibly, non-closed) sets. The necessary background on random
sets is provided in the appendix. In particular, it is shown that the
Minkowski (elementwise) sum of two random closed sets is measurable,
no matter if the sum is closed or not. Special attention is devoted to
the decomposability and infinite decomposability properties, which are
the key concepts suitable to relate families of random vectors to
selections of random sets.

We refer to \cite{delb12} and \cite{foel:sch04} for the basics of
static risk measures and to \cite{acc:pen11} for a survey of the
dynamic $\Lnot[\infty]$-setting, further extended beyond the
$\Lnot[\infty]$-setting by the module
approach worked out in \cite{fil:kup:vog09,fil:kup:vog12}.

Static risk measures are usually defined on $\Lnot[p](\R,\cF)$ with
$p\in[1,\infty]$. However, in many cases, they are well defined also
on larger sets of random variables. For example,
$\risk(\xi)=-\essinf[]\xi$ makes sense for all random variables
essentially bounded from below by a constant. Similarly, if
$\risk(\xi)=-\E\xi$, then the acceptance set is defined as the family
of $\xi$ such that their positive and negative parts satisfy
$\E\xi^+\geq\E\xi^->\infty$. The boundedness of $\E\xi^+$ is not
required.

To account for similar effects in relation to multivariate dynamic
risk measures, we put forward acceptance sets in place of risk
measures.  The \emph{acceptance sets} $\Acc_{t,s}$ with $t\leq s$ are
subsets of the sum of the family of $\cF_s$-measurable random vectors
in $\R^d$ that admit generalised conditional $p$th moment with respect
to $\cF_t$ and the family of all $\cF_s$-measurable random vectors in
$\R_+^d$.  Section~\ref{sec:select-risk-meas} introduces basic
conditions on the acceptance sets and several optional ones.

The dynamic \emph{selection risk measure} $\rhos_{t,s}(\Xi)$ for a
family $\Xi\subset\Lnot(\R^d,\cF_T)$ is introduced as the
closure in probability of $(\Xi+\Acc_{t,s})\cap\Lnot(\R^d,\cF_t)$. 
If $\Xi=\Lnot(\bX,\cF_T)$ is the family of selections for a random
closed set $\bX$, then $\rhos_{t,s}(\bX)=\rhos_{t,s}(\Xi)$ itself is
an $\cF_t$-measurable random closed set. In comparison with
\cite{fein:rud14fs}, this approach explicitly defines a set-valued
dynamic risk measure instead of imposing on it some axiomatic
properties. This yields a set-valued risk measure with a set-valued
argument that can be naturally iterated in the dynamic framework. The
conditional convexity of the acceptance sets yields that
\begin{displaymath}
  \rhos_{t,s}(\lambda\bX+(1-\lambda)\bY)\supseteq
  \lambda\rhos_{t,s}(\bX)+(1-\lambda)\rhos_{t,s}(\bY)
  \quad \text{a.s.}
\end{displaymath}
for any $\lambda\in\Lnot([0,1],\cF_t)$ and random closed sets $\bX$
and $\bY$, meaning that the risk measure is also conditionally convex.  The
static case of this construction was considered in \cite{cas:mol14},
where properties of selection risk measures in the coherent case are
obtained, some of them easily extendable for the dynamic convex
setting. Comparing to \cite{cas:mol14}, we work with solvency sets
instead of portfolios available at price zero and also allow the
argument of the risk measure to be a rather general family of random
vectors.  

The hedging to acceptability relies on a sequence
$(\bK_t)_{t=0,\dots,T}$ of solvency sets and the acceptance sets
$\Acc_{t,s}$ for $0\leq t\leq s\leq T$. Note that the solvency sets
are not assumed to be conical, since non-conical models naturally
appear, e.g. in the order book setting, see \cite{cet:jar:prot04} and
\cite{pen:pen10}.  An acceptable portfolio process
$(V_t)_{t=0,\dots,T}$ introduced in Section~\ref{sec:hedg-with-coher}
satisfies $V_{t-1}-V_t=k_t+\eta_t$ for $k_t\in\Lnot(\bK_t,\cF_t)$,
$\eta_t\in\Acc_{t-1,t}$, and all $t$. In other words, the available
assets do suffice to pay for the portfolio at the next step up to an
amount acceptable with respect to some risk measure. The strongest
acceptability condition assumes that $\Acc_{t-1,t}$ consists of random
vectors with non-negative components and yields the classical
arbitrage theory for markets with transaction costs, see
\cite{kab:saf09}. The weakest acceptability requirement presumes that
all $\eta_t$ from $\Acc_{t-1,t}$ have non-negative
$\cF_{t-1}$-conditional expectations.

If $\xi$ is a terminal claim on $d$ assets, then $\Xi_t^\xi$ denotes
the set of all initial endowments at time $t$ that ensure the
existence of an acceptable portfolio process paying $\xi$ at maturity,
that is, $V_T\in\xi+\bK_T$ a.s. Equivalently, $\Xi_t^\xi$ is the
family of $\cF_t$-measurable elements of $(\xi-\Att_{t,T})$, where
$\Att_{t,T}$ is the set of claims attainable at time $T$ starting from
zero investment at time $t$. The family $\Xi_t^\xi$ may be used to
assess the risk associated with $\xi$ at time $t$.

The no arbitrage conditions we study are imposed on the set of
super-hedging prices $\Xi_t^0$ for the zero claim $\xi=0$. They may be
reformulated as no arbitrage conditions on the set of attainable
claims, which are weaker than the usual ones of the literature.  These
\emph{no risk arbitrage} conditions are introduced and analysed in
Section~\ref{sec:risk-arbitrage}. In difference to \cite{cher07fs}, we
do not rely on the weak compactness of the duals to the acceptance
sets and we do not need to pinpoint any reference asset.  It should be
noted that the risk arbitrage only makes sense in the multiasset
setting with some trading opportunities between the assets; if
$\bK_t=\R_+^d$ (which is always the case on the line), then all no
risk arbitrage conditions automatically hold.

It is shown that, in some cases, it is possible to represent the
families of capital requirements as a set-valued process, and the no
risk arbitrage conditions, for linear transaction costs, can be characterised in terms of weakly
consistent price systems, so providing a variant of the Fundamental
Theorem of Asset Pricing in our framework, see
Theorem~\ref{coroEquivPropNARA}. 

A comparison of our approach with the no good deals setting is
provided in Section~\ref{sec:no-good-deals-1}. It is shown that our
approach imposes stronger no arbitrage conditions that are more
difficult to check, but which result in lower superhedging prices.

Note that the sets $\Acc_{t,s}$ of acceptable positions always contain the
family $\Lnot(\R_+^d,\cF_s)$ of random vectors with a.s. non-negative
components and, in many cases, $\Acc_{t,s}$ is a subset of the family
of random vectors with non-negative generalised conditional
expectation given $\cF_t$.  Thus, the no risk arbitrage conditions are
sandwiched between those for the risk measure based on the conditional
essential infimum and on the conditional expectation. The first choice
corresponds to the classical financial arbitrage with transaction
costs, where our no risk arbitrage conditions become the classical
no arbitrage conditions. 

Section~\ref{sec:conditional-core-as} recovers and extends several
results from \cite{kab:saf09}. In this classical setting, our approach
yields a new geometric interpretation of the sets of
superhedging prices with possibly non-conical solvency sets; it is
formulated using the
concept of the conditional core of a random set elaborated in
\cite{lep:mol19}. The result applies also in some cases when the
classical consistent price systems characterisation
fails.  

In Section~\ref{sec:sandwich-theorems} we characterise no arbitrage
conditions arising by adopting the generalised conditional expectation
as the acceptability criterion. These are the strongest no arbitrage
conditions in our framework; their validity ensures the absence of 
arbitrage for all acceptance criteria satisfying a dynamic version of
the dilatation monotonicity condition from
\cite{cher:grig07}. 

The results from Sections~\ref{sec:conditional-core-as} and
\ref{sec:sandwich-theorems} are illustrated on a two-asset example in
Section~\ref{sec:appl-two-dimens}.

\section{Dynamic acceptance sets and selection risk measures}
\label{sec:select-risk-meas}

\subsection{Definition and main properties}
\label{sec:cond-risk-meas}

Let $(\Omega,(\cF_t)_{t=0,\dots,T},\P)$ be a stochastic basis on a
complete probability space such that $\cF_0$ is the trivial
$\sigma$-algebra and $\cF_T=\cF$. In the following,
we endow random vectors and events with a subscript that indicates the
$\sigma$-algebras they are measurable with respect to. The subscript
is often omitted for $\cF_T$-measurable random vectors.  

Let $\Lnot[p](\R^d,\cF)$ with $p\in[1,\infty]$ be the family of
$p$-integrable random vectors (essentially bounded if $p=\infty$), and
let $\Lnot(\R^d,\cF)$ be the family of all random vectors in $\R^d$.
The closure in the strong topology in
$\Lnot[p](\R^d,\cF)$ for $p\in[1,\infty)$ is denoted by $\cl_p$, and
$\cl_0$ is the closure in probability in $\Lnot(\R^d,\cF)$. If $p=\infty$, the
closure is considered with respect to the a.s. convergence of
uniformly bounded sequences.

For a sub-$\sigma$-algebra $\cH\subset\cF$, denote $\Lnot[p_{\cH}](\R^d,\cF)$
the module of $\cF$-measurable random vectors that can be represented
as $\gamma\xi$ with $\xi\in\Lnot[p](\R^d,\cF)$ and
$\gamma\in\Lnot(\R,\cH)$, see \cite[Ex.~2.5]{fil:kup:vog09}. In
particular, $\Lnot[1_{\cH}](\R^d,\cF)$ is the family of all $\xi$ that
admit the generalised conditional expectation $\E^g(\xi|\cH)$ with
respect to $\cH$, see \cite[Lemma~B.3]{lep:mol19}. Following
\cite[Ex.~2.5]{fil:kup:vog09}, the module norm is defined by
\begin{equation}
  \label{eq:norm-module}
  \tn\xi\tn_{p,\cH}=
  \begin{cases}
    \E(\|\xi\|^p |\cH)^{1/p}, & p\in[1,\infty),\\
    \esssup[\cH]\|\xi\|, & p=\infty,
  \end{cases}
\end{equation}
where $\esssup[\cH]\|\xi\|$ is the $\cH$-measurable essential supremum
of $\|\xi\|$, see \cite[Appendix~A.5]{foel:sch04}.  We endow the space
$\Lnot[p_{\cH}](\R^d,\cF)$ with the topology of
$\Lnot[p_{\cH}]$-convergence by assuming that $\xi^n$ converges to
$\xi$ if $\tn\xi^n-\xi\tn_{p,\cH}\to0$ in probability if
$p\in[1,\infty)$. If $p=\infty$, we use the $\cF_t$-bounded
convergence in probability, meaning that $\esssup[\cH]\|\xi^n\|$ is
bounded and $\tn(\xi^n-\xi)\wedge 1\tn_{1,\cH}\to0$ in probability as
$n\to\infty$. 

Denote shortly $\Lnot[p]=\Lnot[p](\R^d,\cF)$,
$\Lts=\Lnot[p_{\cF_t}](\R^d,\cF_s)$ for $t\leq s$, and let
$\Ltsbar=\barLnot[p_{\cF_t}](\R^d,\cF_s)$ be the family of random
vectors $\xi_s$ that can be decomposed as $\xi_s=\xi'_s+\xi''_s$,
where $\xi'_s\in \Lts$ and $\xi_s''\in\Lnot(\R_+^d,\cF_s)$.  Following
the classical definition of an acceptance set in the theory of risk
measures, we introduce the acceptance set $\Acc_{t,s}$ for $t\le s$ as
the collection of all $\cF_s$-measurable financial positions regarded
acceptable at time $t$.

\begin{definition}
  \label{DefiRiskmeasure}
  Discrete time \emph{$\Lnot[p]$-dynamic convex acceptance sets} are
  a family $\{\Acc_{t,s},\; 0\leq t\leq s\leq T\}$, such that
  $\Acc_{t,s}\subset\Ltsbar$
  and the following properties hold for all $0\leq t\leq s\leq T$. 
  \begin{enumerate}[(i)]
  \item Normalisation: $\Acc_{t,t}=\Lnot(\R_+^d,\cF_t)$,
    $\Acc_{t,s}\supset\Lnot(\R_+^d,\cF_s)$, and\\
    $\Acc_{t,s}\cap\Lnot(\R_-^d,\cF_t)=\{0\}$. 
  \item Integrability: 
    \begin{equation}
      \label{eq:acc-integrability}
      \Acc_{t,s}=(\Acc_{t,s}\cap \Lts)
      +\Lnot(\R_+^d,\cF_s).
    \end{equation}
  \item Closedness: $\Acc_{t,s}\cap \Lts[T]$ is 
    closed in $\Lts[T]$. 
  \item Conditional convexity: for all $\alpha_t\in
    \Lnot([0,1],\cF_t)$, and $\eta'_s,\eta''_s\in\Acc_{t,s}$,
    \begin{displaymath}
      \alpha_t \eta'_s +(1-\alpha_t)\eta''_s \in \Acc_{t,s}.
    \end{displaymath}
  \item Weak time consistency: $\Acc_{t,s}\cap
    \Lnot(\R^d,\cF_u)=\Acc_{t,u}$ for all $0\leq t\leq u\leq s\leq T$.
  \item Compensation: if $\xi_s\in\Lts$, then
    $(\xi_s+\Acc_{t,s})\cap \Lnot(\R^d,\cF_t)\neq\emptyset$.
  \end{enumerate}
\end{definition}

The integrability property implies that $\Acc_{t,s}$ is an \emph{upper
  set}, that is, $\eta_s\in\Acc_{t,s}$ and $\eta_s\leq \eta'_s$
a.s. (all inequalities are understood coordinatewisely) yield
$\eta'_s\in\Acc_{t,s}$.  The compensation property implies that, for
all $\xi_s\in\Lts$, there exists $\gamma_t\in \Lnot(\R^d,\cF_t)$ such
that $\gamma_t+\xi_s\in \Acc_{t,s}$, i.e. it is possible to make the
financial position $\xi_s$ acceptable by adding the position
$\gamma_t$.  In the following, we need only the acceptance sets
$\Acc_{t-1,t}$ for $t=1,\dots,T$.

\begin{example}[Static univariate convex risk measures]
  \label{ex:static-convex}
  Consider the one-period setting in one dimension with $t=0,1$. If
  $\risk$ is a convex $\Lnot[p]$-risk measure with $p\in[1,\infty)$,
  then its acceptance set $\Acc_{0,1}\cap \Lnot[p](\R,\cF_1)$ is the
  family of $\eta_1\in\Lnot[p](\R,\cF_1)$ such that $\risk(\eta_1)\leq
  0$.  The lower semicontinuity of $\risk$ is equivalent to the
  closedness of the acceptance set. The conditional convexity property
  of the acceptance set is equivalent to the convexity property of the
  risk measure. The compensation property corresponds to the
  finiteness of $\risk$.
\end{example}

The following result refers to the infinite decomposability property
(see Definition~\ref{def:decomp}), also known as the countable
concatenation property \cite{fil:kup:vog09} or $\sigma$-stability
\cite{fil:kup:vog12}.

\begin{lemma} 
  \label{ACC-inf-decomp} 
  The families $\Acc_{t,s}$ are infinitely $\cF_t$-decomposable for
  all $0\leq t\leq s\leq T$.
\end{lemma}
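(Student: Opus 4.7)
The strategy is to decompose each of the two sets via property (ii) as a Minkowski sum of two infinitely $\cF_t$-decomposable summands, leveraging the easy fact that the sum of two such sets is again infinitely $\cF_t$-decomposable (apply the assumed decomposition to each summand separately). For $\Lnot(\R_+^d,\cF_s)$, infinite decomposability is immediate since non-negativity is preserved. For $\Lnot[p_{\cF_t}](\R^d,\cF_s)$, I would unfold the definition of the generalised conditional $p$-th moment: given $\xi_n \in \Lnot[p_{\cF_t}]$ with associated $\cF_t$-measurable ``localising'' partitions and the given $\cF_t$-measurable partition $(A_n)$, the joint refinement is $\cF_t$-measurable and localises $\sum_n \xi_n\one_{A_n}$ to pieces of finite conditional $p$-th moment, because at each $\omega$ only one $n$ contributes. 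This yields the conclusion for $\barLnot[p_{\cF_t}](\R^d,\cF_s)$.

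For $\Acc_{t,s}$, I would first establish \emph{finite} $\cF_t$-decomposability by induction on the number of atoms, the step from $n$ to $n+1$ using conditional convexity (iv) with $\alpha_t = \one_{A_{n+1}^c}$ applied to $\eta' = \xi_1\one_{A_1\cup A_{n+1}} + \sum_{i=2}^n \xi_i\one_{A_i}$ (which lies in $\Acc_{t,s}$ by the inductive hypothesis applied to the $n$-atom partition $A_1\cup A_{n+1},A_2,\dots,A_n$) and $\eta'' = \xi_{n+1}$, thus producing $\sum_{i=1}^{n+1}\xi_i\one_{A_i}\in\Acc_{t,s}$. The base case $n=2$ is just (iv) with indicator weights. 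Next, I would reduce infinite decomposability of $\Acc_{t,s}$ to that of $\Acc_{t,s}\cap\Lnot[p_{\cF_t}]$ via (ii), exactly as above.

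For the countable case applied to $\Acc_{t,s}\cap\Lnot[p_{\cF_t}]$, I form partial sums $\zeta_n = \sum_{i=1}^n \xi_i\one_{A_i}$. By finite decomposability applied to the partition $A_1,\dots,A_n,\cup_{i>n}A_i$ and random vectors $\xi_1,\dots,\xi_n,0$ (using $0\in\Acc_{t,s}$ from normalization (i)), each $\zeta_n\in\Acc_{t,s}\cap\Lnot[p_{\cF_t}]$. Since $\zeta_n\to\zeta\eqdef\sum_n \xi_n\one_{A_n}$ pointwise with $|\zeta_n|\leq|\zeta|$ and $\zeta\in\Lnot[p_{\cF_t}]$ by the first paragraph, a dominated convergence argument delivers $\zeta_n\to\zeta$ in the $\tn\cdot\tn_{p,\cH}$-norm for $p\in[1,\infty)$ and in $\cF_t$-bounded convergence in probability for $p=\infty$. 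Closedness (iii) then places $\zeta$ in $\Acc_{t,s}$. The main obstacle is this final passage to the limit, which hinges on the precise meaning of the $\tn\cdot\tn_{p,\cH}$-norm and the generalised conditional boundedness topology; the pointwise monotone bound $|\zeta_n|\leq|\zeta|$ together with $\zeta\in\Lnot[p_{\cF_t}]$ should nonetheless make this a standard dominated convergence argument rather than a genuine difficulty.
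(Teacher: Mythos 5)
Your proposal is correct and follows essentially the same route as the paper: reduce to $\Acc_{t,s}\cap\Lnot[p_{\cF_t}]$ via the integrability property, get finite $\cF_t$-decomposability from conditional convexity with indicator weights, pass to countable partitions through partial sums converging in the $\tn\cdot\tn_{p,\cF_t}$-sense (the paper pads the tail with $\eta_s^1$ rather than $0$, which is immaterial), and invoke closedness (iii). The final limit passage you flag as the main obstacle is indeed just the conditional dominated convergence argument you describe; in fact $\E(\|\zeta-\zeta_n\|^p|\cF_t)=\E(\|\zeta\|^p|\cF_t)\one_{\cup_{i>n}A_i}\to 0$ a.s. directly, so there is no difficulty there.
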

\begin{proof} 
  If $\eta_s^i\in \Acc_{t,s}\cap
  \Lts$ and $B_t^i\in \cF_t$, $i\ge 1$, then
  \begin{displaymath}
    \bar\eta^n_s\eqdef\sum_{i=1}^{n}\one_{B_t^i}\eta_s^i
    +\eta_s^1\one_{\Omega\setminus\cup_{i=1}^n B_t^i}\in \Acc_{t,s}
  \end{displaymath}
  by the conditional convexity property, so that $\Acc_{t,s}\cap\Lts$
  is $\cF_t$-decomposable. Since
  $\bar\eta^n_s\to\bar\eta_s\eqdef\sum_{i=1}^{\infty}\one_{B_t^i}\eta_s^i$
  in the $\tn\cdot\tn_{p,\cF_t}$-norm if $p\in[1,\infty)$ and
  $\cF_t$-boundedly in probability if $p=\infty$, we have
  $\bar\eta_s\in\Acc_{t,s}\cap\Lts$. By the integrability property,
  $\Acc_{t,s}$ is also infinitely decomposable.
\end{proof}

\begin{definition}
  \label{def:extra}
  The family of dynamic convex acceptance sets is called
  \begin{enumerate}[(i)]
  \item \emph{coherent} if $\alpha_t\eta_s\in\Acc_{t,s}$ for all
    $t\leq s$, $\alpha_t\in\Lnot([0,\infty),\cF_t)$, and
    $\eta_s\in\Acc_{t,s}$;
  \item \emph{continuous from below at zero} if, for all $t\leq s$,
    and any sequence $\xi_s^n\in \Lnot[p_{\cF_t}](\R_-^d,\cF_s)$, 
    $n\geq1$, with
    $\tn \xi_s^n\tn_{p,\cF_t}\to0$ in probability as $n\to\infty$,
    there exists a sequence $\gamma_t^n\in\Lnot(\R_+^d,\cF_t)$,
    $n\geq1$ and $k\in\R_+$, such that
    $\gamma_t^n+\xi^n_s\in \Acc_{t,s}$ and $\|\gamma_t^n\|\leq k
    \tn\xi_s^n\tn_{p,\cF_t}$ a.s. for all $n$.
  \end{enumerate}
\end{definition}

If $p=\infty$ then the continuity from below always holds and is
easily verified by choosing $\gamma_t^n$ with all identical components
being $\tn\xi_s^n\tn_{\infty,\cF_t}$.

\begin{example}
  The acceptance sets can be defined using a univariate convex dynamic
  $\Lnot[p]$-risk measure $(\risk_t)_{t=0,\dots,T}$, so that
  $\Acc_{t,s}\cap\Lts$ is the $d$th Cartesian power of the acceptance set
  for $\risk_t$. Equivalently, 
  \begin{math}
    \xi_s 
    \in \Acc_{t,s}\cap\Lts
  \end{math}
  if and only if all components of $\xi_s$ are acceptable under
  $\risk_t$. 
  The continuity from below property (with $p\in[1,\infty)$) holds if
  $\risk_t$ is lower semicontinuous in the
  $\tn\cdot\tn_{p,\cF_t}$-norm and continuous from below, which is the
  case if
  $\risk_t$ is convex and a.s. finite, see \cite[Th.~4.1.4]{vog09}.
\end{example}

\begin{example}[Dual construction of conditional acceptance sets]
  \label{dyn-ex-2} 
  Let $p=\infty$, and consider families $\cZ_{t,s}\subseteq
  \Lnot[1_{\cF_t}](\R^d_+,\cF_s)$ with $0\leq t\le s\le T$ such that
  $\cZ_{t,u}\subseteq \cZ_{t,s}$ for all $t\le u\le s$, and 
  $\E^g(\zeta_s|\cF_t)=(1,\dots,1)$ for all $\zeta_s\in \cZ_{t,s}$.
  Note that we do not assume that $\cZ_{t,s}$ is weakly compact. 
  Define 
  \begin{displaymath}
    \Acc_{t,s}=\Lnot(\R^d_+,\cF_s)+\bigcap_{\zeta_s\in \cZ_{t,s}}
    \big\{\eta_s\in\Ltsp:
    \;\E^g(\langle \zeta_s,\eta_s\rangle|\cF_t)\ge 0\big\},
  \end{displaymath}
  where $\langle \zeta_s,\eta_s\rangle$ is the scalar product. 
  It is easily seen that conditions (i), (ii), (iv) and (v) of
  Definition~\ref{DefiRiskmeasure} hold and these acceptance sets are
  coherent.

  If $\xi_s^n\to\xi_s$ in probability for $\xi_s^n\in\Acc_{t,s}$, and
  all $\xi_s^n$ are bounded in the norm by
  $\gamma_t\in\Lnot(\R_+,\cF_t)$, then
  $\E^g(\langle \zeta_s,\xi_s\rangle|\cF_t)\geq0$ by the dominated
  convergence theorem for generalised conditional expectations. Thus,
  condition (iii) also hold.

  If $\xi_s\in\Ltsp$, then the components of $\xi_s$ are bounded in
  the absolute value by $\eta_t\in\Lnot(\R_+^d,\cF_t)$. Then,
  $\eta_t-\xi_s$ is non-negative and so belongs to $\Acc_{t,s}$, and
  $\xi_s+(\eta_t-\xi_s)\in \Lnot(\R^d,\cF_t)$.
  Thus, (vi) also holds.   
\end{example}

\subsection{Dynamic selection risk measures}
\label{sec:dynam-select-risk}

Let $\Xi_T$ be an \emph{upper} subset of $\Lnot(\R^d,\cF_T)$, that is,
with each $\xi\in\Xi_T$, the family $\Xi_T$ also contains all
$\xi'\in \Lnot(\xi+\R_+^d,\cF_T)$. The most important example of
such family is the family of selections $\Lnot(\bX_T,\cF_T)$ for an
$\cF_T$-measurable \footnote{The (graph) measurability of a random set is defined in the appendix.} upper random set $\bX_T$ in $\R^d$, that is,
$\bX_T+\R_+^d\subset\bX_T$ a.s. If $\bX_T$ is also closed, then its
centrally symmetric version $(-\bX_T)$ is a \emph{set-valued
  portfolio} in the terminology of \cite{cas:mol14}.

\begin{definition}
  \label{def:sel-risk}
  Let $\Xi_T\subset\Lnot(\R^d,\cF_T)$ be an upper set.
  For $t\leq s\leq T$,
  \begin{equation}
    \label{eq:union-risk}
    \rhonot_{t,s}(\Xi_T)=(\Xi_T+\Acc_{t,s})\cap \Lnot(\R^d,\cF_t) 
  \end{equation}
  denotes the set of all $\gamma_t\in \Lnot(\R^d,\cF_t)$, such that
  $\gamma_t-\xi\in\Acc_{t,s}$ for some $\xi\in\Xi_T$. Let
  $\rhos_{t,s}(\Xi_T)$ denote the closure in probability of
  $\rhonot_{t,s}(\Xi_T)$. If $\Xi_T=\Lnot(\bX_T,\cF_T)$ is the family
  of selections of an upper random set $\bX_T$, we write
  $\rhonot_{t,s}(\bX_T)$ and $\rhos_{t,s}(\bX_T)$ instead of
  $\rhonot_{t,s}(\Xi_T)$ and $\rhos_{t,s}(\Xi_T)$. In view of this,
  $\rhos_{t,s}(\bX_T)$ (and also $\rhos_{t,s}(\Xi_T)$) is called the
  \emph{dynamic selection risk measure}.
\end{definition}

Note that $\rhonot_{T,T}(\Xi_T)=\Xi_T$,
$\rhonot_{t,s}(\Xi_T)=\rhonot_{t,s}(\Xi_T\cap\Lnot(\R^d,\cF_s))$, and
$\rhonot_{t,u}(\Xi_T)\subset\rhonot_{t,s}(\Xi_T)$ for all $0\leq t\leq
u\leq s\leq T$.  If only portfolios from a random set $\bM_t$ are
allowed for compensation at time $t$, as it is the case in
\cite{fein:rud14fs}, it is easy to modify \eqref{eq:union-risk} by
intersecting $(\Xi_T+\Acc_{t,s})$ with
$\Lnot(\bM_t,\cF_t)$. 

The empty selection risk measure corresponds to completely
unacceptable positions.  The compensation property of acceptance sets
guarantees that $\rhonot_{t,s}(\Xi_T)$ is not empty if
$\Xi_T\cap\Lts\neq\emptyset$.  The family $\Xi_T$ is said to be
\emph{acceptable} for the time horizon $s$ if
$0\in\rhonot_{t,s}(\Xi_T)$, equivalently, $-\Xi_T$ contains an element from
$\Acc_{t,s}$.
The dynamic selection risk measure is conditionally convex, that is,
\begin{displaymath}
  \rhonot_{t,s}(\alpha_t\Xi'_T+(1-\alpha_t)\Xi''_T)
  \supseteq \alpha_t\rhonot_{t,s}(\Xi'_T)+(1-\alpha_t)\rhonot_{t,s}(\Xi''_T)
\end{displaymath}
for all $\alpha_t\in\Lnot([0,1],\cF_t)$, and the same holds for
the closures.  The next result follows from Lemma~\ref{ACC-inf-decomp}.

\begin{lemma}
  \label{lemma:xi-idec}
  If $\Xi_T$ is infinitely $\cF_t$-decomposable, then $\rhonot_{t,s}(\Xi_T)$ and
  $\rhos_{t,s}(\Xi_T)$ are also infinitely $\cF_t$-decomposable. 
\end{lemma}

\begin{lemma}
  \label{lemma:det-set}
  Let $\bX_T$ be an $\cF_T$-measurable random upper closed set.
  \begin{enumerate}[(i)]
  \item $\rhos_{t,s}(\bX_T)$ is the family of selections of an
    $\cF_t$-measurable random upper closed set in $\R^d$, which is
    denoted by $\rhos_{t,s}(\bX_T)$.
  \item If $\bX_T$ is a.s. convex, then $\rhonot_{t,s}(\bX_T)$ is
    a.s. convex. If $\bX_T$ is a cone and the acceptance sets are
    coherent, then $\rhonot_{t,s}(\bX_T)$ is a cone.
  \end{enumerate}
\end{lemma}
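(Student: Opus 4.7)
The plan is to prove the three parts in order, leaning on Theorem~\ref{ReprDecompL0} for part (i), on a decomposition of elements of $\rhonot_{t,s}(\bX)$ as $\xi_T+\eta_s$ for parts (ii) and (iii), and on Lemma~\ref{H-meas-represent-m(X|H)exists} for part (iii).

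For part (i), I would verify that $\rhos_{t,s}(\bX)$ is closed in $\Lnot$ and infinitely $\cF_t$-decomposable, and then appeal to Theorem~\ref{ReprDecompL0}. Closedness holds by the very definition $\rhos_{t,s}=\cl_0\rhonot_{t,s}$. For $\cF_t$-decomposability, $\Lnot(\bX,\cF_T)$ is (infinitely) $\cF_T$-decomposable, hence $\cF_t$-decomposable, while $\Acc_{t,s}$ is infinitely $\cF_t$-decomposable by Lemma~\ref{ACC-inf-decomp}; hence $\Lnot(\bX,\cF_T)+\Acc_{t,s}$ is $\cF_t$-decomposable and this is preserved under intersection with $\Lnot(\R^d,\cF_t)$ and under $\Lnot$-closure. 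Theorem~\ref{ReprDecompL0} then yields an $\cF_t$-measurable random closed set $\bY$ with $\rhos_{t,s}(\bX)=\Lnot(\bY,\cF_t)$. To verify that $\bY$ is upper, I would note that if $\gamma\in\rhonot_{t,s}(\bX)$ with $\gamma-\xi\in\Acc_{t,s}$ for $\xi\in\Lnot(\bX,\cF_T)$, and $v\in\Lnot(\R_+^d,\cF_t)\subset\Lnot(\R_+^d,\cF_T)$, then $\xi+v\in\Lnot(\bX,\cF_T)$ because $\bX$ is upper, while $(\gamma+v)-(\xi+v)=\gamma-\xi\in\Acc_{t,s}$. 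Hence $\gamma+v\in\rhonot_{t,s}(\bX)$, and this upper property is preserved by the $\Lnot$-closure and translates to a.s.~upperness of $\bY$ via $\cF_t$-measurable selection.

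For part (ii), every $\gamma\in\rhonot_{t,s}(\bX)$ can be written as $\gamma=\xi+\eta_s$ with $\xi\in\Lnot(\bX,\cF_T)$ and $\eta_s\in\Acc_{t,s}$. Given $\gamma,\gamma'\in\rhonot_{t,s}(\bX)$ and $\alpha_t\in\Lnot([0,1],\cF_t)$, a.s.~convexity of $\bX$ gives $\alpha_t\xi+(1-\alpha_t)\xi'\in\Lnot(\bX,\cF_T)$, while the conditional convexity axiom gives $\alpha_t\eta_s+(1-\alpha_t)\eta_s'\in\Acc_{t,s}$, so $\alpha_t\gamma+(1-\alpha_t)\gamma'\in\rhonot_{t,s}(\bX)\cap\Lnot(\R^d,\cF_t)$. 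Since any point in $\{\gamma(\omega):\gamma\in\rhonot_{t,s}(\bX)\}$ is the value of such a selection, pointwise convexity of the set of values follows, i.e.\ $\rhonot_{t,s}(\bX)$ is a.s.~convex. When $\bX$ is a cone and the acceptance sets are coherent, the same decomposition combined with $\alpha_t\xi\in\Lnot(\bX,\cF_T)$ and $\alpha_t\eta_s\in\Acc_{t,s}$ for $\alpha_t\in\Lnot([0,\infty),\cF_t)$ yields $\alpha_t\gamma\in\rhonot_{t,s}(\bX)$, so the corresponding set is a cone.

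For part (iii), Lemma~\ref{H-meas-represent-m(X|H)exists} gives $\Lnot(\cm(\bX|\cF_t),\cF_t)=\Lnot(\bX,\cF_t)\subset\Lnot(\bX,\cF_T)$. For any $\gamma\in\Lnot(\cm(\bX|\cF_t),\cF_t)$, take $\xi\eqdef\gamma$ in Definition~\ref{def:sel-risk}: then $\gamma-\xi=0\in\Lnot(\R_+^d,\cF_s)\subset\Acc_{t,s}$ by the normalisation axiom, so $\gamma\in\rhonot_{t,s}(\bX)$, yielding the claimed inclusion.

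The main obstacle is the step in part (ii) that converts conditional convexity of $\rhonot_{t,s}(\bX)$ as a subset of $\Lnot(\R^d,\cF_t)$ into a.s.~convexity of its pointwise values; this is handled by observing that every point of the pointwise value set at $\omega$ is attained by some selection, so that the $\cF_t$-convex combination of two selections realises any convex combination of prescribed pointwise values.
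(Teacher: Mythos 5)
Your proof is correct and follows essentially the same route as the paper: part (i) via Lemma~\ref{ACC-inf-decomp} and Theorem~\ref{ReprDecompL0}, part (ii) via the decomposition $\gamma=\xi+\eta_s$ together with conditional convexity of the acceptance sets (resp.\ coherence), and part (iii) via $0\in\Acc_{t,s}$ and Lemma~\ref{H-meas-represent-m(X|H)exists}. Your additional verifications (the upper property in (i) and the passage from convexity of the selection family to a.s.\ pointwise convexity) are details the paper leaves implicit, and they are handled adequately.
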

\begin{proof}
  (i) By Lemma~\ref{lemma:xi-idec}, $\rhos_{t,s}(\bX_T)$ is an
  $\cF_t$-decomposable family, and so Theorem~\ref{ReprDecompL0}
  applies.

  (ii) If $\gamma_t^1,\gamma_t^2\in\rhonot_{t,s}(\bX_T)$, then
  $\gamma_t^i-\xi_s^i\in\Acc_{t,s}$, $i=1,2$, for
  $\xi_s^1,\xi_s^2\in\Lnot(\bX_T,\cF_s)$. For any $t\in(0,1)$,
  the conditional convexity property yields that
  $t\gamma_t^1+(1-t)\gamma_t^2-\xi_s\in\Acc_{t,s}$ with
  $\xi_s=t\xi_s^1+(1-t)\xi_s^2\in \Lnot(\bX_T,\cF_s)$. The conical
  property is trivial. 
\end{proof}

\begin{example}
  \label{ex:univariate-simple}
  If $\bX_s=\xi_s+\R_+^d$ for $\xi_s\in\Lts$, then
  $\rhonot_{t,s}(\bX_s)=\rhos_{t,s}(\bX_s)=\vecrisk_t(-\xi_s)+\R_+^d$ for a
  vector-valued dynamic risk measure $\vecrisk_t$ on $\Lts$, see \cite{vog09}.
\end{example}

\section{Hedging to acceptability}
\label{sec:hedg-with-coher}

\subsection{Acceptable portfolio process}
\label{sec:accept-portf-proc}

Let $(\bK_t)_{t=0,\dots,T}$ be a sequence of random closed convex
sets, such that, for all $t$, we have $\bK_t\cap\R_-^d=\{0\}$, $\bK_t$ is an
upper set, and $\bK_t$ is $\cF_t$-measurable. The set $\bK_t$ is
understood as the family of all solvent positions at time $t$
expressed in physical units and is called a \emph{solvency set}, see
\cite{kab:saf09}. If the solvency sets are cones, this model is well
studied under the name of Kabanov's model; it describes the market
subject to proportional transaction costs, see
\cite{kab:saf09,schach01}. If the solvency sets are cones and the
acceptance sets are coherent, we talk about the \emph{coherent
  conical} setting.

Let $\bK_t^0$ be the largest $\cF_t$-measurable linear subspace
contained in $\bK_t$, that is,
\begin{displaymath}
  \bK_t^0=\bigcap_{c\ne 0}c \bK_t
  =\bigcap_{c\in \QQ\setminus\{0\}}c \bK_t,
\end{displaymath}
which is also a random closed set.  The solvency sets are said to be
\emph{proper} if $\bK_t^0=\{0\}$ and \emph{strictly proper} if
$\tilde{\bK}_t\eqdef\bK_t\cap(-\bK_t)=\{0\}$ for all $t=0,\dots,T$.
If $\bK_t$ is a cone, then $\tilde{\bK}_t=\bK_t^0$, while in general
$\bK_t^0\subset\tilde{\bK}_t$. Since $\tilde\bK_t$ is convex and
origin symmetric, $\bK_t$ is proper if and only if $\tilde\bK_t$ is
bounded.

\begin{definition}
  \label{Portfolio-process} 
  A sequence $V_t\in\Lnot(\R^d,\cF_t)$, $t=0,\dots,T$,
  is called an \emph{acceptable} portfolio process if
  \begin{equation}
    \label{eq:v-acc}
    V_{t-1}-V_t\in
    \Lnot(\bK_t,\cF_t)+\Acc_{t-1,t},\qquad t=1,\dots,T.
  \end{equation}
\end{definition}

By the definition of the selection risk measure, \eqref{eq:v-acc} is
equivalent to
\begin{equation}
  \label{eq:12}
  V_{t-1} \in \rhonot_{t-1,t}(V_t+\bK_t),\qquad t=1,\dots,T.
\end{equation}
Thus, paying transaction costs, it is possible to transform
$V_{t-1}-V_t$ into an acceptable position for the horizon
$t$. Equivalently, $V_{t-1}$ does suffices to purchase
$V_t+k_t+\eta_t$ for some $k_t\in\bK_t$ and
$\eta_t\in\Acc_{t-1,t}$. 
The \emph{initial endowment} at time $t$ is any $V_{t-}\in
\Lnot(V_t+\bK_{t},\cF_{t})$, so that it is possible to convert
$V_{t-}$ into $V_t$ paying the transaction costs.

\subsection{Attainable positions and superhedging}
\label{sec:atta-posit-superh}

The family of \emph{attainable positions} at time $s>t$ is the family
of random vectors that may be obtained as $V_s$ for acceptable
portfolio processes starting from zero investment at time $t$. By
\eqref{eq:v-acc}, the family of attainable positions is given by
\begin{displaymath}
  \Att_{t,s}
  = \sum_{u=t}^s \Lnot(-\bK_u,\cF_u)-\sum_{u=t}^{s-1} \Acc_{u,u+1}.
\end{displaymath}

Let $\xi\in\Lnot(\R^d,\cF_T)$ be a \emph{terminal claim} (or payoff).
The hedging to acceptability aims to come up
with an acceptable portfolio process $(V_t)_{t=0,\dots,T}$ that
guarantees paying $\xi$ in the sense that the terminal wealth $V_T$
belongs to $\Xi_T^{\xi}\eqdef\Lnot(\bX_T^\xi,\cF_T)$, being the family of
selections of the random closed set $\bX_T^\xi\eqdef\xi+\bK_T$. Define
recursively
\begin{equation}
  \label{RelPortfolio}
  \Xi_{t}^{\xi}\eqdef\Lnot(\bK_{t},\cF_{t})+\rhonot_{t,t+1}(\Xi_{t+1}^{\xi}),
  \qquad t=T-1,\dots,0,
\end{equation}
which is the set of time $t$ superhedging prices for the claim $\xi$.
The family
$\Xi_t^\xi$ consists of the time $t$ superhedging prices for $\xi$ and
so may serve as a dynamic convex risk measure of $\xi$ with values
being subsets of $\Lnot(\R^d,\cF_t)$.
If $\xi=\xi'-\xi''$ for $\xi'\in\Lnot[p](\R^d,\cF_T)$ and
$\xi''\in\Lnot(\R_+^d,\cF_T)$, the compensation property of acceptance
sets ensures that $\Xi_t^\xi$ is not empty for all $t$. 
 
In order to handle the asymptotic version of the risk measure, let
$\hat\Xi_T^{\xi}\eqdef\Xi_T^\xi$, and further
\begin{equation}
  \label{RelPortfolioAsymptotic}
  \hat\Xi_{t}^{\xi}\eqdef\Lnot(\bK_{t},\cF_{t})
  +\rhos_{t,t+1}( \Xi_{t+1}^{\xi}),
  \qquad t=T-1,\dots,0.
\end{equation} 
Note that $\Xi_t^{\xi}\subseteq
\hat\Xi_t^{\xi}\subseteq \cl_0(\Xi_t^{\xi})$, whence
$\cl_0(\hat\Xi_t^{\xi})=\cl_0(\Xi_t^{\xi})$ for all $t$. The families
$\Xi_{t}^{0}$ and $\hat\Xi_T^{0}$ arise by letting $\xi=0$ a.s. 

\begin{lemma}
  \label{lemma:st-price}
  \begin{enumerate}[(i)]
  \item The families $\rhonot_{t,s}(\Xi_s^{\xi})$,
    $\rhos_{t,s}(\Xi_s^{\xi})$, and $\hat\Xi_t^{\xi}$ are convex and infinitely
    $\cF_t$-decomposable for all $0\leq t\le s\leq T$.
  \item For each $t\le T$, there exists a (possibly non-closed) random
    set $\bX_t^{\xi}$ such that
    $\hat\Xi_t^{\xi}=\Lnot(\bX_t^{\xi},\cF_t)$.
  \item For any $t\leq T$, the family of all initial endowments
    $V_{t-}$ at time $t$, allowing to start an acceptable portfolio
    process $(V_s)_{t\le s\leq T}$ such that
    $V_T\in\Lnot(\xi+\bK_T,\cF_T)$ a.s., coincides with $\Xi_t^\xi$ and
    \begin{equation}
      \label{eq:4}
      \Xi_t^\xi=(-\Att_{t,T}+\xi)\cap \Lnot(\R^d,\cF_t).
    \end{equation}
  \item If $\bK_T$ is a cone, then $\Xi_t^\xi\subset\Xi_t^0$ for any
    $\xi\in\Lnot(\bK_T,\cF_T)$.
  \end{enumerate}
\end{lemma}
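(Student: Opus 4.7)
\emph{Plan.} All four statements can be proved by backward induction on $t$, using the recursion \eqref{RelPortfolio} together with the properties of $\rhonot_{t,s}$ and $\rhos_{t,s}$ developed in Section~\ref{sec:select-risk-meas}. The base case at $t=T$ reads $\Xi_T^\xi=\Lnot(\xi+\bK_T,\cF_T)$, which is an upper family of selections of the a.s.\ convex random closed set $\xi+\bK_T$.

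Parts~(i) and~(ii) follow directly. For (i), convexity propagates because $\rhonot_{s,s+1}(\Xi_{s+1}^\xi)=(\Xi_{s+1}^\xi+\Acc_{s,s+1})\cap\Lnot(\R^d,\cF_s)$ is convex whenever $\Xi_{s+1}^\xi$ is, by the conditional convexity of $\Acc_{s,s+1}$ (cf.~Lemma~\ref{lemma:det-set}(ii)); the full $\Xi_s^\xi$ then appears as a sum of two convex sets, and the same argument handles $\rhonot_{t,s}(\Xi_s^\xi)$ and its $\Lnot$-closure $\rhos_{t,s}(\Xi_s^\xi)$. Infinite $\cF_t$-decomposability is propagated identically, using Lemma~\ref{ACC-inf-decomp} for $\Acc_{t,s}$ and the obvious infinite $\cF_u$-decomposability of $\Lnot(\bK_u,\cF_u)$ (which is in particular $\cF_t$-decomposability since $t\leq u$); sums and intersections with $\Lnot(\R^d,\cF_t)$ preserve this property. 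For (ii), Lemma~\ref{lemma:det-set}(i) supplies an $\cF_t$-measurable closed upper random set $\bY_t$ with $\rhos_{t,t+1}(\Xi_{t+1}^\xi)=\Lnot(\bY_t,\cF_t)$, and Lemma~\ref{A+B-measurable} then yields $\hat\Xi_t^\xi=\Lnot(\bX_t^\xi,\cF_t)$ with $\bX_t^\xi\eqdef\bK_t+\bY_t$ measurable but not necessarily closed.

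For (iii), the equivalence between $\Xi_t^\xi$ and the family of admissible initial endowments is proved by induction. If $V_{t-}\in\Xi_t^\xi$, writing $V_{t-}=k_t+V_t$ with $k_t\in\Lnot(\bK_t,\cF_t)$ and $V_t\in\rhonot_{t,t+1}(\Xi_{t+1}^\xi)$ decomposes $V_t=V_{(t+1)-}+a_{t+1}$ with $V_{(t+1)-}\in\Xi_{t+1}^\xi$ and $a_{t+1}\in\Acc_{t,t+1}$; the induction hypothesis supplies the rest of the process, and the transition $V_t-V_{t+1}=a_{t+1}+(V_{(t+1)-}-V_{t+1})\in\Acc_{t,t+1}+\Lnot(\bK_{t+1},\cF_{t+1})$ matches Definition~\ref{Portfolio-process}. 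Conversely, from an admissible process with $V_t-V_{t+1}=k_{t+1}+a_{t+1}$ one sets $V_{(t+1)-}\eqdef V_{t+1}+k_{t+1}$, which is itself an admissible endowment at time $t+1$ (transact via $k_{t+1}$ to recover $V_{t+1}$ and continue with the original portfolio); by induction $V_{(t+1)-}\in\Xi_{t+1}^\xi$, so $V_t=V_{(t+1)-}+a_{t+1}\in\rhonot_{t,t+1}(\Xi_{t+1}^\xi)$ and $V_{t-}\in\Xi_t^\xi$. The identity $\Xi_t^\xi=(-\Att_{t,T}+\xi)\cap\Lnot(\R^d,\cF_t)$ follows by telescoping: unfolding the recursion attaches one copy of $\Lnot(\bK_u,\cF_u)$ for each $u\in\{t,\dots,T\}$ and one copy of $\Acc_{u,u+1}$ for each $u\in\{t,\dots,T-1\}$, reproducing the summands of $-\Att_{t,T}$. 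Part~(iv) is then a short monotonicity induction: when $\bK_T$ is a cone, $\xi+\bK_T\subseteq\bK_T+\bK_T=\bK_T$ gives $\Xi_T^\xi\subseteq\Xi_T^0$, and the monotonicity of $\rhonot_{t,t+1}$ in its argument (immediate from $\rhonot_{t,t+1}(\Xi)=(\Xi+\Acc_{t,t+1})\cap\Lnot(\R^d,\cF_t)$) propagates the inclusion backward through the recursion.

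The main delicacy is the measurability bookkeeping in~(iii): the intersection with $\Lnot(\R^d,\cF_t)$ built into $\rhonot_{t,t+1}$ is what enforces adaptedness of the intermediate portfolio values, and the absorption construction $V_{(t+1)-}=V_{t+1}+k_{t+1}$ in the reverse direction is what makes the recursion \eqref{RelPortfolio} coincide with the portfolio characterisation at every step without requiring any conicality on the $\bK_s$ at intermediate times.
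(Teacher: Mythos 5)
Your proposal is correct and follows essentially the same route as the paper: parts (i) and (ii) propagate convexity, infinite $\cF_t$-decomposability and measurability through the recursion \eqref{RelPortfolio}--\eqref{RelPortfolioAsymptotic} via Lemma~\ref{ACC-inf-decomp}, Theorem~\ref{ReprDecompL0} and Lemma~\ref{A+B-measurable}, exactly as the paper does, and part (iv) rests on the same conical absorption $\xi+\bK_T\subseteq\bK_T+\bK_T=\bK_T$. For (iii) you replace the paper's one-line equivalence $(\gamma+\Att_{t,T})\cap(\xi+\bK_T)\neq\emptyset$ iff $\gamma\in\xi-\Att_{t,T}$ by an explicit two-directional induction on the portfolio process, but this is precisely the unfolding of the recursion that the paper's identity encodes, so the arguments coincide in substance.
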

\begin{proof}  
  (i) follows from Lemma~\ref{lemma:xi-idec}.

  (ii) The existence of $\bX_t^\xi$ is trivial for $t=T$. Suppose that it
  holds at time $t$. The result for $t-1$ follows from the induction
  assumption and \eqref{RelPortfolioAsymptotic} by
  Lemma~\ref{A+B-measurable}.

  (iii) follows from the fact that $(\gamma_T+\Att_{t,T})\cap
  (\xi+\bK_T)\neq\emptyset$ if and only if $\gamma_T\in(-\Att_{t,T}+\xi)$.
  
  (iv) follows from (iii), since $\xi+\bK_T\subset
  \bK_T$ a.s. 
\end{proof}

\begin{example}
  \label{ex:univariate}
  If $\bK_t=\R_+^d$ a.s. for all $t$ (which is always the case if
  $d=1$), then an acceptable portfolio process satisfies
  $V_{t-1}-V_t\in\Acc_{t-1,t}$ for all $t=1,\dots,T$. Then,
  \begin{displaymath}
    \Xi_t^\xi=(\xi+\sum_{s=t}^{T-1}\Acc_{s,s+1})\cap\Lnot(\R^d,\cF_t).
  \end{displaymath}
  Since
  $\Acc_{t-1,t}\cap\Lnot(\R^d,\cF_{t-1})=\Lnot(\R_+^d,\cF_{t-1})$ for
  all $t\ge 1$ by the weak time consistency and normalisation
  properties, 
  the induction argument yields that
  $\Xi_{t}^0=\Lnot(\R_+^d,\cF_t)$.  If $\xi$ does not a.s.\ vanish,
  the set $\rhonot_{t,s}(\Xi^\xi_T)$ becomes non-trivial. Its static
  variant is called a regulator risk measure in \cite{ham:rud:yan13};
  it only takes into account the acceptability requirement and
  disregards any trading opportunities between the components. In the
  terminology of \cite{ham:rud:yan13}, $\rhos_{0,1}(\xi+\bK_1)$ (in
  the static setting with a conical $\bK_1$) is called the market
  extension of the regulator risk measure.
\end{example}

\section{Risk arbitrage}
\label{sec:risk-arbitrage}

Recall that $\Xi_t^0$ is the set of time $t$ super-hedging prices for
the zero claim. By \eqref{eq:4},
\begin{equation}
  \label{eq:1}
  \Xi_t^0=(-\Att_{t,T})\cap \Lnot(\R^d,\cF_t).
\end{equation}
For multivariate financial models, e.g. models with proportional
transaction costs, several no arbitrage conditions have been
considered. In Kabanov's model, there is the NA condition, its robust
version NA$^r$, but also the NA2 condition derived using an
alternative approach, see \cite{ras08a} and \cite{kab:saf09}. All
these conditions are formulated in terms of the set of all terminal
claims $\Att_{t,T}$ attainable from the zero initial endowment. Here,
we consider weaker no arbitrage conditions imposed on the
super-hedging prices for the zero claim.

\begin{definition} 
  \label{def:nra-multiperiod}
  The multiperiod model satisfies
  \begin{description}
  \item[\SNR] (strict no risk arbitrage) if  
    $\hat\Xi_t^0\cap\Lnot(-\bK_t,\cF_t)\subset\Lnot(\bK_t^0,\cF_t)$   
    for all $t=0,\dots,T$;
  \item[\NRA] (no risk arbitrage) if $\Xi_t^0\cap 
   \Lnot(\R_-^d,\cF_t)=\{0\}$ for all $t=0,\dots,T$;
  \item[\NARA] (no asymptotic risk arbitrage) if $(\cl_0 \Xi_t^0)\cap 
    \Lnot(\R_-^d,\cF_t)=\{0\}$ for all $t=0,\dots,T$;
  \item[\NRAT] (no risk arbitrage opportunity of the second kind) if,
    for any $t=0,\dots,T$ and $\eta_t\in\Lnot(\R^d,\cF_t)$ such that
    $(\eta_t+\Att_{t,T})\cap\Lnot(\bK_T,\cF_T)\neq\emptyset$, we have
    $\eta_t\in\Lnot(\bK_t,\cF_t)+\Acc_{t-1,t}$;
  \item[\SNAR] (strong no risk arbitrage) if $\sum_{t=0}^T
    (k_t+\eta_t)=0$ for $k_t\in\Lnot(\bK_t,\cF_t)$ and
    $\eta_t\in\Acc_{t-1,t}$ for all $t$ implies
    that 
    $k_t\in\Lnot(\bK_t^0,\cF_t)$ and $\eta_t=0$ a.s. for all $t$.
  \end{description}
\end{definition}

Let us comment on the \SNR\ condition.  If $p_t\in
\hat\Xi_t^0\cap\Lnot(-\bK_t,\cF_t)$, then, starting from the zero
initial endowment at time $t$ expressed as $0=p_t-p_t$, we obtain the
zero claim at time $T$ from the price $p_t$ and allowing an immediate
possible profit at time $t$, since the liquidation value of $-p_t\in
K_t$ is non-negative. A similar interpretation applies for the \NRA\
condition and its asymptotic version \NARA. The \NRAT\ condition may
be compared to the \NAT\ condition of \cite{ras08a}, while \SNAR\ is a
version of \cite[Condition (iii), Section 3.2.2]{kab:saf09}. Note that
\NRA\ condition is equivalent to $\Att_{t,T}\cap
\Lnot(\R^d_+,\cF_t)=\{0\}$, while the usual (NA) condition
$\Att_{t,T}\cap \Lnot(\R^d_+,\cF_T)=\{0\}$ is stronger, see
\cite[Section 3.2.1]{kab:saf09}. 

Example~\ref{ex:risk-arbitrage} shows that it may be possible to
release infinite capital from zero position without compromising the
acceptability criterion, in particular, it violates the \SNR\
condition.  
By Lemma~\ref{lemma:st-price}, \SNR\ can be written as
$\bX_t^0\cap(-\bK_t)\subset \bK_t^0$ a.s., and \NARA\ 
as $(\cl\bX_t^0)\cap\R_-^d=\{0\}$ a.s. 
It is obvious that \NARA\ is stronger than \NRA.  By \eqref{eq:1},
\NRA\ condition is equivalent to $\Att_{t,T} \cap
\Lnot(\R_+^d,\cF_t)=\{0\}$ for all $t$. 
 If $\bK_t=\R_+^d$
a.s. for all $t$, then $\Xi_t^0=\Lnot(\R_+^d,\cF_t)$ and 
all no arbitrage conditions are
satisfied, see Example~\ref{ex:univariate}.

\begin{lemma}
  \label{lemma:snr-xi}
  \SNR\ implies that 
  \begin{displaymath}
    \rhos_{t,t+1}(\Xi_{t+1}^0)\cap \Lnot(-\bK_t,\cF_t)
    \subset\Lnot(\bK_t^0,\cF_t),\quad t=0,\dots,T-1.
  \end{displaymath}
  The reverse implication holds if the
  solvency sets are strictly proper.
\end{lemma}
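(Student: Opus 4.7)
My plan is to get the forward implication for free from the containment $\rhos_{t,t+1}(\Xi_{t+1}^0)\subseteq\hat\Xi_t^0$, and to establish the reverse implication at each $t$ by using strict properness to reduce $\bK_t^0$ to $\{0\}$ together with a half-step rescaling enabled by the convexity of $\rhos_{t,t+1}(\Xi_{t+1}^0)$.

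For the forward direction, any $\eta_t\in\rhos_{t,t+1}(\Xi_{t+1}^0)$ can be written as $0+\eta_t$ with the zero selection of $\bK_t$, so $\rhos_{t,t+1}(\Xi_{t+1}^0)\subseteq\hat\Xi_t^0$ by \eqref{RelPortfolioAsymptotic}. Intersecting with $\Lnot(-\bK_t,\cF_t)$ and invoking \SNR\ yields the claimed inclusion into $\Lnot(\bK_t^0,\cF_t)$.

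For the reverse direction, strict properness gives $\bK_t^0\subseteq\tilde\bK_t=\{0\}$ for every $t$. The case $t=T$ is immediate, since $\hat\Xi_T^0=\Lnot(\bK_T,\cF_T)$ and so $\hat\Xi_T^0\cap\Lnot(-\bK_T,\cF_T)=\Lnot(\tilde\bK_T,\cF_T)=\{0\}$. For $t\le T-1$, fix $\zeta_t\in\hat\Xi_t^0\cap\Lnot(-\bK_t,\cF_t)$ and decompose $\zeta_t=k_t+\eta_t$ with $k_t\in\Lnot(\bK_t,\cF_t)$ and $\eta_t\in\rhos_{t,t+1}(\Xi_{t+1}^0)$. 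Both $k_t$ and $-\zeta_t$ are then selections of $\bK_t$, so the a.s. convexity of $\bK_t$ gives $-\tfrac12\eta_t=\tfrac12\bigl(k_t+(-\zeta_t)\bigr)\in\Lnot(\bK_t,\cF_t)$, whence $\tfrac12\eta_t\in\Lnot(-\bK_t,\cF_t)$.

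To close the argument one has to verify that $\tfrac12\eta_t$ also lies in $\rhos_{t,t+1}(\Xi_{t+1}^0)$, for then the assumed inclusion forces $\tfrac12\eta_t\in\Lnot(\bK_t^0,\cF_t)=\{0\}$, hence $\eta_t=0$ and $\zeta_t=k_t\in\Lnot(\bK_t\cap(-\bK_t),\cF_t)=\{0\}$ by strict properness. Since $0\in\Lnot(\bK_{t+1},\cF_{t+1})\subseteq\Xi_{t+1}^0$ by the recursion \eqref{RelPortfolio} and $0\in\Acc_{t,t+1}$ by the normalisation property, one has $0\in\rhos_{t,t+1}(\Xi_{t+1}^0)$; the convexity of $\rhos_{t,t+1}(\Xi_{t+1}^0)$ from Lemma~\ref{lemma:st-price}(i) then yields $\tfrac12\eta_t=\tfrac12\eta_t+\tfrac12\cdot 0\in\rhos_{t,t+1}(\Xi_{t+1}^0)$. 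I expect this rescaling step to be the delicate point: without coherence one cannot rescale $\eta_t$ directly inside $\rhos_{t,t+1}$, and it is the combination of the convexity of $\rhos_{t,t+1}(\Xi_{t+1}^0)$ with the fact that $0$ already belongs to it that substitutes for the cone structure of the classical coherent Kabanov-type setting.
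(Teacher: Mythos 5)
Your argument is correct and is essentially the paper's own proof: the forward direction follows from $0\in\Lnot(\bK_t,\cF_t)$ giving $\rhos_{t,t+1}(\Xi_{t+1}^0)\subseteq\hat\Xi_t^0$, and the reverse direction is exactly the paper's halving step $x=m+a_1=-a_2\Rightarrow m/2\in\rhos_{t,t+1}(\Xi_{t+1}^0)\cap\Lnot(-\bK_t,\cF_t)$, which you justify (slightly more explicitly than the paper) via convexity of $\rhos_{t,t+1}(\Xi_{t+1}^0)$ together with $0$ belonging to it, and convexity of $\bK_t$. Your explicit treatment of the $t=T$ case and the identification of the non-conical rescaling as the delicate point are accurate refinements of the same argument.
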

\begin{proof}  
  Denote $M=\rhos_{t,t+1}(\Xi_{t+1}^0)$, $A=\Lnot(\bK_t,\cF_t)$, and
  $B=\Lnot(\bK_t^0,\cF_t)$. It is easily seen that $M\cap
  (-A)\subset B$ if $(M+A)\cap(-A)\subset B$ and only if in case
  $A\cap(-A)=\{0\}$. For the reverse implication, if $x\in
  (M+A)\cap(-A)$, then $x=m+a_1=-a_2$, where $m\in M$ and $a_1,a_2\in
  A$. Therefore, $m/2\in M\cap (-A)\subset B$. Then $x/2\in
  A \cap (-A)$, so that $x\in B=A \cap (-A)=\{0\}$ if $\bK$ is
  strictly proper.
\end{proof}

\begin{lemma}
  \label{lemma:sna-nas-risk}
  Assume that the acceptance sets are strictly proper, that is,
  $\Acc_{t,s}\cap(-\Acc_{t,s})$ consists of all random vectors that
  equal $0$ almost surely.
  \begin{enumerate}[(i)]
  \item If $\bK_t^0=\tilde{\bK}_t$ for all $t$, \SNAR\ implies
  \begin{align}
    \label{eq:9}
    \Att_{0,t}\cap(\Lnot(\bK_t,\cF_t)+\Acc_{t-1,t})
    &\subset\Lnot(\bK_t^0,\cF_t),
    \quad t=0,\dots,T,\\
    \label{eq:10}
    \Att_{t,T}\cap(\Lnot(\bK_t,\cF_t)+\Acc_{t-1,t})
    &\subset\Lnot(\bK_t^0,\cF_t),
    \quad t=0,\dots,T.
    \end{align}
  \item If the solvency sets are strictly proper and 
    \begin{equation}
      \label{eq:8}
      \Lnot(-\bK_t,\cF_t)\cap\Acc_{t-1,t}=\{0\},\quad t=0,\dots,T,
    \end{equation}
    then either of the conditions \eqref{eq:9}, \eqref{eq:10} implies
    \SNAR.
  \end{enumerate}
  \end{lemma}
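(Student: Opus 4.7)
The plan is to treat (i) by reducing the defining equation to a form to which \SNAR\ applies, and (ii) by induction on the time index---backward for \eqref{eq:9}, forward for \eqref{eq:10}.

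For (i), fix $x\in\Att_{0,t}\cap(\Lnot(\bK_t,\cF_t)+\Acc_{t-1,t})$ and write it two ways, $x=-\sum_{u=0}^tk_u-\sum_{v=1}^t\eta_v=k'_t+\eta'_t$. Adding these produces
\begin{equation*}
  k_0+\sum_{u=1}^{t-1}(k_u+\eta_u)+(k_t+k'_t)+(\eta_t+\eta'_t)=0,
\end{equation*}
where the ``doubled'' contributions $k_t+k'_t$ and $\eta_t+\eta'_t$ prevent a direct use of \SNAR. The key trick is to divide through by $2$: convexity of $\bK_s$ (containing $0$) and conditional convexity of $\Acc_{s-1,s}$ let one assemble legitimate \SNAR\ inputs $\bar k_s,\bar\eta_s$ by taking $\bar k_s=k_s/2$, $\bar\eta_s=\eta_s/2$ for $s\ne t$ and $\bar k_t=(k_t+k'_t)/2$, $\bar\eta_t=(\eta_t+\eta'_t)/2$, whose sum still vanishes. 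Applying \SNAR\ forces $\bar\eta_s=0$ and $\bar k_s\in\Lnot(\bK_s^0,\cF_s)$ for all $s$; strict properness of the acceptance sets then turns $\eta_t+\eta'_t=0$ into $\eta_t=\eta'_t=0$, so $x=k'_t$. Finally, $k_t+k'_t\in\bK_t^0=\tilde\bK_t$ means $-(k_t+k'_t)\in\bK_t$, so the convex combination $\tfrac12 k_t+\tfrac12(-(k_t+k'_t))=-k'_t/2$ lies in $\bK_t$; together with $k'_t/2\in\bK_t$ (convexity with $0$), this yields $k'_t/2\in\bK_t\cap(-\bK_t)=\bK_t^0$, and the linear-subspace property of $\bK_t^0$ gives $k'_t\in\bK_t^0$. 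The proof of \eqref{eq:10} is analogous, with sums running over $[t,T]$.

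For (ii), set $w_t\eqdef k_t+\eta_t$. Under \eqref{eq:10}, argue by forward induction on $t$: once $k_u=\eta_u=0$ has been established for all $u<t$, the remaining tail satisfies $-w_t=\sum_{u>t}k_u+\sum_{v>t}\eta_v$, so $w_t\in\Att_{t,T}$ after padding with a zero $u=t$ term, while $w_t\in\Lnot(\bK_t,\cF_t)+\Acc_{t-1,t}$ is immediate. Then \eqref{eq:10} together with strict properness of the solvency sets (which forces $\bK_t^0=\{0\}$) gives $w_t=0$, whence $\eta_t=-k_t\in\Acc_{t-1,t}\cap\Lnot(-\bK_t,\cF_t)=\{0\}$ by \eqref{eq:8}. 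Under \eqref{eq:9}, backward induction runs the same way, with the one subtlety that $-w_t=\sum_{u<t}k_u+\sum_{v<t}\eta_v$ must be placed into $\Att_{0,t}$ by absorbing $\eta_0\in\Lnot(\R_+^d,\cF_0)$ into the $k_0$ slot, which is legitimate because $\bK_0$ is an upper set containing $\R_+^d$.

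The main obstacle is the step $(k_t+k'_t)/2\in\bK_t^0\Rightarrow k'_t\in\bK_t^0$ in part (i): this would be immediate if $\bK_t$ were a cone, but in the non-conical setting $\bK_t+\bK_t\ne\bK_t$, so one must use the hypothesis $\bK_t^0=\tilde\bK_t$ to extract $-(k_t+k'_t)\in\bK_t$ and then exploit that $\bK_t^0$ is a linear subspace closed under multiplication by $2$.
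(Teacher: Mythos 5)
Your proof is correct and follows essentially the same route as the paper's: part (i) uses the identical ``add the two representations and divide by $2$'' trick to produce a legitimate \SNAR\ input, then strict properness of the acceptance sets and the convexity/linear-subspace argument to place $k'_t$ in $\bK_t^0$; part (ii) is the same induction (backward for \eqref{eq:9}, forward for \eqref{eq:10}) using strict properness of the solvency sets and \eqref{eq:8}. Your extra care about absorbing $\eta_0\in\Lnot(\R_+^d,\cF_0)$ into the $\bK_0$ slot and about extracting $k'_t\in\bK_t^0$ in the non-conical case just makes explicit what the paper leaves terse.
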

\begin{proof}
  (i) Motivated by \cite[Lemma~3.2.7]{kab:saf09}, assume that
  \begin{displaymath}
    -k_0-\cdots-k_t-\eta_{0}-\cdots-\eta_{t}\eqdef g_t+\zeta_t\in  \Att_{0,t}\cap
    (\Lnot(\bK_t,\cF_t)+\Acc_{t-1,t}) 
  \end{displaymath}
  with $k_s\in \Lnot(\bK_s,\cF_s)$, $\eta_s\in\Acc_{s-1,s}$,
  $s=0,\dots,t$, $g_t\in \Lnot(\bK_t,\cF_t)$, and $\zeta_t\in
  \Acc_{t-1,t}$. Since $(\eta_t+\zeta_t)/2\in\Acc_{t-1,t}$ by
  convexity and 
  \begin{displaymath}
    -k_0/2-\cdots-k_{t-1}/2-(k_t+g_t)/2-\eta_{0}/2-\cdots-(\eta_t+\zeta_t)/2=0,
  \end{displaymath}
  we have $(k_t+g_t)/2\in \bK_t^0$ and $(\eta_t+\zeta_t)/2=0$ by
  \SNAR. The strict properness of the acceptance sets yields that
  $\eta_t=\zeta_t=0$. Furthermore, $\frac{1}{2}g_t\in
  -\frac{1}{2}k_t+\frac{1}{2}\bK^0_t\subseteq -\bK_t$, so that
  $g_t\in \bK_t^0$, i.e. \eqref{eq:9} holds.

  Property \eqref{eq:10} is similarly derived from \SNAR. 

  (ii) In order to show that \eqref{eq:9} implies \SNAR, proceed by
  induction as in \cite[Lemma~3.2.13]{kab:saf09}. Let
  $-k_0-\cdots-k_T-\eta_0-\cdots-\eta_T=0$. Then
  \begin{displaymath}
    k_T+\eta_T=\sum_{s=0}^{T-1}(-k_s-\eta_s)\in\Att_{0,T-1}\subset\Att_{0,T}.
  \end{displaymath}
  By \eqref{eq:9}, $k_T+\eta_T\in\Lnot(\bK_T^0,\cF_T)$. Since
  $k_T+\eta_T$ is $\cF_{T-1}$-measurable and the solvency sets are
  strictly proper, $k_T+\eta_T\in \Lnot(\bK_{T-1}^0,\cF_{T-1})$. Therefore,
  $k_T+\eta_T$ can be merged with $k_{T-1}$, and then the induction
  proceeds with $T-1$ instead of $T$. 

  To show that \eqref{eq:10} implies \SNAR, proceed by induction
  starting from time zero. Since
  \begin{displaymath}
    k_0+\eta_0=\sum_{s=1}^T (-k_s-\eta_s)\in\Att_{1,T}\subset\Att_{0,T},
  \end{displaymath}
  \eqref{eq:10} yields that $k_0+\eta_0=0$, and \eqref{eq:8} implies
  $k_0=\eta_0=0$.
\end{proof}

Condition \eqref{eq:8} can be viewed as a consistency 
between the acceptance sets and solvency sets, namely, that $-\bK_t$
does not contain any acceptable non-trivial selection.

The first part of the following result shows that \NRA\ is similar to
the weak no arbitrage property NA$^\mathrm{w}$ of Kabanov's model, see
\cite[Sec.~3.2.1]{kab:saf09}. Denote by $\Int A$ the interior and by
$\partial A$ the boundary of $A\subset\R^d$.

\begin{proposition}
  \label{prop-NRA}
  Suppose that $\R^d_+\setminus \{0\}\subseteq\Int\bK_t$ a.s. for all
  $t$. Then \NRA\ is equivalent to each of the following two conditions.
  \begin{enumerate}[(i)]
  \item $\rhonot_{t,t+1}(\Xi_{t+1}^0)\cap
    \Lnot(-\bK_t,\cF_t) \subseteq \Lnot(-\partial \bK_t,\cF_t)$ for
    all $t$.
  \item $\hat \Xi_t^0\cap \Lnot(\R_-^d,\cF_t)=\{0\}$ for all $t$.
  \end{enumerate}
\end{proposition}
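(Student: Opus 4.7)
The plan is to establish the cycle (ii) $\Rightarrow$ \NRA{} $\Rightarrow$ (i) $\Rightarrow$ (ii). The first implication is immediate, since $\Xi_t^0 \subseteq \hat\Xi_t^0$ yields $\Xi_t^0\cap\Lnot(\R_-^d,\cF_t)\subseteq \hat\Xi_t^0\cap\Lnot(\R_-^d,\cF_t)=\{0\}$.

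For \NRA{} $\Rightarrow$ (i), I would argue by contradiction. Suppose $\gamma\in\rhonot_{t,t+1}(\Xi_{t+1}^0)\cap\Lnot(-\bK_t,\cF_t)$ and $-\gamma\in\Int\bK_t$ on some $A\in\cF_t$ with $\P(A)>0$. A measurable selection inside the $\cF_t$-measurable random open set $\Int\bK_t$ produces an $\cF_t$-measurable $\eps>0$ on $A$ (zero elsewhere) such that $k\eqdef -\gamma-\eps\one\one_A\in\Lnot(\bK_t,\cF_t)$. Writing $\gamma=\xi_{t+1}+\eta$ with $\xi_{t+1}\in\Xi_{t+1}^0$ and $\eta\in\Acc_{t,t+1}$, so that $-\xi_{t+1}\in\Att_{t+1,T}$, the identity $\eps\one\one_A=-\gamma-k=-k-\xi_{t+1}-\eta$ exhibits $\eps\one\one_A$ as an element of $\Att_{t,T}\cap\Lnot(\R_+^d,\cF_t)$ that is nonzero on $A$, contradicting \NRA{}.

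For (i) $\Rightarrow$ (ii), the first task is to promote (i) to its $\Lnot$-closure, namely
\[
  \rhos_{t,t+1}(\Xi_{t+1}^0)\cap\Lnot(-\bK_t,\cF_t)\subseteq\Lnot(-\partial\bK_t,\cF_t).
\]
Given $\rho\in\rhos$ with $-\rho\in\Int\bK_t$ on some $A\in\cF_t$, $\P(A)>0$, pick $\rho_n\in\rhonot$ with $\rho_n\to\rho$ a.s.\ along a subsequence. The $\cF_t$-measurable sets $B_n\eqdef\{-\rho_n\in\Int\bK_t\}\cap A$ satisfy $\P(B_n)\to\P(A)>0$, and for any $n$ with $\P(B_n)>0$, the infinite $\cF_t$-decomposability of $\rhonot$ from Lemma~\ref{ACC-inf-decomp} delivers $\tilde\rho_n\eqdef\rho_n\one_{B_n}\in\rhonot\cap\Lnot(-\bK_t,\cF_t)$ with $-\tilde\rho_n\in\Int\bK_t$ on $B_n$, contradicting (i). Now, for $\zeta\in\hat\Xi_t^0\cap\Lnot(\R_-^d,\cF_t)$ with decomposition $\zeta=k+\rho$, the upper-set property $\rhos+\Lnot(\R_+^d,\cF_t)\subseteq\rhos$ (inherited from $\Acc_{t,t+1}\supseteq\Lnot(\R_+^d,\cF_{t+1})$) yields $-k=\rho+(-\zeta)\in\rhos\cap\Lnot(-\bK_t,\cF_t)$, whence the promoted (i) forces $k\in\Lnot(\partial\bK_t,\cF_t)$, and symmetrically $-\rho\in\Lnot(\partial\bK_t,\cF_t)$.

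The main obstacle is to convert these boundary constraints into $\zeta=0$. On $\{\zeta\neq 0\}$ one has $-\zeta\in\R_+^d\setminus\{0\}\subseteq\Int\bK_t$ by hypothesis, so measurable selection furnishes $r>0$ with $-\zeta+rB\subseteq\bK_t$. If $\bK_t$ is a cone then $k+\bK_t\subseteq\bK_t$ and hence $-\rho=k+(-\zeta)\in\Int\bK_t$, contradicting $-\rho\in\partial\bK_t$. In the non-conical setting the inclusion $\bK_t+\bK_t\subseteq\bK_t$ fails, and one cannot conclude $-\rho\in\Int\bK_t$ from $k\in\bK_t$ and $-\zeta\in\Int\bK_t$ alone; the contradiction must exploit that $\zeta$ arises from the recursive structure of $\hat\Xi_t^0$ (via $\rhos$ built from $\Xi_{t+1}^0$ and $\Acc_{t,t+1}$) rather than just the abstract boundary membership, presumably by using closedness of $\hat\Xi_t^0$ along the lines of Theorem~\ref{thr:snr-bnr-nra-i} to pass from the structural boundary constraint back to an actual arbitrage. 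This last reduction is the technical crux of the proof.
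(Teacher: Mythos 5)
Your implications (ii)\,$\Rightarrow$\,\NRA\ and \NRA\,$\Rightarrow$\,(i) are correct (the latter is in substance the paper's own argument, phrased through $\Att_{t,T}\cap\Lnot(\R_+^d,\cF_t)=\{0\}$ rather than through $\Xi_t^0$), and your promotion of (i) to its $\Lnot$-closure via infinite $\cF_t$-decomposability of $\rhonot_{t,t+1}(\Xi_{t+1}^0)$ is sound. But the proof is not complete: you leave the last step of (i)\,$\Rightarrow$\,(ii) open whenever $\bK_t$ is not a cone, and the paper explicitly works with non-conical solvency sets, so this is a genuine gap, not a removable technicality.

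The missing idea is not the recursive structure of $\hat\Xi_t^0$ or a closedness argument à la Theorem~\ref{thr:snr-bnr-nra-i}; it is the elementary halving trick that the paper uses throughout this proof. You correctly observe that $-\rho=k+(-\zeta)$ need not lie in $\bK_t$ when $\bK_t$ is merely convex. But $-\rho/2=\tfrac12 k+\tfrac12(-\zeta)$ is a midpoint of $k\in\bK_t$ and $-\zeta\in\Int\bK_t$ (on $\{\zeta\neq0\}$, by the hypothesis $\R_+^d\setminus\{0\}\subseteq\Int\bK_t$), hence lies in $\Int\bK_t$ there by convexity; and since $0\in\rhos_{t,t+1}(\Xi_{t+1}^0)$ and this set is convex, $\rho/2$ still belongs to it, as well as to $\Lnot(-\bK_t,\cF_t)$. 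Applying your ``promoted (i)'' to $\rho/2$ instead of $\rho$ gives $-\rho/2\in\partial\bK_t$ a.s., contradicting $-\rho/2\in\Int\bK_t$ on a non-null set, whence $\zeta=0$. With this one line your argument closes. For comparison, the paper reaches (ii) by a different route: it shows \NRA\,$\Rightarrow$\,(ii) directly, pushing $k_t/2$ into $\Int\bK_t$ by adding a small positive vector and then invoking Proposition~\ref{prop-open+decomposable} to replace $\rhos_{t,t+1}(\Xi_{t+1}^0)$ by $\rhonot_{t,t+1}(\Xi_{t+1}^0)$ in the sum with the open set $\Int\bK_t$; your decomposability-based promotion of (i) to its closure is a legitimate substitute for that lemma, but only once the halving step is supplied.
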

\begin{proof}
  (i) Consider $x_t=\gamma_t+k_t$ for $\gamma_t\in
  M\eqdef\rhonot_{t,t+1}(\Xi_{t+1}^0)$ and $k_t\in\Lnot(\bK_t,\cF_t)$.
  Assume that $x_t$ is non-trivial and $x_t\in \Lnot(\R_-^d,\cF_t)$. 
  Hence, $\gamma_t/2=x_t/2-k_t/2\in \Lnot(-\bK_t,\cF_t)$ and
  $\gamma_t/2\in -\Int \bK_t$ on $\{x_t\neq 0\}$, since $\Int\bK_t$
  contains $\R_+^d\setminus \{0\}$, contrary to the assumption.
  
  Consider any $x_t\in M\cap \Lnot(-\bK_t,\cF_t)$ such that $x_t=-k_t$
  for $k_t\in \Lnot(\bK_t,\cF_t)$ such that
  $\P(k_t\in\Int\bK_t)>0$. By a measurable selection argument,
  $k_t+\gamma_t\in \Lnot(\bK_t,\cF_t)$ for some
  $\gamma_t\in\Lnot(\R_-^d,\cF_t)\setminus \{0\}$. Thus,
  \begin{displaymath}
    x_t+k_t+\gamma_t=\gamma_t\in
    (M+\Lnot(\bK_t,\cF_t))\cap \Lnot(\R_-^d,\cF_t),
  \end{displaymath}
  contrary to \NRA.

  (ii) It suffices to show that \NRA\ implies (ii).  Assume that
  $k_t\in \Lnot(\bK_t,\cF_t)$ and
  $\gamma_t\in\rhos_{t,t+1}(\Xi_{t+1}^0)$ are such that
  $k_t+\gamma_t\in\R_-^d$ a.s. and $k_t+\gamma_t\neq 0$ with a
  positive probability. Since $k_t/2+\R_+^d\subset(\{k_t/2\}\cup
  \Int\bK_t)$ a.s., the set $(\Int\bK_t+\rhos_{t,t+1}(\Xi_{t+1}^0))$
  with a positive
  probability has a non-trivial intersection with $\R_-^d$. By \cite[Prop.~2.10]{lep:mol19}
  with $\Gamma=\Int\bK_t$ and $\Xi=\rhonot_{t,t+1}(\Xi_{t+1}^0)$, the
  set $(\Int\bK_t+\rhonot_{t,t+1}(\Xi_{t+1}^0))$ has a non-trivial
  intersection with $\R_-^d$ with a positive probability, contrary to
  \NRA.
\end{proof}

\begin{theorem}
  \label{thr:snr-bnr-nra-i} 
  If the solvency sets are proper, then \SNR\ implies \NARA\ and the
  closedness of $\hat\Xi_t^0$ in probability for all $t=0,\dots,T$.
\end{theorem}
\begin{proof}
  Denote $M\eqdef\rhos_{t,t+1}(\Xi_{t+1}^0)$.  Recall that
  $\cl_0\hat\Xi_t^0=\cl_0\Xi_t^0$.  Assume that $k_t^n+\gamma_t^n\to
  \zeta_t\in\Lnot(\R_-^d,\cF_t)$ a.s.  for $k_t^n\in
  \Lnot(\bK_t,\cF_t)$ and $\gamma_t^n\in M$ such that
  $k_t^n+\gamma_t^n\in\hat\Xi_t^0$, $n\geq1$. Since $M$ is
  $\Lnot$-closed and convex, we may assume by
  \cite[Lemma~2.1.2]{kab:saf09} that $k_t^n\to k_t\in
  \Lnot(\bK_t,\cF_t)$ on the set $A=\{\liminf_n
  \|k_t^n\|<\infty\}$. Hence, $\gamma_t^n\to\gamma_t\in M$, so that
  \begin{displaymath}
    \gamma_t=\zeta_t-k_t\in M\cap \Lnot(-\bK_t,\cF_t)
    \subseteq \Lnot(\bK_t^0,\cF_t).
  \end{displaymath}
  Thus, $\gamma_t\in \bK_t^0$ and $\zeta_t/2=\gamma_t/2+k_t/2\in
  \bK_t$. Hence, $\zeta_t/2\in \R^d_-\cap \bK_t=\{0\}$ and
  $\zeta_t=0$ on $A$.
  
  If $\P(\Omega\setminus A)>0$, assume that
  $k_t^n=\gamma_t^n=\zeta_t=0$ on $A$ by $\cF_t$-decomposability, and
  use the standard normalisation procedure, i.e. divide
  $k_t^n,\gamma_t^n,\zeta_t$ by $(1+\|k_t^n\|)$. Arguing as
  previously, we obtain $k_t\in \Lnot(\bK_t,\cF_t)$ such that
  $\|k_t\|=1$ on $\Omega\setminus A$. Since $0\in M$, we have
  $\gamma_t\in M$ by conditional convexity. Moreover, $k_t+\gamma_t=0$
  since $\zeta_t/(1+\|k_t^n\|)\to 0$. Hence, $\gamma_t\ne 0$ belongs to
  $M\cap \Lnot(-\bK_t,\cF_t)=\{0\}$, which is a contradiction in view
  of Lemma~\ref{lemma:snr-xi}. 

  This argument also yields the closedness of
  $\hat\Xi_{t}^{0}=\Lnot(\bK_{t},\cF_{t})+M$. 
\end{proof}

Let $\overline{\Att}_{t,s}^p$ denote the closure of
$\Att_{t,s}^p\eqdef\Att_{t,s}\cap \Lnot[p](\R^d,\cF_T)$ in $\Lnot[p]$
for $p<\infty$ and in the $\cF_t$-bounded convergence in probability
for $p=\infty$. 

The following theorem states that the \NARA\, and \SNR\, conditions
are weak no arbitrage conditions of the no free lunch type. Recall
that the usual NFL condition is
$\overline{\Att}_{t,T}^p\cap\Lnot(\R^d_+,\cF_T)=\{0\}$.

\begin{theorem} 
  \label{EquivPropNARA} 
  Assume that the acceptance sets are continuous from below and $p\in
  [1,\infty]$.
  \begin{enumerate}[(i)]
  \item \NARA\ is equivalent to
    \begin{equation}
      \label{eq:16}
      \overline{\Att}_{t,T}^p\cap\Lnot(\R^d_+,\cF_t)=\{0\},\quad
      t=0,\dots,T-1.
    \end{equation}
  \item If the solvency sets are proper, then \SNR\ is
    equivalent to
    \begin{equation}
      \label{eq:11}
      \overline{\Att}_{t,T}^p\cap\Lnot(\bK_t,\cF_t)=\{0\},
      \quad t=0,\dots,T-1.
    \end{equation}
  \end{enumerate}
  Moreover, properties (\ref{eq:16}) and (\ref{eq:11}) are
  equivalent to the same ones with $p=1$ and also to those obtained by
  taking the closure of $\Att_{t,T}$ in
  $\Lts[T]=\Lnot[p_{\cF_t}](\R^d,\cF_T)$.
\end{theorem}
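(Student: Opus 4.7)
The strategy is to work separately at each time $t$ and bridge two topological settings: the $\Lnot[p]$-closure appearing in the hypothesis $\overline{\Att}^p_{t,T}\cap\Lnot(\R^d_+,\cF_t)=\{0\}$ and the $\Lnot$-closure appearing in \NARA. The algebraic link is the identity $\Xi_t^0=(-\Att_{t,T})\cap\Lnot(\R^d,\cF_t)$, so a ``positive element of $\Att$'' corresponds to a ``non-positive $\cF_t$-measurable element of $\Xi_t^0$''; what differs between the two sides of each equivalence is the topology used to take closures and the measurability constraint imposed. Continuity from below of the acceptance sets is what will bridge these topologies.

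For (i), direction ``$\Leftarrow$'': take $\gamma_t\in\cl_0\Xi_t^0\cap\Lnot(\R^d_-,\cF_t)$, written as $\lim_n\gamma_t^n$ in probability with $\gamma_t^n$ $\cF_t$-measurable and $-\gamma_t^n\in\Att_{t,T}$. For each constant $c>0$, set $\zeta^{n,c}\eqdef(-\gamma_t^n)\wedge(c\one)$; since $-\gamma_t^n$ is $\cF_t$-measurable, the deficit $((-\gamma_t^n)-c\one)_+\in\Lnot(\R^d_+,\cF_t)\subset\Acc_{t,t+1}$ is acceptable, so $\zeta^{n,c}\in\Att_{t,T}$. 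As $n\to\infty$, $\zeta^{n,c}$ converges in probability to $(-\gamma_t)\wedge(c\one)\in\Lnot[\infty](\R_+^d,\cF_t)$; continuity from below of $\Acc_{t,s}$ lets me convert the $\Lnot$-approximation into an $\Lnot[p]$-approximation by an acceptable correction tending to zero. The closure hypothesis then forces $(-\gamma_t)\wedge(c\one)=0$, and letting $c\to\infty$ gives $\gamma_t=0$, which is \NARA.

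For (i), direction ``$\Rightarrow$'': assume \NARA\ and take $\eta_t\in\overline{\Att}^p_{t,T}\cap\Lnot(\R^d_+,\cF_t)$ with $\eta_t=\lim_n\zeta_T^n$ in $\Lnot[p]$. I would proceed by downward induction in $t$, decomposing $\zeta_T^n=-k_t^n-\eta_t^n+\zeta_{(t+1)}^n$ with $k_t^n\in\Lnot(\bK_t,\cF_t)$, $\eta_t^n\in\Acc_{t,t+1}$, $\zeta_{(t+1)}^n\in\Att_{t+1,T}$, and applying a Schachermayer-type normalization to handle the possibly non-$\Lnot[p]$-bounded pieces $k_t^n$. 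The compensation property of the acceptance sets then produces an $\cF_t$-measurable substitute $\gamma_t^n$ with $-\gamma_t^n\in\Att_{t,T}$ and $\gamma_t^n+\eta_t\to 0$ in probability, so $-\eta_t\in\cl_0\Xi_t^0\cap\Lnot(\R^d_-,\cF_t)=\{0\}$ by \NARA.

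Part (ii) follows the same plan with $\R^d_+$ replaced by $\bK_t$ and \NARA\ replaced by \SNR. Lemma~\ref{lemma:snr-xi} identifies \SNR\ (given proper $\bK_t$) with $\rhos_{t,t+1}(\Xi^0_{t+1})\cap\Lnot(-\bK_t,\cF_t)\subset\Lnot(\bK_t^0,\cF_t)$, and propriety gives $\bK_t^0=\{0\}$, ruling out the arbitrage-like elements. The main obstacle throughout is precisely the topological bridging: upgrading a probability-convergent sequence in $\Att$ with $\cF_t$-measurable limit to an $\Lnot[p]$-convergent one with $\cF_t$-measurable approximants, and vice versa. Continuity from below of the acceptance sets, combined with the compensation property, is what makes this upgrading possible and is what drives both directions of both equivalences.
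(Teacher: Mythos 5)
Your overall skeleton is right --- the identity $\Xi_t^0=(-\Att_{t,T})\cap\Lnot(\R^d,\cF_t)$, a truncation argument in one direction, and a conversion between $\cF_T$-measurable $\Lnot[p]$-approximants and $\cF_t$-measurable $\Lnot$-approximants in the other --- but the hard direction, \NARA\ $\Rightarrow$ \eqref{eq:16}, is not actually carried out, and the tool you propose for it cannot do the job. You invoke the compensation property to turn $\zeta_T^n\in\Att_{t,T}^p$ with $\zeta_T^n\to\eta_t$ in $\Lnot[p]$ into $\cF_t$-measurable $\gamma_t^n$ with $-\gamma_t^n\in\Att_{t,T}$ and $\gamma_t^n+\eta_t\to0$ in probability. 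Compensation only asserts that $(\xi_s+\Acc_{t,s})\cap\Lnot(\R^d,\cF_t)\neq\emptyset$; it gives no control whatsoever on the size of the compensating $\cF_t$-measurable element, so there is no reason the compensated sequence converges to $-\eta_t$. The paper's mechanism is precisely the continuity-from-below property (Definition~\ref{ContinuityfromAbove}): after truncating and replacing $\gamma_t^n=(V_{t,T}^n-z_t^+)\one_{\|V_{t,T}^n\|\le\|z_t^+\|+1}$ by the componentwise infimum $\tilde\gamma_t^n=\inf_{m\ge n}\gamma_t^m$, which increases to $0$, one applies continuity from below backwards in time, at each step $u$ producing $\zeta_u^n\in\Lnot(\R^d,\cF_u)$ with $\zeta_u^n\downarrow0$ and $-\zeta_{u+1}^n+\zeta_u^n\in\Acc_{u,u+1}$; summing gives $\tilde\gamma_t^n+\zeta_t^n\in\Acc_{t,t+1}+\cdots+\Acc_{T-1,T}$, and the convex combination $\tfrac12(-z_t^+\one_{\ldots}+\zeta_t^n)=-\tfrac12\tilde V_{t,T}^n+\tfrac12(\tilde\gamma_t^n+\zeta_t^n)$ lies in $\Xi_t^0$ and converges to $-\tfrac12 z_t^+$, which \NARA\ then forces to vanish. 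You have assigned continuity from below to the \emph{other} direction, where the paper needs only decomposability of $\Xi_t^0$ and dominated convergence after an indicator truncation.

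Two smaller points. In your ``$\Leftarrow$'' direction the truncation $\zeta^{n,c}=(-\gamma_t^n)\wedge(c\one)$ is bounded above but not below (elements of $\Xi_t^0$ need not be componentwise bounded from below), so $\zeta^{n,c}$ need not lie in $\Lnot[p]$ nor converge in $\Lnot[p]$; the paper instead multiplies by the indicator $\one_{\|x_t^n\|\le m+1}\one_{\|x_t\|\le m}$, which keeps the element in $\Xi_t^0$ by decomposability (using $0\in\Xi_t^0$) and makes the sequence uniformly bounded, so dominated convergence applies without any appeal to continuity from below. In (ii) you should also record that \SNR\ is stated for $\hat\Xi_t^0$, and the identification $\hat\Xi_t^0=\cl_0\Xi_t^0$ needed to run the same argument comes from the closedness established in Theorem~\ref{thr:snr-bnr-nra-i}, not merely from Lemma~\ref{lemma:snr-xi}.
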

\begin{proof}
  (i) Assume that (\ref{eq:16}) holds for the closure with respect to
  the conditional norm on $\Lts[T]$.  Thus, \eqref{eq:16} also holds
  if the closure is taken with respect to the norm on
  $\Lnot[p](\R^d,\cF_T)$. Therefore, given \eqref{eq:1}, it suffices
  to show that \NARA\ follows from
  \begin{equation}
    \label{eq:17}
    \cl_p(\Xi_t^0\cap\Lnot[p](\R^d,\cF_t))
    \cap \Lnot(\R_-^d,\cF_t)=\{0\},\quad t=0,\dots,T-1.
  \end{equation}
  Assume \eqref{eq:17} and consider $x_t\in (\cl_0 \Xi_t^0)\cap
  \Lnot(\R_-^d,\cF_t)$. Then $x_t^n\to x_t$ a.s. for $x_t^n\in
  \Xi_t^0$, $n\geq1$. Hence, 
  \begin{displaymath}
    x_t^n\one_{\|x_t^n\|\le m+1}\one_{\|x_t\|\le m} \to
    x_t\one_{\|x_t\|\le m} \quad\text{a.s. as }\; n\to\infty
  \end{displaymath}
  for all $m\ge 1$, where $x_t^n\one_{\|x_t^n\|\le
    m+1}\one_{\|x_t\|\le m}\in\Xi_t^0$ by decomposability and since
  $0\in\Xi_t^0$. The dominated convergence theorem yields that
  $x_t\one_{\|x_t\|\le m}$ belongs to $\cl_p(\Xi_t^0\cap\Lnot[p](\R^d,\cF_t))\cap
  \Lnot(\R_-^d,\cF_t)=\{0\}$, where the closure may be taken with
  respect to the conditional norm.
  Letting $m\to \infty$ yields $x_t=0$, i.e. \NARA\ holds.

  Assume \NARA. Consider a sequence $(V^n_{t,T})_{n\ge
    1}$ from $\Att_{t,T}^p$ which converges in $\Lts[T]$ to $z_t^+\in
  \Lnot[p](\R^d_+,\cF_t)$. Then, $\tilde V^n_{t,T}=V^n_{t,T}\wedge
  z_t^+\to z_t^+$ in $\LuT$, where the minimum is taken
  coordinatewisely and $u$ is any time moment between $t$ and $T$,
  and $\tilde V^n_{t,T}\in \Att_{t,T}$, so that we may assume without
  loss of generality that $V^n_{t,T}\le z_t^+$.
  Passing to subsequences, assume that $V^n_{t,T}\to
  z_t^+$ in $\LuT$ and almost surely for each given $u\ge t$.
  
  Define $\xi_T^n=V^n_{t,T}-z_t^+\le 0$. Then 
  $\tn\xi_T^n\tn_{p,\cF_{T-1}}\to 0$ in probability.
  By the continuity from below, there exists a sequence
  $\gamma_{T-1}^n\in \Lnot(\R^d_+,\cF_{T-1})$, $n\geq1$, such that
  \begin{displaymath}
    \eta^n_{T}\eqdef \xi^n_T+\gamma_{T-1}^n\in \Acc_{T-1,T}
  \end{displaymath}
  and $0\le \gamma_{T-1}^n\le x_T\tn\xi_T^n\tn_{p,\cF_{T-1}}$ for all
  $n$ and some $x_T\in\R_+^d$. Hence, $\gamma_{T-1}^n\to 0$ in
  $\LuT[T-2]$ if $T-2\ge t$. Since $-\gamma_{T-1}^n\to 0$
  in $\LuT[T-2]$, the continuity from below property yields
  the existence of a sequence $\gamma_{T-2}^n\in
  \Lnot(\R^d_+,\cF_{T-2})$ such that
  \begin{displaymath}
    \eta^n_{T-1}\eqdef-\gamma_{T-1}^n+\gamma_{T-2}^n\in \Acc_{T-2,T-1}
  \end{displaymath}
  for all $n$, and, for some constant $x_{T-1}\in\R_+^d$, we have
  \begin{displaymath}
    0\le\gamma_{T-2}^n\le x_{T-1}\tn\gamma_{T-1}^n\tn_{p,\cF_{T-1}}\le x_Tx_{T-1}
    \tn\xi_T^n\tn_{p,\cF_{T-2}},
  \end{displaymath}
  so that $\gamma_{T-2}^n\to0$ in $\LuT[T-3]$ if $T-3\ge
  t$. Iterate the construction to find
  $\gamma_{T-3}^n,\dots,\gamma_{t}^n$ such that $\gamma_{t}^n\to 0$
  a.s. Then $\eta^n_{u+1}\eqdef-\gamma_{u+1}^n+\gamma_{u}^n\in
  \Acc_{u,u+1}$ if $t\le u\le T-2$. Hence,
  \begin{displaymath}
    \xi_T^n+\gamma_{t}^n=\sum_{u=t}^{T-1}\eta^n_{u+1}\in \Acc_{t,t+1}+\cdots+\Acc_{T-1,T}.
  \end{displaymath}
  By convexity,  
  \begin{displaymath}
    \frac{1}{2}(-z_t^+
    +\gamma_{t}^n)=-\frac{1}{2} V^n_{t,T}+\frac{1}{2}
    (\xi_T^n+\gamma_{t}^n) \in \Xi_t^0.
  \end{displaymath}
  Letting $n\to\infty$ yields that $-\frac{1}{2}z_t^+\in
  (\cl_0\Xi_t^0)\cap\Lnot(-\R^d_+,\cF_t)$, so that $z_t^+=0$ by
  \NARA. Thus, \eqref{eq:16} holds with respect to the conditional
  norm and also with respect to the $\Lnot[p]$-norm.

  (ii) Recall that $\hat\Xi_t^0=\cl_0(\hat\Xi_t^0)=\cl_0(\Xi_t^0)$
  by Theorem~\ref{thr:snr-bnr-nra-i}.  Following the arguments from
  (i), we obtain that \SNR\ is equivalent to
  \begin{equation}
    \label{eq:15}
    \cl_p(\Xi_t^0\cap\Lnot[p](\R^d,\cF_t))\cap
    \Lnot[p](-\bK_t,\cF_t)=\{0\}, \quad t=0,\dots,T-1.
  \end{equation}
  In view of \eqref{eq:1}, $\cl_p(\Xi_t^0\cap\Lnot[p])\subseteq
  -\overline{\Att}_{t,T}^p$.  Therefore, \eqref{eq:11} implies
  \eqref{eq:15} and \SNR\ holds.
  
  Now assume \SNR. Consider a sequence $(V^n_{t,T})_{n\ge
    1}$ from $\Att_{t,T}^p$ which converges in $\Lnot[p]$ to $k_t\in
  \Lnot(\bK_t,\cF_t)$. 
  Then follow the proof of (i) with $k_t$ instead of $z_t^+$.
  
  Consider a sequence $(V^n_{t,T})_{n\ge 1}$ from
  $\Att_{t,T}^1$ which converges in $\Lnot[1]$ to $k_t\in
  \Lnot(\bK_t,\cF_t)$. We may assume that $(V^n_{t,T})_{n\ge 1}$
  converges a.s. Then, for every $M>0$, $(V^n_{t,T}\one_{\|k_t\|\le
    M})_{n\ge 1}$ is a sequence from $\Att_{t,T}^1$ which converges in
  $\Lnot[1]$ to $k_t\one_{\|k_t\|\le M}\in \Lnot[p](\bK_t,\cF_t)$ so
  that we may assume without loss of generality that $\|k_t\|$ is
  bounded by $M$. Passing to a subsequence, assume that
  $\E(\|V^n_{t,T}-k_t\|\,|\cF_t)\to 0$ a.s.  Therefore,
  $V^n_{t,T}\one_{\E(\|V^n_{t,T}\|\,|\cF_t)\le M+1}\to k_t$ 
  almost surely and in $\cL^1_{t,T}$. Thus, (\ref{eq:11}) holds
  with $p=1$.
\end{proof}

Now consider more general claims $\xi$. Recall that, if $\bK_T$ is a cone and
$\xi\in\bK_T$ a.s., then $\Xi_t^\xi\subset\Xi_t^0$ for all $t$.

\begin{theorem}
  \label{thr:snr-bnr-nra-ii} 
  If the solvency sets are proper and the acceptance sets are
  continuous from below, then \SNR\ yields that $\hat\Xi_t^{\xi}$ is
  closed in probability for all $t$ and any $\xi\in\Lnot[p](\R^d,\cF_T)$,
  so that $\hat\Xi_t^\xi=\Lnot(\bX_t^\xi,\cF_t)$ for a random closed
  set $\bX_t^\xi$, $t=0,\dots,T$.
\end{theorem}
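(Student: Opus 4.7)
The plan is to show, for each $t \leq T$, that $\hat\Xi_t^\xi = \Lnot(\bK_t,\cF_t) + M_t$ is $\Lnot$-closed, where $M_t := \rhos_{t,t+1}(\Xi_{t+1}^\xi)$ is $\Lnot$-closed by definition of $\rhos$, convex, and $\cF_t$-decomposable by Lemma~\ref{lemma:st-price}; once closedness is established, Theorem~\ref{ReprDecompL0} yields the required representation $\hat\Xi_t^\xi = \Lnot(\bX_t^\xi, \cF_t)$ for a random closed set $\bX_t^\xi$. The base case $t = T$ is immediate: $\hat\Xi_T^\xi = \Lnot(\xi + \bK_T, \cF_T)$ with $\bX_T^\xi = \xi + \bK_T$, a random closed set by Lemma~\ref{A+B-measurable}. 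Non-emptiness of $M_t$, needed to invoke Theorem~\ref{ReprDecompL0}, follows from the compensation property of Definition~\ref{DefiRiskmeasure}(vi) applied to an $\Lnot[p]$-selection of $\Xi_{t+1}^\xi$, whose existence is checked by backward induction starting from $\xi \in \Lnot[p](\R^d, \cF_T)$.

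I would follow the scheme of Theorem~\ref{thr:snr-bnr-nra-i}. Take $V_{t-}^n = k_t^n + \gamma_t^n \to V_{t-}$ a.s.\ with $k_t^n \in \Lnot(\bK_t, \cF_t)$ and $\gamma_t^n \in M_t$, and split $\Omega = A \cup B$ via $A := \{\liminf_n \|k_t^n\| < \infty\} \in \cF_t$. On $A$, \cite[Lemma~2.1.2]{kab:saf09} together with closedness of $\bK_t$ and $\Lnot$-closedness of $M_t$ yields a subsequential a.s.\ limit $k_t \in \Lnot(\bK_t, \cF_t)$; then $\gamma_t^n \to V_{t-} - k_t \in M_t$, so $V_{t-} \in \hat\Xi_t^\xi$ on $A$.

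The interesting situation is $B$, where by $\cF_t$-decomposability I reduce to $\|k_t^n\| \to \infty$ a.s.\ and normalize by $c_n := 1 + \|k_t^n\|$. The new complication compared with the $\xi = 0$ case in Theorem~\ref{thr:snr-bnr-nra-i} is that $M_t$ is neither a cone nor guaranteed to contain $0$, so the naive rescaling $\gamma_t^n/c_n$ need not lie in $M_t$. I would fix some $\gamma^\star_t \in M_t$ and use convexity to form
\[
\beta_t^n := \tfrac{1}{c_n}\gamma_t^n + \bigl(1 - \tfrac{1}{c_n}\bigr)\gamma^\star_t \in M_t,
\]
and pass to a further subsequence along which $\tilde k_t^n := k_t^n / c_n$ converges a.s.\ to some $\tilde k_t$ with $\|\tilde k_t\| = 1$ on $B$. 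Since $V_{t-}^n / c_n \to 0$ a.s.\ on $B$, the sequence $\beta_t^n$ converges a.s.\ to $\gamma^\star_t - \tilde k_t$, which lies in $M_t$ by $\Lnot$-closedness.

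The main obstacle is extracting a contradiction from $\gamma^\star_t - \tilde k_t \in M_t$. The direction $-\tilde k_t$ should behave as a recession direction of $M_t$; because $\xi$ is a fixed element of $\Lnot[p]$ while $c_n \to \infty$ swamps its contribution in $\Xi_{t+1}^\xi = (\xi - \Att_{t+1,T}) \cap \Lnot(\R^d, \cF_{t+1})$, I expect this recession direction to coincide with an element of $\rhos_{t,t+1}(\Xi_{t+1}^0)$, to which Lemma~\ref{lemma:snr-xi} under \SNR\ applies. Combined with properness of $\bK_t$, this forces $\tilde k_t \in \Lnot(\bK_t^0, \cF_t) = \{0\}$, contradicting $\|\tilde k_t\| = 1$ on $B$, so $\P(B)=0$ and closedness of $\hat\Xi_t^\xi$ follows. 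Justifying the identification of recession cones (and handling the asymptotic cone of $\bK_t$, which is not assumed conical) is the delicate technical point, requiring careful bookkeeping through the definitions of $\Xi_{t+1}^\xi$, $\Att_{t+1,T}$, and the acceptance sets.
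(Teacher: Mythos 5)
Your skeleton coincides with the paper's: write $\hat\Xi_t^\xi=\Lnot(\bK_t,\cF_t)+M_t$ with $M_t=\rhos_{t,t+1}(\Xi_{t+1}^\xi)$ closed, convex and decomposable, split $\Omega$ into $A=\{\liminf_n\|k_t^n\|<\infty\}$ and its complement, use \cite[Lemma~2.1.2]{kab:saf09} on $A$, and normalise by $c_n=1+\|k_t^n\|$ on the complement; the representation via Theorem~\ref{ReprDecompL0} is then routine. Up to that point your proposal is correct and matches the paper.

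The step you flag as ``the delicate technical point'' is, however, exactly where the proof lives, and your proposed route for it has a concrete gap. From $\beta_t^n\to\gamma_t^\star-\tilde k_t\in M_t$ (and, with the obvious modification, $\gamma_t^\star-\lambda\tilde k_t\in M_t$ for all $\lambda\ge 0$) you want to conclude that $-\tilde k_t\in\rhos_{t,t+1}(\Xi_{t+1}^0)$ and invoke Lemma~\ref{lemma:snr-xi}. But $\rhos_{t,t+1}(\Xi_{t+1}^0)$ is the $\Lnot$-closure of the \emph{$\cF_t$-measurable} part of $-\Att_{t+1,T}+\Acc_{t,t+1}$. The scaling/recession argument only produces approximants of the form $w_n-c_n^{-1}\xi$ with $w_n$ $\cF_t$-measurable and $c_n^{-1}\xi$ merely $\cF_T$-measurable, so it yields $-\tilde k_t\in\cl_0(-\Att_{t+1,T}+\Acc_{t,t+1})$ but not membership in $\rhos_{t,t+1}(\Xi_{t+1}^0)$; an $\cF_t$-measurable limit of a set need not lie in the closure of the set's $\cF_t$-measurable section. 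Neither the solvency sets nor the acceptance sets are conical here, so there is no cheap identification of recession cones either. A telling symptom is that your argument never uses continuity from below, which is a hypothesis of the theorem.

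The paper circumvents this by staying at the level of attainable claims rather than their $\cF_t$-sections: since $0\in-\Att_{t+1,T}$ and everything is convex, $\tilde\gamma_t^n=c_n^{-1}\gamma_t^n$ lies in $\cl_0(-\Att_{t+1,T}+c_n^{-1}\xi)$ (no convex combination with a fixed $\gamma_t^\star$ is needed), a truncation puts it in the $\Lnot[p]$-closure, and letting $n\to\infty$ gives $-\tilde\gamma_t=\tilde k_t\in\overline{\Att}_{t,T}^p\cap\Lnot(\bK_t,\cF_t)$. This is killed by \eqref{eq:11}, the characterisation of \SNR\ from Theorem~\ref{EquivPropNARA}(ii) --- which is precisely where continuity from below enters. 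You should replace your recession-cone step by this argument, or else supply a proof that the recession cone of $M_t$ is contained in $\rhos_{t,t+1}(\Xi_{t+1}^0)$, which does not follow from what you have written.
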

\begin{proof}
  Assume that $k_t^n+\gamma_t^n\to\zeta_t\in\Lnot(\R^d,\cF_t)$ a.s. for
  $k_t^n\in \Lnot(\bK_t,\cF_t)$ and $\gamma_t^n\in
  M\eqdef\rhos_{t,t+1}(\Xi_{t+1}^{\xi})$ such that
  $k_t^n+\gamma_t^n\in\hat\Xi_t^0$. Since $M$ is $\Lnot$-closed and
  convex, we may assume by \cite[Lemma~2.1.2]{kab:saf09} that
  $k_t^n\to k_t\in\Lnot(\bK_t,\cF_t)$ on the set $A=\{\liminf_n
  \|k_t^n\|<\infty\}$. Hence, $\gamma_t^n\to\gamma_t\in M$, so that
  $\zeta_t=k_t+\gamma_t\in\hat\Xi_t^{\xi}$.
  
  If $\P(\Omega\setminus A)>0$, assume that
  $k_t^n=\gamma_t^n=\zeta_t=0$ on $A$ by $\cF_t$-decomposability, and
  use the normalisation procedure, i.e. obtain $\tilde k_t^n$ and
  $\tilde\gamma_t^n$ by scaling $k_t^n$ and $\gamma_t^n$ with
  $c_t^n=(1+\|k_t^n\|)^{-1}$. We may assume that
  $\|\tilde\gamma_t^n\|\le 2$, since $c_t^n\zeta_t\to 0$, so that
  $\tilde k_t^n+\tilde\gamma_t^n\to 0$. Arguing as previously, $\tilde
  k_t^n\to\tilde k_t\in\Lnot(\bK_t,\cF_t)$ in $\Lnot[p]$, and
  $\|\tilde k_t\|=1$ on $\Omega\setminus A$. Therefore,
  $\tilde\gamma_t^n\to\tilde\gamma_t=-\tilde k_t$ in $\Lnot[p]$, so
  that $\tilde k_t+\tilde\gamma_t=0$. Notice that
  \begin{displaymath}
    M=\cl_0\left((-\Att_{t+1,T}+\xi)\cap \Lnot(\R^d,\cF_t)\right).
  \end{displaymath}
  By convexity,
  \begin{displaymath}
    \tilde\gamma_t^n\in\cl_0(-\Att_{t+1,T}+c_t^n\xi)\cap\Lnot(\R^d,\cF_t).
  \end{displaymath}
  Since $\|\tilde\gamma_t^n\|\le 2$, assume without loss of
  generality that $\tilde\gamma_t^n\in\cl_p(-\Att_{t+1,T}+c_t^n\xi)$.
  To see this, it suffices to approximate the sequence $\tilde
  \gamma_t^n$ by $\bar\gamma_t^{mn}\in(-\Att_{t+1,T}+c_t^n\xi)$,
  $m\ge1$, and multiply the latter by $\one_{\|\bar\gamma_t^{mn}\|\le
    3}$. Letting $n\to\infty$, \eqref{eq:11} yields that
  \begin{displaymath}
    -\tilde\gamma_t\in\overline{\Att}_{t,T}^p\cap \Lnot(\bK_t,\cF_t)=\{0\}.
  \end{displaymath}
  Thus, $\gamma_t=0$, so that $\P(\Omega\setminus A)=0$ and the
  conclusion follows.
\end{proof}

\begin{lemma}
 \label{lemma:NRAT}
 \NRAT\ is equivalent to
 \begin{equation}
   \label{eq:13}
   \Xi_t^\xi\subset\Lnot(\bK_t,\cF_t)+\Acc_{t-1,t},\quad t=0,\dots,T,
 \end{equation}
 for any $\xi\in\Lnot(\bK_T,\cF_T)$.
 If \eqref{eq:8} holds, then \NRAT\ implies \NARA.
\end{lemma}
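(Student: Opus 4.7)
For the easy direction of Part 1 (\eqref{eq:13} implies \NRAT), the plan is to invoke Lemma \ref{lemma:st-price}(iii). Given $\eta_t\in\Lnot(\R^d_+,\cF_t)$ with $(\eta_t+\Att_{t,T})\cap\Lnot(\bK_T,\cF_T)\neq\emptyset$, pick any $\xi$ in this intersection; then $\xi-\eta_t\in\Att_{t,T}$, and since $\eta_t$ is $\cF_t$-measurable, $\eta_t\in(-\Att_{t,T}+\xi)\cap\Lnot(\R^d,\cF_t)=\Xi_t^\xi$. Applying \eqref{eq:13} to this $\xi\in\Lnot(\bK_T,\cF_T)$ immediately yields $\eta_t\in\Lnot(\bK_t,\cF_t)+\Acc_{t-1,t}$, the conclusion of \NRAT.

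For the reverse direction of Part 1, I would reduce a general $\gamma_t\in\Xi_t^\xi$ to the non-negative case covered by \NRAT. Decompose $\gamma_t=\gamma_t^+-\gamma_t^-$; since $\bK_T$ is upper, $\xi+\gamma_t^-\in\Lnot(\bK_T,\cF_T)$, while $(\xi+\gamma_t^-)-\gamma_t^+=\xi-\gamma_t\in\Att_{t,T}$. Hence $(\gamma_t^++\Att_{t,T})\cap\Lnot(\bK_T,\cF_T)\neq\emptyset$ with $\gamma_t^+\in\Lnot(\R^d_+,\cF_t)$, and \NRAT\ gives $\gamma_t^+=k+\eta$ for some $k\in\Lnot(\bK_t,\cF_t)$ and $\eta\in\Acc_{t-1,t}$. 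Since $\gamma_t^-\in\Lnot(\R^d_+,\cF_t)\subseteq\Lnot(\bK_t,\cF_t)\cap\Acc_{t-1,t}$ by normalisation, rearranging as $\gamma_t=k+(\eta-\gamma_t^-)$ or $(k-\gamma_t^-)+\eta$ should yield the claim after absorbing $-\gamma_t^-$ into the appropriate summand. The main obstacle is verifying this absorption, which appears to require exploiting the freedom in the \NRAT\ decomposition of $\gamma_t^+$ together with the conditional convexity of $\Acc_{t-1,t}$.

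For Part 2 (\NRAT\ plus \eqref{eq:8} implies \NARA), I would first apply Part 1 to replace \NRAT\ by \eqref{eq:13}. Given $\gamma_t\in\Xi_t^0\cap\Lnot(\R^d_-,\cF_t)$, write $\gamma_t=k+\eta$ with $k\in\Lnot(\bK_t,\cF_t)$ and $\eta\in\Acc_{t-1,t}$; from $\gamma_t\leq 0$ we get $-\eta=k-\gamma_t\geq k$ componentwise, and the upper property of $\bK_t$ then gives $-\eta\in k+\R_+^d\subseteq\bK_t$, i.e.\ $\eta\in\Lnot(-\bK_t,\cF_t)\cap\Acc_{t-1,t}=\{0\}$ by \eqref{eq:8}. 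Hence $\gamma_t=k\in\bK_t\cap\R^d_-=\{0\}$ by the solvency normalisation $\bK_t\cap\R^d_-=\{0\}$. This establishes \NRA. To upgrade to \NARA, I would approximate $\gamma_t\in\cl_0\Xi_t^0\cap\Lnot(\R^d_-,\cF_t)$ by $\gamma_t^n\in\Xi_t^0$ converging a.s., decompose each $\gamma_t^n=k^n+\eta^n$ via \eqref{eq:13}, and pass to the limit using the normalisation procedure (dividing through by $1+\|k^n\|$ on $\{\liminf_n\|k^n\|=\infty\}$) from the proof of Theorem \ref{thr:snr-bnr-nra-i}, combined with the closedness of $\Acc_{t-1,t}$ from Definition \ref{DefiRiskmeasure}(iii). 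The hard part will be preserving this decomposition through the limit, so that the algebraic argument above still applies to the limit vector.
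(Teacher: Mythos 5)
Your first paragraph is correct and is, in substance, the paper's entire proof of the equivalence: the paper merely records that, by \eqref{eq:4}, the hypothesis $(\eta_t+\Att_{t,T})\cap\Lnot(\bK_T,\cF_T)\neq\emptyset$ is the same as $\eta_t\in\Xi_t^\xi$ for some $\xi\in\Lnot(\bK_T,\cF_T)$, and stops there.

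The reverse direction is where you diverge, and the obstacle you flag is not a technicality you can overcome --- it is a sign that the literal reading of Definition~\ref{def:nra-multiperiod} is not the intended one. Writing $\gamma_t=k+\eta-\gamma_t^-$, neither $k-\gamma_t^-$ nor $\eta-\gamma_t^-$ can be returned to $\Lnot(\bK_t,\cF_t)$ or $\Acc_{t-1,t}$: both are merely \emph{upper} sets, stable under adding elements of $\R_+^d$ but not under subtracting them, and no alternative choice of the decomposition of $\gamma_t^+$ supplied by \NRAT\ repairs this. Indeed, with $\eta_t$ restricted to $\Lnot(\R_+^d,\cF_t)$ the implication is simply false for non-conical solvency sets: take $T=1$, everything deterministic, $\Acc_{0,1}=\Lnot(\R_+^2,\cF_1)$ and $\bK_0=\bK_1=\bK$ a closed convex upper set containing a point $\xi$ with $2\xi\notin\bK$ (e.g. $\bK=\cl\co(\R_+^2\cup\{(-1,2)\})$, $\xi=(-1,2)$); then every non-negative $\eta_t$ already lies in $\bK+\Acc_{0,1}=\bK$, so \NRAT\ holds vacuously, while $\Xi_1^\xi=\xi+\bK\not\subset\bK$. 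Moreover, for the acceptance sets of Section~\ref{sec:conditional-core-as} the restriction to non-negative $\eta_t$ renders \NRAT\ vacuous, contradicting Lemma~\ref{lemma:nat-implies-inclusion}(ii). The quantifier in \NRAT\ must therefore be read as $\eta_t\in\Lnot(\R^d,\cF_t)$; under that reading Part~1 collapses to the paper's one-line identification and your positive/negative-part reduction is unnecessary. As written, your reverse direction has an unfixable gap, but the fault lies with the stated definition rather than with your strategy for circumventing it.

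For the final claim the paper gives no argument at all, so here you go beyond it. Your derivation of \NRA\ is correct and clean: $\gamma_t=k+\eta\le 0$ forces $-\eta\ge k$, hence $-\eta\in\bK_t$ by upperness, hence $\eta=0$ by \eqref{eq:8}, and $\gamma_t=k\in\bK_t\cap\R_-^d=\{0\}$. The upgrade to \NARA, however, remains a sketch with a concrete difficulty: the normalisation factor $(1+\|k^n\|)^{-1}$ is $\cF_t$-measurable, whereas $\Acc_{t-1,t}$ is conditionally convex only with respect to $\cF_{t-1}$ (Definition~\ref{DefiRiskmeasure}(iv)), so the rescaled $\eta^n$ need not remain acceptable; and the closedness in Definition~\ref{DefiRiskmeasure}(iii) is with respect to the $\tn\cdot\tn_{p,\cF_{t-1}}$-topology, not plain a.s.\ convergence, so the decomposition does not automatically survive the limit. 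This limit passage is a genuine gap in your argument for the last assertion.
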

\begin{proof}
 Note that $(\eta_t+\Att_{t,T})$ intersects $\Lnot(\bK_T,\cF_T)$ if
 and only if 
 \begin{displaymath}
   \eta_t\in(-\Att_{t,T}+\Lnot(\bK_T,\cF_T))\cap\Lnot(\R^d,\cF_t),
 \end{displaymath}
 equivalently, $\eta_t\in\Xi_t^\xi$ for some $\xi=k_T\in\Lnot(\bK_T,\cF_T)$.
\end{proof}

Denote by $\bK_t^*=\{x:\; \langle x,u\rangle\geq 0, u\in\bK_t\}$ the
positive dual set to $\bK_t$ and assume that $\bK_t^*\setminus\{0\}$
is a subset of the interior of $\R_+^d$ for all $t$.

\begin{definition}
  \label{def:wcps}
  For $t\leq T$, an adapted process $Z\eqdef(Z_s)_{s=t,\dots,T}$ is a
  \emph{$q$-integrable $t$-weakly consistent price system} if it is a
  $\Q$-martingale for $\Q\sim \P$ such that $Z_s$ is a $q$-integrable
  under $\Q$, $\cF_s$-measurable selection of $\bK_s^*$ for every
  $s\ge t$ and $Z_t\neq0$ a.s. We denote
  by $\cM_{t,T}^{q,w}(\Q)$ the set all $q$-integrable $t$-weakly
  consistent price systems under $\Q$, where $q\in[1,\infty]$.
\end{definition}

The following result characterises the prices under \SNR\ and \NARA\
conditions and so may be viewed as the Fundamental Theorem of Asset
Pricing in our framework. 

\begin{theorem}
  \label{coroEquivPropNARA}
  Assume the coherent conical setting and that the solvency sets are
  continuous from below.  Let $q$ be the conjugate of the number $p$ that stems
  from the definition of the acceptance sets.
  \begin{enumerate}[(i)]
  \item \NARA\ is equivalent to the fact that, for each $t$, there
    exists $Z\in\cM_{t,T}^{q,w}(\P)$ such that 
    \begin{equation}
      \label{eq:18}
      \E\langle Z_u,\eta_u\rangle\ge 0 \quad \text{for all }\; 
      \eta_u\in \Acc_{u-1,u}, \; u=t+1,\dots,T.
    \end{equation}
  \item If $\Int\bK_t^*\ne \emptyset$ a.s. for all $t$, then \SNR\ is
    equivalent to the fact that, for each $t$, there exists
    $Z\in\cM_{t,T}^{q,w}(\P)$ such that \eqref{eq:18} holds and 
    $Z_t\in\Lnot(\Int\bK_t^*,\cF_t)$.
  \end{enumerate}
\end{theorem}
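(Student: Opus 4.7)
The plan is to reduce each equivalence to the asymptotic closure characterisations of \NARA\ and \SNR\ provided by Theorem~\ref{EquivPropNARA}, and then deploy a multivariate Kreps--Yan separation argument. In the coherent conical setting, $\Att_{t,T}$ is a convex cone containing $-\Lnot(\R_+^d,\cF_T)$, so $\overline{\Att}_{t,T}^p$ is a closed convex cone in $\Lnot[p](\R^d,\cF_T)$ to which separation theorems apply.

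For the forward direction of (i), \NARA\ translates into $\overline{\Att}_{t,T}^p\cap\Lnot(\R_+^d,\cF_t)=\{0\}$. The plan is to apply the Kreps--Yan theorem in $\Lnot[p]$--$\Lnot[q]$ duality to obtain $Z_T\in\Lnot[q](\R^d,\cF_T)$ with $\E\langle Z_T,V\rangle\le 0$ for every $V\in\overline{\Att}_{t,T}^p$ and $\E\langle Z_T,\eta_t\rangle>0$ for every nontrivial $\eta_t\in\Lnot(\R_+^d,\cF_t)$. Setting $Z_u\eqdef\E(Z_T|\cF_u)$ produces a $\P$-martingale on $\{t,\dots,T\}$. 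Since $-\Lnot[p_{\cF_u}](\bK_u,\cF_u)\subset\Att_{t,T}$, the tower property combined with Lemma~\ref{lem:h-esssup} applied to the support function of $\bK_u$ forces $Z_u\in\Lnot(\bK_u^*,\cF_u)$ a.s., while the strict positivity of $Z_t$ on $\Lnot(\R_+^d,\cF_t)$ identifies $\E(Z_T|\cF_t)$ as a nonzero selection of $\bK_t^*$. An analogous computation with $-\Acc_{u-1,u}\subset\Att_{t,T}$, applied to truncations $\eta_u\one_{\|\eta_u\|\le n}\in\Acc_{u-1,u}$ (valid by the integrability property of Definition~\ref{DefiRiskmeasure}) and followed by monotone passage to the limit, produces \eqref{eq:18}.

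For the converse of (i), given $Z$ with the stated properties, every $V=-\sum_{u=t}^T k_u-\sum_{u=t+1}^T\eta_u\in\Att_{t,T}^p$ will satisfy
\begin{displaymath}
  \E\langle Z_T,V\rangle=-\sum_{u=t}^T\E\langle Z_u,k_u\rangle-\sum_{u=t+1}^T\E\langle Z_u,\eta_u\rangle\le 0
\end{displaymath}
by the martingale property, $Z_u\in\bK_u^*$, and \eqref{eq:18}, and the inequality survives $\Lnot[p]$-closure. For $\zeta_t\in\overline{\Att}_{t,T}^p\cap\Lnot(\R_+^d,\cF_t)$, the above gives $\E\langle Z_t,\zeta_t\rangle=0$, which combined with the nontriviality of $Z_t\in\bK_t^*$ and an $\cF_t$-decomposable exhaustion over the coordinates of $\zeta_t$ forces $\zeta_t=0$. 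Part (ii) will mirror (i): Theorem~\ref{EquivPropNARA} reduces \SNR\ to $\overline{\Att}_{t,T}^p\cap\Lnot(\bK_t,\cF_t)=\{0\}$; strict separation from the larger cone $\Lnot(\bK_t,\cF_t)\setminus\{0\}$ (in place of $\Lnot(\R_+^d,\cF_t)\setminus\{0\}$) forces $Z_t\in\Lnot(\Int\bK_t^*,\cF_t)$, and the converse then follows because $Z_t\in\Int\bK_t^*$ a.s.\ combined with $\E\langle Z_T,V\rangle\le 0$ leaves no room for nontrivial $k_t\in\overline{\Att}_{t,T}^p\cap\Lnot(\bK_t,\cF_t)$.

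The main obstacle is the Kreps--Yan step itself: one must produce a single $\cF_T$-measurable separator $Z_T\in\Lnot[q]$ whose $\cF_t$-conditional expectation lies in $\bK_t^*\setminus\{0\}$, respectively in $\Int\bK_t^*$, uniformly over all nontrivial test vectors. For $p=\infty$, $q=1$, this is essentially the multivariate Grigoriev--Kabanov--Stricker approach used throughout \cite[Ch.~3]{kab:saf09}; for $p\in[1,\infty)$ it requires an exhaustion argument, combining Hahn--Banach separation of the closed convex cone $\overline{\Att}_{t,T}^p$ from each nontrivial candidate in $\Lnot(\R_+^d,\cF_t)$ (or $\Lnot(\bK_t,\cF_t)$) with a countable concatenation over a maximal $\cF_t$-decomposable family of separators. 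A secondary technical point is the well-definedness of the pairings $\E\langle Z_T,\eta_u\rangle$ when $\eta_u\in\Acc_{u-1,u}$ is only generalised-conditionally integrable, handled via the truncation mentioned above together with the closedness and upper-set properties of $\Acc_{u-1,u}$.
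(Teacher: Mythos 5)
Your proposal follows essentially the same route as the paper's proof: Theorem~\ref{EquivPropNARA} reduces \NARA\ and \SNR\ to statements about $\overline{\Att}_{t,T}^p$, a Hahn--Banach/Kreps--Yan separation produces the weakly consistent price system (the paper is equally terse here, and defers part (ii) to the argument of \cite[Th.~4.1]{lep10}), and the converse is the telescoping martingale computation over $-k_u$ and $-\eta_u$. The one step to repair is your claim that $\eta_u\one_{\|\eta_u\|\le n}\in\Acc_{u-1,u}$ by the integrability property: the truncation event is $\cF_u$-measurable, so neither $\cF_{u-1}$-decomposability nor the upper-set property justifies it; the paper instead writes $\eta_u=\eta'_u+\eta''_u$ with $\eta'_u\in\Acc_{u-1,u}\cap\Lnot[p_{\cF_{u-1}}](\R^d,\cF_u)$, merges $\eta''_u\ge0$ into $k_u$, and localises over an $\cF_{u-1}$-measurable partition $(B^i)$ on which $\eta'_u\one_{B^i}$ is genuinely $p$-integrable and still acceptable --- this partition device, run through a backward induction, is what you should substitute for the truncation in both directions.
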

\begin{proof} 
  (i) The existence of $Z\in \cM_{t,T}^{q,w}(\P)$ such that $\E
  \langle Z_t,\eta\rangle\ge 0$ for all $\eta\in \Acc_{t,t+1}+\cdots+
  \Acc_{T-1,T}$ is a direct consequence of the Hahn--Banach separation
  theorem and Theorem~\ref{EquivPropNARA}(i), since we may take $p=1$.

  To prove the converse implication, for each $t$, assume the existence of
  $Z\in\cM_{t,T}^{q,w}(\P)$ and consider $x_T\in\Att_{t,T}^p$. Then
  \begin{displaymath}
    x_T=-k_t-(k_{t+1}+\eta_{t+1})-\cdots-(k_T+\eta_T),
  \end{displaymath}
  where $\eta_s\in\Acc_{s-1,s}$ and $k_s\in \Lnot(\bK_s,\cF_s)$ for
  $s\ge t$. Since $\eta_s=\eta'_s+\eta''_s$ with
  $\eta'_s\in\Acc_{s-1,s}\cap\Ltg{s-1}{s}$ and
  $\eta''_s\in\Lnot(\R^d_+,\cF_s)$, we may merge $\eta''_s$ and $k_s$
  and suppose without loss of generality that $\eta_s=\eta'_s$.

  Using the backward induction on $t\le T$, we show that $\E\langle
  Z_T,x_T\rangle\le 0$. If $x_T=-k_{T-1}-k_T-\eta_T$, this is trivial.
  Since $\eta_{t+1},k_{t}\in \Ltg{t}{t+1}$, there
  exists a partition $(B_t^i)_{i\ge 1}$ from $\cF_t$ such that
  $\eta_{t+1}\one_{B_t^i},k_t\one_{B_t^i}\in\Lnot[p](\R^d,\cF_t)$ for all
  $i\ge 1$. Then,
  \begin{displaymath}
    x_{t+1}^i\eqdef(-k_{t+1}-\cdots-k_T-\eta_{t+2}-\cdots-\eta_{T})
    \one_{B_t^i}\in\Att_{t+1,T}^p,\quad i\geq1.
  \end{displaymath}
  Moreover,
  \begin{align*}
    \E \langle Z_T,x_T\rangle
    &=\sum_{i=1}^{\infty}\E \langle Z_T,x_{t+1}^i\rangle
    +\sum_{i=1}^{\infty}\E \langle Z_T,-k_t\one_{B_t^i}\rangle
    +\E \langle Z_t,-\eta_{t+1}\rangle\one_{B_t^i},\\
    &= \sum_{i=1}^{\infty}\E \langle Z_T,x_{t+1}^i\rangle
    +\sum_{i=1}^{\infty}\E \langle Z_t,-k_t\one_{B_t^i}\rangle
    +\E \langle Z_{t+1},-\eta_{t+1}\one_{B_t^i}\rangle\\
    &\le  \sum_{i=1}^{\infty}\E \langle Z_T,x_{t+1}^i\rangle.
  \end{align*}
  The induction hypothesis yields that $\E\langle
  Z_T,x_{t+1}^i\rangle\le 0$. Hence, $\E\langle
  Z_T,x_T\rangle\le0$. Therefore, $\E\langle Z_T,x_T\rangle\le 0$ for
  all $x_T\in\overline{\Att}_{t,T}^p$. In particular, if
  $x_T=x_t\in\Lnot(\R^d_+,\cF_t)$, then $\E\langle Z_t,x_t\rangle\le0$
  and finally $\E\langle Z_t,x_t\rangle= 0$. Since $Z_t\in\Int\R^d_+$,
  we have $x_T=0$, i.e. \NARA\ holds by
  Theorem~\ref{EquivPropNARA}(i).

  (ii) Replicate the proof of (i) using the Hahn-Banach theorem and
  following the arguments of \cite[Th.~4.1]{lep10} in order to
  construct $Z\in\cM_{t,T}^{q,w}(\P)$ such that
  $Z_t\in\Lnot(\Int\bK_t^*,\cF_t)$.
\end{proof}

\begin{remark}
  Condition \eqref{eq:18} can be equivalently written in terms of the
  conditional expectation as $\E(\langle
  Z_u,\eta_u\rangle|\cF_{u-1})\ge 0$ a.s., which also corresponds to
  the duality pairing in modules, see \cite{fil:kup:vog09}. Indeed,
  suppose that \eqref{eq:18} holds.  Then $\E(\langle Z_u,\eta_u
  \one_{A_{u-1}}\rangle) \ge 0$ for all $A_{u-1}\in
  \cF_{u-1}$. Therefore, $\E(\langle Z_u,\eta_u\rangle | \cF_{u-1})\ge
  0$. The opposite implication is obvious. In other words,
  \eqref{eq:18} means that $Z_u$ belongs to the positive dual of
  $\Acc_{u-1,u}$. 
\end{remark}

If the acceptance sets with $p=\infty$ are generated by convex
families $\cZ_{t,s}$ (see Example~\ref{dyn-ex-2}), then \eqref{eq:18}
means that $Z_t$ belongs to the closure of $\cZ_{t,s}$ with respect to
the $\cF_t$-bounded convergence in probability. If $\bK_t$ are all
half-spaces (in the frictionless setting), then \NARA\ is equivalent
to the existence of a martingale $Z_u=\phi_u S_u$, where $S_u$ is the
price vector, such that \eqref{eq:18} holds.

\section{Good Deals hedging}
\label{sec:no-good-deals-1}

Assume that $\Acc_{t,s}$ consists of random vectors
$(\eta_s,0,\dots,0)\in\Lnot[p](\R^d,\cF_s)$ with all vanishing
components except the first one and such that $\risk_{t,s}(\eta_s)\leq 0$ for a
univariate dynamic risk measure $\risk_{t,s}$. This corresponds to the
case, when the acceptability at each step is assessed by calculating
the risk of a portfolio expressed in the units of the first asset,
most importantly, cash. An arbitrage opportunity in this setting is
called a Good Deal, see \cite{cher07fs}.

For simplicity, consider the one period setting with zero interest
rate and two assets exchangeable without transaction costs, so that
without loss of generality the first asset is assumed to be cash and
the second one is a risky asset priced at $S_t$ for $t=0,1$. Let $\xi$
be the cash value of a terminal claim. If $x_0$ is the initial
endowment (in cash), then the terminal value of a portfolio is
\begin{displaymath}
  V_1=(x_0,0)+(-k_0S_0,k_0)+(-k_1S_1,k_1)-(\eta',\eta''),
\end{displaymath}
where $\eta'$ and $\eta''$ are acceptable with respect to some static
convex risk measure $\risk$, meaning that $\risk(\eta')\leq 0$ and
$\risk(\eta'')\leq0$. Given the choice of the acceptance set
$\Acc_{0,1}$, we have 
$\eta''=0$, 
so that $V_1$ suffices to pay the claim if
\begin{displaymath}
  x_0+k_0(S_1-S_0)-\eta'\geq \xi.
\end{displaymath}
The smallest value of $x_0$ that ensures the existence of an
acceptable $\eta'=-\xi+x_0+k_0(S_1-S_0)$, satisfying the above
inequality, equals the infimum of $\risk(k_0(S_1-S_0)-\xi)$ over all
deterministic $k_0\in\R$. For instance, the zero claim $\xi=0$ can be
hedged with a negative initial capital if $\risk(S_1-S_0)<0$ or
$\risk(S_0-S_1)<0$, and this means the existence of a Good Deal
arbitrage. If this is the case and the risk measure is coherent, then
also
\begin{equation}
  \label{eq:2a}
  \risk(k_0(S_1-S_0)-\xi)\leq \risk(k_0(S_1-S_0))+\risk(-\xi)<0
\end{equation}
for sufficiently large positive $k_0$ if $\risk(S_1-S_0)<0$ (or
negative $k_0$ if $\risk(S_0-S_1)<0$), meaning that any claim with
finite $\risk(-\xi)$ can be also hedged with a negative initial
investment.  In other words, the No Good Deals (NGD) Arbitrage
condition becomes
\begin{displaymath}
  \risk(S_1/S_0)\geq -1 \quad \text{and}\quad \risk(-S_1/S_0)\geq 1. 
\end{displaymath}

Our setting is more general as the Good Deals hedging, since it allows
for more general acceptance sets and eliminates the prescribed choice
of the single asset in order to assess the acceptability. As a result,
the no arbitrage conditions become stronger and the superhedging price
declines.  To illustrate this, consider the above two assets one
period setting with the acceptance set $\Acc_{0,1}$ that consists of
all $(\eta',\eta'')$ such that $\risk(\eta')\leq 0$ and
$\risk(\eta'')\leq 0$. By allowing a non-trivial $\eta''$, it is
possible to decrease the price of a terminal cash claim $\xi$. For
this, note $\xi$ can be paid if
\begin{align*}
  \begin{cases}
    x_0-k_0S_0-k_1S_1-\eta'&\geq \xi\\
    k_0+k_1-\eta'' &\geq 0
  \end{cases}
\end{align*}
for some deterministic $k_0$, $\cF_1$-measurable $k_1$, and acceptable
$\eta',\eta''$. This increases the hedging possibilities and so leads
to a decrease of the superhedging price, but also creates extra
arbitrage opportunities. In particular, considering
$(\eta'.\eta'')\in\Acc_{0,1}$ with $\eta'=0$, the arbitrage becomes
possible if $\risk((k_0(S_1-S_0)+x_0)/S_1)\leq 0$ for some $x_0<0$ and
$K_0\in\R$. By letting $x_0$ increase to zero, we see that the
necessary no arbitrage condition in addition to \eqref{eq:2a} yields
that
\begin{equation}
  \label{eq:2b}
  \risk(S_0/S_1)\geq -1 \quad \text{and}\quad \risk(-S_0/S_1)\geq 1. 
\end{equation}
It corresponds to the fact that in two assets, the position expressed
in one of them may be not acceptable, while the position expressed in
the other one may be acceptable. The necessary and sufficient no
arbitrage condition is stronger and should also include all possible
combinations of the two assets.

Assume that $\xi=(S_1-K)^+$ for some $K>0$ and that the support of
$S_1$ is the whole half-line $(0,\infty)$. If $\risk(X)=\esssup[\cF_0](-X)$,
i.e. when the acceptable positions are non-negative random variables,
then the minimal price
\begin{displaymath}
  x_0=\inf_{k_0\in \R}\risk(k_0(S_1-S_0)-\xi)
\end{displaymath}
equals $S_0$. If $\risk$ is non-trivial, we have $x_0\le
S_0+\risk(S_1-\xi)$. Note that $S_1-\xi=S_1\wedge K$, so that $x_0\le
S_0+\risk(S_1)\wedge K$ and, finally, $x_0\le S_0-\risk(S_1)$, where
$\risk(S_1)<0$ given that $S_1>0$ and $\risk$ is non-trivial. This
simple example illustrates the decrease of the super-hedging price in
presence of a non-trivial risk measure.  

\begin{example}
  Assume that the risk measure $\risk$ is the negative essential
  infimum, that is, consider the setting of conditional cores from
  Section~\ref{sec:conditional-core-as}. Then the NGD arbitrage is not
  possible if $\essinf[\cF_0] S_1\leq S_0\leq \esssup[\cF_0] S_1$. With this choice
  of the risk measure, the NGD condition coincides with the \SNR\
  condition, see Theorem~\ref{NAS-2d}. 
\end{example}

\section{Conditional core as risk measure}
\label{sec:conditional-core-as}

Assume that $p=\infty$ and $\Acc_{t,s}=\Lnot(\R_+^d,\cF_s)$ for all
$0\leq t\leq s\leq T$, so that
$\rhonot_{t,s}(\Xi)=\Xi\cap\Lnot(\R^d,\cF_t)$ for any upper set
$\Xi\subset\Lnot(\R^d,\cF_s)$. If $\bX$ is an upper random closed set,
then
\begin{displaymath}
  \rhonot_{t,s}(\bX)=\rhos_{t,s}(\bX)=\cm(\bX|\cF_t),
\end{displaymath}
where the latter notation designates the largest $\cF_t$-measurable
random closed subset of $\bX$, called the \emph{conditional core} of
$\bX$, see \cite[Def.~4.1]{lep:mol19}.  An
acceptable portfolio process 
is characterised by $V_{t-1}-V_t\in\bK_t$ a.s. for $t=1,\dots,T$.
Then $\Att_{t,s}$ becomes the sum of $\Lnot(-\bK_u,\cF_u)$ for
$u=t,\dots,s$, exactly like in the classical theory of markets with
transaction costs \cite{kab:saf09}.
For claim $\xi$, the set $\Xi_t^\xi$ defined in
Section~\ref{sec:hedg-with-coher} becomes the set of superhedging
prices that was used in \cite{loeh:rud14} to define a risk measure of
$\xi$.

The classical no arbitrage condition \NAS\ (no strict arbitrage
opportunity at any time, see \cite[Sec.~3.1.4]{kab:saf09}) then
becomes \eqref{eq:9}; \SNA\ (strong no arbitrage, see Condition~(iii)
in \cite[Sec.~3.2.2]{kab:saf09}) then becomes the special case of
\SNAR.

\begin{theorem} 
  \label{NAS-weakArbitrage}
  Suppose that the solvency sets $(\bK_t)_{t=0,\dots,T}$ are strictly
  proper. Then \SNR, \NAS\ and \SNA\ are all equivalent and are also
  equivalent to each of the following conditions.
  \begin{enumerate}[(i)]
  \item $\overline{\Att}_{t,T}^p\cap \Lnot(\bK_t,\cF_t)=\{0\}$, for
    all $t\le T-1$.
  \item $\Att_{t,T}$ is closed in $\Lnot$ and $\Att_{t,T}\cap
    \Lnot(\bK_t,\cF_t)=\{0\}$, for all $t\le T-1$.
  \end{enumerate}
\end{theorem}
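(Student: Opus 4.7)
The plan is to exploit the fact that, in the conditional core setting, $\Acc_{t-1,t}=\Lnot(\R_+^d,\cF_t)$ is contained in $\Lnot(\bK_t,\cF_t)$, because $\bK_t$ is an upper set containing $0$ and hence contains $\R_+^d$ a.s. Thus every acceptable increment can be absorbed into a solvent one, and \SNAR\ collapses to the classical \SNA\ formulated for proper solvency sets. Strict properness furthermore gives $\bK_t^0=\tilde\bK_t=\{0\}$, so conditions \eqref{eq:9} and \eqref{eq:10} from Lemma~\ref{lemma:sna-nas-risk} read as $\Att_{0,t}\cap\Lnot(\bK_t,\cF_t)=\{0\}$ and $\Att_{t,T}\cap\Lnot(\bK_t,\cF_t)=\{0\}$, the first of which is \NAS.

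First I would verify that the trivial acceptance sets $\Lnot(\R_+^d,\cF_s)$ are continuous from below in the sense of Definition~\ref{ContinuityfromAbove}: for any sequence $\xi^n_s\uparrow\xi_s$ with $\xi_s\geq0$ a.s., taking $\gamma_t^n\eqdef\esssup[\cF_t](-\xi_s^n)^+$ yields an $\cF_t$-measurable sequence in $\Lnot(\R_+^d,\cF_t)$ with $\gamma_t^n+\xi_s^n\geq0$ a.s. and $\gamma_t^n\downarrow0$ a.s., with $\gamma_t^1\le -\essinf[\cF_t]\xi^1_s$ when the latter is finite. This unlocks both Theorem~\ref{EquivPropNARA}(ii), which gives \SNR\ $\Leftrightarrow$ (i) directly, and Theorem~\ref{thr:snr-bnr-nra-ii}. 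Next, Lemma~\ref{lemma:sna-nas-risk} applies: condition \eqref{eq:8} becomes $\Lnot(-\bK_t,\cF_t)\cap\Lnot(\R_+^d,\cF_t)=\{0\}$, which is immediate from $\bK_t\cap\R_-^d=\{0\}$; and the acceptance sets are trivially strictly proper. Consequently \SNAR\ $\Leftrightarrow$ \NAS\ $\Leftrightarrow$ \eqref{eq:10}, and \SNAR$=$\SNA\ in our setting, closing the loop of \SNR, \NAS, \SNA, (i).

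The implication (ii) $\Rightarrow$ (i) is immediate: if $\Att_{t,T}$ is $\Lnot$-closed, then $\Att_{t,T}\cap\Lnot[p]$ is also $\Lnot[p]$-closed (convergence in $\Lnot[p]$ implies convergence in probability), hence $\overline{\Att}_{t,T}^p\subseteq\Att_{t,T}$ and (i) follows from the algebraic intersection part of (ii).

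The remaining step, and the main obstacle, is to deduce the closedness of $\Att_{t,T}$ in $\Lnot$ from any (equivalently, all) of the above no risk arbitrage conditions. I would proceed by backward induction on $t$, adapting the normalisation argument of \cite[Thm.~3.2.2]{kab:saf09} in the spirit of the proof of Theorem~\ref{thr:snr-bnr-nra-ii}. Given a sequence $V^n\in\Att_{t,T}$ converging a.s.\ to $V$, decompose $V^n=-k^n_t+W^n$ with $k^n_t\in\Lnot(\bK_t,\cF_t)$ and $W^n\in\Att_{t+1,T}$, and split $\Omega$ into the $\cF_t$-measurable sets $A=\{\liminf_n\|k^n_t\|<\infty\}$ and its complement. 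On $A$, use \cite[Lemma~2.1.2]{kab:saf09} to extract a subsequence with $k^n_t\to k_t\in\Lnot(\bK_t,\cF_t)$, whence $W^n\to V+k_t$; the inductive hypothesis that $\Att_{t+1,T}$ is closed then places $V+k_t\in\Att_{t+1,T}$, so $V\in\Att_{t,T}$. On $\Omega\setminus A$, set $k^n_t=W^n=V=0$ on $A$ by $\cF_t$-decomposability and renormalise by $(1+\|k^n_t\|)^{-1}$; the limit produces $\tilde k_t\in\Lnot(\bK_t,\cF_t)$ with $\|\tilde k_t\|=1$ on $\Omega\setminus A$ and a corresponding $-\tilde k_t\in\overline{\Att}_{t+1,T}^p\cap\Lnot(\bK_t,\cF_t)$ (or in $\Att_{t+1,T}$ by induction), contradicting (i) applied at time $t+1$. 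The technical heart of this step is the scalar normalisation coupled with strict properness, which rules out the escape-to-infinity scenario; this is precisely the argument behind Theorem~\ref{thr:snr-bnr-nra-ii} with $\xi=0$, and it is the only place where strict properness (rather than mere properness) is used.
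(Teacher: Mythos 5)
Your proof is correct and follows essentially the same route as the paper: \SNR\ $\Leftrightarrow$ (i) via Theorem~\ref{EquivPropNARA}(ii), \NAS\ $\Leftrightarrow$ \SNA\ via Lemma~\ref{lemma:sna-nas-risk} with \eqref{eq:8} trivially satisfied, and the closedness of $\Att_{t,T}$ by backward induction combined with the normalisation procedure (the paper additionally routes this through the recession-cone argument of \cite[Lemma~3.2.8]{kab:saf09}, but your unified induction achieves the same). The only blemishes are cosmetic: in the escape-to-infinity case the limit of the normalised $W^n$ is $+\tilde k_t$ rather than $-\tilde k_t$, and the resulting element $\tilde k_t\in\Att_{t+1,T}\cap\Lnot(\bK_t,\cF_t)\subseteq\overline{\Att}_{t,T}^p\cap\Lnot(\bK_t,\cF_t)$ contradicts (i) at time $t$, not $t+1$.
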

\begin{proof}
  \SNR\ is equivalent to (i) by Theorem~\ref{EquivPropNARA}(ii).  The
  equivalence of \NAS\ and \SNA\ follows from
  Lemma~\ref{lemma:sna-nas-risk} given that \eqref{eq:8} trivially
  holds.

  The implication (i)$\Rightarrow$(ii) is simple to show by
  induction. First, $\Att_{T,T}$ is closed. Assume that
  $-k_t^n-\cdots-k_T^n\to \xi$ a.s. for $k^n_u\in\Lnot(\bK_u,\cF_u)$,
  $u\ge t$. On the set $\{\liminf_n\|k_t^n\|=\infty\}$, we use the
  normalisation procedure to arrive at a contradiction with
  (i). Otherwise, suppose that $-k_t^n\to -k_t\in -\bK_t$, so that we
  may use the induction hypothesis to conclude.

  In order to derive the closedness of $\Att_{t,r}$ under \SNA, it
  suffices to follow the proof of
  \cite[Lemma~3.2.8]{kab:saf09}. Indeed, since $\bK_t^0$ is a linear
  space, the recession cone 
  \begin{displaymath} 
    \bK_t^\infty\eqdef \bigcap_{\alpha>0}\alpha \bK_t=\{x\in \R^d:\;
    \bK_t+\alpha x\subseteq \bK_t\;\forall \alpha>0\}
  \end{displaymath}
  satisfies $\bK_t^0\subseteq \bK_t^{\infty}$, see \cite{pen:pen10}.
  Therefore, $k_t+\alpha x\in \bK_t$ for all
  $k_t\in \bK_t$, $x\in \bK_t^0$, and all $\alpha\in\R$.  Furthermore,
  \SNA\ trivially implies $\Att_{t,T}\cap
  \Lnot(\bK_t,\cF_t)=\{0\}$ for all $t$.

  In order to show that (ii) implies \NAS, assume
  \begin{displaymath}
    -k_0-\cdots-k_t=\tilde k_t\in \Att_{0,t}\cap
    \Lnot(\bK_t,\cF_t).
  \end{displaymath}
  Then $k_0\in \Att_{0,T}\cap \Lnot(\bK_0,\cF_0)$, i.e. $k_0=0$ by
  (ii). Similarly, $k_1=\cdots=k_{t-1}=0$, so that $-k_t=\tilde
  k_t=0$, since $\bK_t$ is strictly proper. Thus, \NAS\ holds.

  At last, \NAS\ yields \SNA, so that $\Att_{t,T}$ is closed in
  $\Lnot$. Finally, \SNA\ yields \eqref{eq:10} and so
  $\Att_{t,T}\cap\Lnot(\bK_t,\cF_t)=\{0\}$, that is, (i) holds. 
\end{proof}

\NAT\ (no arbitrage opportunity of the second kind) from \cite{ras08a}
and \cite[p.~135]{kab:saf09} has the same formulation as \NRAT. 

\begin{lemma}
  \label{lemma:nat-implies-inclusion} 
  Assume that the solvency sets are cones. 
  \begin{enumerate}[(i)]
  \item \NAT\ is equivalent to $\Xi_t^0=\Lnot(\bK_t,\cF_t)$ for all
    $t\le T$.
  \item \NAT\ is equivalent to
    \begin{equation}
      \label{eq:ras-2}
      \cm(\bK_t|\cF_{t-1})\subseteq \bK_{t-1},\qquad t=1,\dots,T.
    \end{equation}
  \item If the solvency sets are strictly proper, then \NAT\ implies
    \SNR.
  \end{enumerate}
\end{lemma}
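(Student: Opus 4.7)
The plan is to treat the three parts in order, exploiting the fact that in this section $\Acc_{t-1,t}=\Lnot(\R_+^d,\cF_t)\subseteq\Lnot(\bK_t,\cF_t)$, so the sum $\Lnot(\bK_t,\cF_t)+\Acc_{t-1,t}$ appearing in \NAT\ and in Lemma~\ref{lemma:NRAT} collapses to $\Lnot(\bK_t,\cF_t)$. Part (i) then falls out directly from Lemma~\ref{lemma:NRAT} combined with Lemma~\ref{lemma:st-price}(iv): Lemma~\ref{lemma:NRAT} reformulates \NAT\ as $\Xi_t^\xi\subseteq\Lnot(\bK_t,\cF_t)$ for every $\xi\in\Lnot(\bK_T,\cF_T)$, so specialising to $\xi=0$ yields the non-trivial inclusion $\Xi_t^0\subseteq\Lnot(\bK_t,\cF_t)$, while the reverse inclusion is immediate from \eqref{eq:Acts}. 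For the converse, once $\Xi_t^0=\Lnot(\bK_t,\cF_t)$, Lemma~\ref{lemma:st-price}(iv) together with conicality of $\bK_T$ gives $\Xi_t^\xi\subseteq\Xi_t^0=\Lnot(\bK_t,\cF_t)$ for every $\xi\in\Lnot(\bK_T,\cF_T)$, which is precisely the criterion of Lemma~\ref{lemma:NRAT}.

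For (ii), the direction \NAT\ $\Rightarrow$ \eqref{eq:ras-2} is a one-step test. If $x\in\Lnot(\cm(\bK_t|\cF_{t-1}),\cF_{t-1})$, then $x$ is both $\cF_{t-1}$-measurable and an almost-sure selection of $\bK_t$, so taking $k_{t-1}=0$, $k_t=x$ and $k_u=0$ for $u>t$ in \eqref{eq:Acts} shows that $-x\in\Att_{t-1,T}$; hence $0\in(x+\Att_{t-1,T})\cap\Lnot(\bK_T,\cF_T)$ and \NAT\ forces $x\in\Lnot(\bK_{t-1},\cF_{t-1})$. For the converse, first reduce any witness of $(\eta_t+\Att_{t,T})\cap\Lnot(\bK_T,\cF_T)\ne\emptyset$ to the canonical form $\eta_t=k_t+k_{t+1}+\cdots+k_T$ with $k_u\in\Lnot(\bK_u,\cF_u)$ by absorbing the terminal $\bK_T$-contribution into $k_T$ using the conicality of $\bK_T$. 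Then I would run a backward induction on $T-t$: the base case $T=t$ is trivial, and in the step set $\xi_{t+1}\eqdef\eta_t-k_t=k_{t+1}+\cdots+k_T$. Since $\eta_t$ and $k_t$ are both $\cF_t$-measurable, so is $\xi_{t+1}$, and in particular it is $\cF_{t+1}$-measurable; the induction hypothesis at time $t+1$ puts it in $\Lnot(\bK_{t+1},\cF_{t+1})$, its $\cF_t$-measurability then places it in $\Lnot(\cm(\bK_{t+1}|\cF_t),\cF_t)$, and the standing assumption \eqref{eq:ras-2} yields $\xi_{t+1}\in\Lnot(\bK_t,\cF_t)$. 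Since $\bK_t$ is a convex cone, $\eta_t=k_t+\xi_{t+1}\in\bK_t$ almost surely.

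Part (iii) is a short deduction. Under \NAT, part (i) yields $\Xi_t^0=\Lnot(\bK_t,\cF_t)$, which is already closed in $\Lnot$, so the sandwich $\Xi_t^0\subseteq\hat\Xi_t^0\subseteq\cl_0\Xi_t^0$ forces $\hat\Xi_t^0=\Lnot(\bK_t,\cF_t)$. Intersecting with $\Lnot(-\bK_t,\cF_t)$ gives $\Lnot(\tilde\bK_t,\cF_t)$, which equals $\Lnot(\bK_t^0,\cF_t)$ because $\bK_t$ is a cone; under strict properness this set is in fact trivial, and \SNR\ follows. I expect the main obstacle to be the inductive step in part (ii) $\Leftarrow$: although $\xi_{t+1}$ is naturally presented as a sum of $\cF_u$-measurable terms for $u>t$, the crux is that its $\cF_t$-measurability inherited from $\xi_{t+1}=\eta_t-k_t$ is precisely what licences applying \eqref{eq:ras-2} at the correct slot, converting what might look like a cascading recursion into a clean one-step collapse.
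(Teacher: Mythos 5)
Your proof is correct and follows essentially the same route as the paper: part (i) via Lemma~\ref{lemma:NRAT} and Lemma~\ref{lemma:st-price}(iv) together with the collapse $\Lnot(\bK_t,\cF_t)+\Acc_{t-1,t}=\Lnot(\bK_t,\cF_t)$, part (ii) by a backward induction that reduces the tail sum to an $\cF_t$-measurable selection of $\bK_{t+1}$, hence of $\cm(\bK_{t+1}|\cF_t)$, and uses $\bK_t+\bK_t=\bK_t$, and part (iii) from the closedness of $\Lnot(\bK_t,\cF_t)$ and the sandwich $\Xi_t^0\subseteq\hat\Xi_t^0\subseteq\cl_0\Xi_t^0$. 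The only cosmetic differences are that in (ii) you manipulate explicit decompositions $k_t+\dots+k_T$ from \eqref{eq:Acts} where the paper runs the same induction through the recursion \eqref{RelPortfolio}, and that in the forward direction of (ii) you apply the raw definition of \NAT\ to an $x$ that need not lie in $\R_+^d$ — this should be routed through the reformulation \eqref{eq:13} (equivalently your part (i)), exactly as the paper does.
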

\begin{proof}
  (i) By Lemma~\ref{lemma:NRAT} and Lemma~\ref{lemma:st-price}(iv), 
  $\Xi_t^0=\Lnot(\bK_t,\cF_t)$ yields \eqref{eq:13}, and so implies
  \NAT. In the other direction, \NAT\ yields that $\Xi_t^0\subset
  \Lnot(\bK_t,\cF_t)+\Acc_{t-1,t}$, while \eqref{RelPortfolio} and the
  choice of $\Acc_{t-1,t}$ yields that
  $\Xi_t^0\supset\Lnot(\bK_t,\cF_t)$. 
  
  (ii) If $\Xi_t^0\subset\Lnot(\bK_t,\cF_t)$,  then  
  \begin{displaymath}
    \Xi_{t+1}^0\cap \Lnot(\R^d,\cF_t)\subset \Lnot(\bK_t,\cF_t)
  \end{displaymath}
  for all $t$ by \eqref{RelPortfolio}. Since
  $\Lnot(\bK_{t+1},\cF_{t+1})\subset\Xi_{t+1}^0$, we 
  obtain \eqref{eq:ras-2}.  

  If \eqref{eq:ras-2} holds, then
  \begin{displaymath}
    \Xi_{T-1}^0=\Lnot(\bK_{T-1},\cF_{T-1})
    +\cm(\bK_T|\cF_{T-1})\subset\Lnot(\bK_{T-1},\cF_{T-1}).
  \end{displaymath}
  Assume that $\Xi_s^0\subset\Lnot(\bK_s,\cF_s)$ for
  $s=t+1,\dots,T$. Then 
  \begin{align*}
    \Xi_t^0&=\Lnot(\bK_{t},\cF_{t})+(\Xi_{t+1}^0\cap\Lnot(\R^d,\cF_t))\\
    &\subset \Lnot(\bK_{t},\cF_{t})+\cm(K_{t+1}|\cF_t)\subset
    \Lnot(\bK_{t},\cF_{t}). 
  \end{align*}
  The proof is finished by the induction argument. 

  (iii) Since $\Xi_t^0$ is closed in probability under \SNR, we have 
  $\Xi_t^0=\hat\Xi_t^0$, and \SNR\ yields that
  $\Lnot(\bK_t,\cF_t)\cap\Lnot(-\bK_t,\cF_t)=\{0\}$, which is the case
  if the solvency sets are strictly proper. 
\end{proof}

For conical solvency sets satisfying $\cm(\bK_t^0|\cF_{t-1})\subseteq
\bK_{t-1}^0$, $t\le T$, in particular, for strictly proper ones, \NAS\
is equivalent to the existence of a $\Q$-martingale evolving in the
relative interiors of $(\bK_t^*)_{t=0,\dots,T}$ for a probability
measure $\Q$ equivalent to $\P$, see \cite[Th.~3.2.2]{kab:saf09}.
Such a martingale is called a \emph{strictly consistent price system}.
If $\Int\bK_t^*\neq\emptyset$ for all $t$, this result follows from
Theorem~\ref{coroEquivPropNARA}(ii).

Note that $\Xi_T^\xi=\Lnot(\bX_T^\xi,\cF_T)$ for $\bX_T=\xi+\bK_T$,
and $\Xi_{T-1}^\xi= \Lnot(\bX_{T-1}^\xi,\cF_{T-1})$ is the family of
selections for a (possibly non-closed) random set
$\bX_{T-1}^\xi=\bK_{T-1}+\cm(\bX_T^\xi|\cF_{T-1})$.  One needs
additional assumptions of the no arbitrage type in order to extend
this interpretation for $\Xi_t^\xi$ with $t\leq T-2$.  Precisely the
sum above should be closed, so that $\cm(\bX_t^\xi|\cF_{t-1})$ exists
for $t\le T-1$, which makes it possible to apply
Lemma~\ref{A+B-measurable}.

\begin{theorem}
  \label{thr:Xt-1} 
  Assume that the solvency sets are strictly proper and \NAS\
  (equivalently, \SNR\ or \SNA) holds. Then
  $\Xi_t^\xi=\Lnot(\bX_t^\xi,\cF_t)$, where $\bX_t^\xi$ is an
  $\cF_t$-measurable random closed convex set, $t=0,\dots,T$, such
  that $\bX_T^\xi=\xi+\bK_T$, and
  \begin{equation}
    \label{eq:xt-1}
    \bX_{t}^\xi\eqdef \bK_{t}+\cm(\bX_{t+1}^\xi|\cF_{t}),
    \qquad t=T-1,\dots,0.
  \end{equation}
\end{theorem}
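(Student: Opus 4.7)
The plan is to argue by backward induction on $t$. The base case $t=T$ is immediate: $\bX_T^\xi=\xi+\bK_T$ is an $\cF_T$-measurable closed convex random set, and $\Xi_T^\xi=\Lnot(\bX_T^\xi,\cF_T)$ by definition.

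For the inductive step, suppose the statement holds at $t+1$. Since throughout Section~\ref{sec:conditional-core-as} the acceptance sets equal $\Lnot(\R_+^d,\cF_{t+1})$, the selection risk measure collapses to an intersection:
\begin{displaymath}
\rhonot_{t,t+1}(\Xi_{t+1}^\xi)=\Xi_{t+1}^\xi\cap\Lnot(\R^d,\cF_t)=\Lnot(\bX_{t+1}^\xi,\cF_t).
\end{displaymath}
Lemma~\ref{Exist-closed case} guarantees that $\cm(\bX_{t+1}^\xi|\cF_t)$ exists as a closed convex $\cF_t$-measurable random set, and Lemma~\ref{H-meas-represent-m(X|H)exists} rewrites the preceding family as $\Lnot(\cm(\bX_{t+1}^\xi|\cF_t),\cF_t)$. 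Feeding this into the recursion~\eqref{RelPortfolio} and invoking Lemma~\ref{A+B-measurable} yields the algebraic identity
\begin{displaymath}
\Xi_t^\xi=\Lnot(\bK_t,\cF_t)+\Lnot(\cm(\bX_{t+1}^\xi|\cF_t),\cF_t)=\Lnot(\bK_t+\cm(\bX_{t+1}^\xi|\cF_t),\cF_t),
\end{displaymath}
so that $\bK_t+\cm(\bX_{t+1}^\xi|\cF_t)$ is an $\cF_t$-measurable random convex set (convex as a sum of convex sets).

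To promote the sum to a random \emph{closed} convex set, I would exploit \NAS\ and strict properness. Theorem~\ref{NAS-weakArbitrage} gives that $\Att_{t,T}$ is closed in $\Lnot$, which combined with Lemma~\ref{lemma:st-price}(iii), namely $\Xi_t^\xi=(-\Att_{t,T}+\xi)\cap\Lnot(\R^d,\cF_t)$, forces $\Xi_t^\xi$ to be $\Lnot$-closed. Since $\Xi_t^\xi$ is also $\cF_t$-decomposable, Theorem~\ref{ReprDecompL0} produces an $\cF_t$-measurable random closed set $\bY_t$ with $\Xi_t^\xi=\Lnot(\bY_t,\cF_t)$, and Proposition~\ref{MeasurVersClos} then identifies $\bY_t$ with the pointwise closure of $\bK_t+\cm(\bX_{t+1}^\xi|\cF_t)$ almost surely.

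The technical core, and what I expect to be the main obstacle, is showing this outer closure is redundant, i.e.\ that $\bK_t+\cm(\bX_{t+1}^\xi|\cF_t)$ is already a.s.\ closed. Here I would adapt the Kabanov compactness trick of \cite[Lem.~3.2.8]{kab:saf09}: given $k^n+y^n\to z$ with $k^n\in\bK_t(\omega)$ and $y^n\in\cm(\bX_{t+1}^\xi|\cF_t)(\omega)$, either $\liminf_n\|k^n\|<\infty$ (in which case passing to convergent subsequences in the two closed sets immediately yields $z=k+y$), or $\|k^n\|\to\infty$; in the latter case, normalising by $\|k^n\|$ exhibits a unit-norm $k^*\in\bK_t^\infty$ with $-k^*$ in the recession cone of $\cm(\bX_{t+1}^\xi|\cF_t)$. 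Pulling $-k^*$ back through the inductive construction produces a non-trivial element of $\overline{\Att}_{t,T}^\infty\cap\Lnot(\bK_t,\cF_t)$, which Theorem~\ref{EquivPropNARA}(ii) forces to vanish under \NAS\ combined with strict properness, contradicting $\|k^*\|=1$. This forces $k^*=0$, establishing pointwise closedness and completing the induction with $\bX_t^\xi=\bK_t+\cm(\bX_{t+1}^\xi|\cF_t)$.
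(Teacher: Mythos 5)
Your proposal is correct and follows the same overall skeleton as the paper's proof (backward induction, identification of $\rhonot_{t,t+1}$ with the conditional core, $\Lnot$-closedness of $\Xi_t^\xi$ plus $\cF_t$-decomposability feeding Theorem~\ref{ReprDecompL0}, and Lemma~\ref{A+B-measurable} for the Minkowski sum), but it differs in two places worth noting. First, you obtain the $\Lnot$-closedness of $\Xi_t^\xi$ from Theorem~\ref{NAS-weakArbitrage}(ii) (closedness of $\Att_{t,T}$ in $\Lnot$ under \NAS\ and strict properness) combined with the representation $\Xi_t^\xi=(-\Att_{t,T}+\xi)\cap\Lnot(\R^d,\cF_t)$ of Lemma~\ref{lemma:st-price}(iii), whereas the paper routes this through the equivalence of \NAS\ with \SNR\ and then Theorem~\ref{thr:snr-bnr-nra-ii}; both are legitimate, and your route is arguably more self-contained within the conditional-core setting since it avoids re-running the normalisation argument. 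Second, your final ``technical core'' is heavier than necessary: once you know that $\Xi_t^\xi=\Lnot(\Gamma,\cF_t)$ is closed in $\Lnot$ for $\Gamma=\bK_t+\cm(\bX_{t+1}^\xi|\cF_t)$, Proposition~\ref{MeasurVersClos} gives $\Lnot(\cl\Gamma,\cF_t)=\cl_0\Lnot(\Gamma,\cF_t)=\Lnot(\Gamma,\cF_t)$, and since $\Gr(\cl\Gamma)\setminus\Gr\Gamma$ is product-measurable, a measurable selection of $\cl\Gamma\setminus\Gamma$ on a non-null set would contradict this equality of selection families; hence $\Gamma$ is already a.s.\ closed, with no need for the pointwise recession-cone analysis. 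Your sketched Kabanov-type argument could be completed, but as written it is the weakest link: the phrase ``pulling $-k^*$ back through the inductive construction'' hides real work (the exceptional set must be handled by measurable selections, and producing an element of $\overline{\Att}_{t,T}^p\cap\Lnot(\bK_t,\cF_t)$ from a recession direction of a non-conical $\Att_{t,T}$ requires the convexity-with-$0$ trick $a_n/n\in\Att_{t,T}$); I would replace it by the selection argument above.
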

\begin{proof} 
  It suffices to confirm the statement for $t=T-1$ and then use the
  induction. Indeed, by Theorem~\ref{NAS-weakArbitrage}, \NAS\ is equivalent
  to \SNR\ so that  Theorem~\ref{thr:snr-bnr-nra-ii} applies.  
  Since $\Xi_t^\xi$ is $\cF_{t}$-decomposable,
  Theorem~\ref{ReprDecompL0} yields the existence of an
  $\cF_{T-1}$-measurable closed set $\bX_{T-1}^{\xi}$ such that
  $\Xi_{T-1}^\xi=\Lnot(\bX_{T-1}^{\xi},\cF_{T-1})$. Since $\bX_T^\xi$
  is closed,
  \begin{align*}
    \Xi_{T-1}^\xi&=\Lnot(\bK_{T-1},\cF_{T-1})
    +\Lnot(\cm(\bX_T^\xi|\cF_{T-1}),\cF_{T-1}),\\
    &=\Lnot(\bK_{T-1}+\cm(\bX_T^\xi|\cF_{T-1}),\cF_{T-1})\\
    &=\Lnot(\bX_{T-1}^\xi,\cF_{T-1}),
  \end{align*}
  where $\bX_{T-1}^\xi$ is a random set by
  Lemma~\ref{A+B-measurable}. 
\end{proof}

\begin{proposition}
  \label{PropNA2-equiv} 
  Suppose that the solvency sets are strictly proper. Then \NAS\ holds
  if and only if $\Xi_t^0=\Lnot(\bX_t^0,\cF_t)$ for random closed sets
  $(\bX_t^0)_{t=0,\dots,T}$ such that $\bX_t^0\cap(-\bK_t)=\{0\}$
  a.s. for all $t$. In the conical case, the latter
  condition is equivalent to $\Int(\bX_t^0)^*\ne\emptyset$ for all
  $t$, and, under \NAS,
  \begin{equation}
    \label{eq:6}
    \Att_{t,T}=\sum_{s=t}^T\Lnot(-\bX_s^0,\cF_s), 
    \quad 0\leq t\le T,
  \end{equation}
  where $\bX_t^0$ is a strictly proper random closed convex cone for
  all $t$.
\end{proposition}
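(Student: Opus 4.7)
The approach is to reduce \NAS\ to \SNR\ and exploit the recursive structure of $\bX_t^0$. By Theorem~\ref{NAS-weakArbitrage}, under strictly proper solvency sets, \NAS\ is equivalent to \SNR. Assuming \SNR, Theorem~\ref{thr:Xt-1} applied with $\xi=0$ gives random closed convex sets $\bX_t^0$ with $\Xi_t^0=\Lnot(\bX_t^0,\cF_t)$ and $\bX_t^0=\bK_t+\cm(\bX_{t+1}^0|\cF_t)$. Since $\bX_t^0$ is closed, so is $\Xi_t^0$, and the sandwich $\Xi_t^0\subseteq\hat\Xi_t^0\subseteq\cl_0\Xi_t^0$ yields $\hat\Xi_t^0=\Xi_t^0$. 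The \SNR\ condition then reads $\Lnot(\bX_t^0\cap(-\bK_t),\cF_t)=\{0\}$ (using $\bK_t^0=\{0\}$ by strict properness), and since $\bX_t^0\cap(-\bK_t)$ is an $\cF_t$-measurable random closed set containing the origin, this forces $\bX_t^0\cap(-\bK_t)=\{0\}$ a.s. Conversely, if $\Xi_t^0=\Lnot(\bX_t^0,\cF_t)$ is $\Lnot$-closed and $\bX_t^0\cap(-\bK_t)=\{0\}$, these steps reverse to yield \SNR, hence \NAS.

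In the conical setting, we argue by backward induction on $t$ that $\bX_t^0$ is a strictly proper random closed convex cone. The base case $t=T$ is immediate from $\bX_T^0=\bK_T$. For the step, Proposition~\ref{dual-CondMin-max} gives that $\cm(\bX_{t+1}^0|\cF_t)$ is a closed convex cone, so $\bX_t^0$ is too. To obtain pointedness, take any $\cF_t$-measurable $x$ with $x,-x\in\bX_t^0$ and, via a measurable selection, decompose $x=k_1+c_1$ and $-x=k_2+c_2$ with $k_i\in\Lnot(\bK_t,\cF_t)$ and $c_i\in\Lnot(\cm(\bX_{t+1}^0|\cF_t),\cF_t)$. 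Then $c_1+c_2\in\cm(\bX_{t+1}^0|\cF_t)$ by the conical property, while $-(c_1+c_2)=k_1+k_2\in\bK_t$; the key observation is that the trivial decomposition $c_1+c_2=0+(c_1+c_2)$ places $c_1+c_2$ into $\bX_t^0$, so $c_1+c_2\in\bX_t^0\cap(-\bK_t)=\{0\}$. Strict properness of $\bK_t$ then forces $k_1=-k_2=0$, while $c_1=-c_2$ lies in the pointed cone $\cm(\bX_{t+1}^0|\cF_t)\subseteq\bX_{t+1}^0$ by the inductive hypothesis, so $c_1=c_2=0$ and $x=0$.

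The equivalence $\bX_t^0\cap(-\bK_t)=\{0\}\iff\Int(\bX_t^0)^*\neq\emptyset$ in the conical case follows from the standard fact that a closed convex cone has non-empty interior dual iff it is pointed, together with $\bK_t\subseteq\bX_t^0$ (giving the trivial direction) and the induction above (giving the other). For \eqref{eq:6}, Lemma~\ref{lemma:st-price}(iii) identifies $\Lnot(\bX_s^0,\cF_s)=\Xi_s^0=(-\Att_{s,T})\cap\Lnot(\R^d,\cF_s)$, so $-\Lnot(\bX_s^0,\cF_s)\subseteq\Att_{s,T}\subseteq\Att_{t,T}$ for $s\ge t$; summing and using that $\Att_{t,T}$ is closed under addition (being a sum of convex cones) yields $\sum_{s=t}^T\Lnot(-\bX_s^0,\cF_s)\subseteq\Att_{t,T}$, while the other inclusion follows from $\bK_s\subseteq\bX_s^0$. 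The main obstacle is the inductive step establishing strict properness of $\bX_t^0$: the crucial insight is that an $\cF_t$-measurable element of $\cm(\bX_{t+1}^0|\cF_t)$ automatically lies in $\bX_t^0$, which is what allows the hypothesis $\bX_t^0\cap(-\bK_t)=\{0\}$ to propagate through the induction.
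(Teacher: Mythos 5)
Your proof is correct and follows the same skeleton as the paper (reduce to the recursion of Theorem~\ref{thr:Xt-1} and exploit $\bX_t^0=\bK_t+\cm(\bX_{t+1}^0|\cF_t)$), but several of your sub-arguments are genuinely different. For the pointedness condition $\bX_t^0\cap(-\bK_t)=\{0\}$ the paper argues directly on hedging sequences: a selection $-g_t$ of $\bX_t^0\cap(-\bK_t)$ produces $-g_t-k_t-\dots-k_T=\tilde g_T\in\bK_T$, and \SNA\ kills each increment; you instead pass through \SNR, the identity $\hat\Xi_t^0=\Xi_t^0$ (via closedness of $\Lnot(\bX_t^0,\cF_t)$) and $\bK_t^0=\{0\}$, which reads off the same conclusion and makes the converse direction symmetric rather than merely ``trivial''. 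For strict properness of $\bX_t^0$ the paper first establishes $\bK_t\subseteq\bX_t^0\subseteq\bK_t+\dots+\bK_T$ and then invokes \SNA, whereas you run a backward induction using only the recursion, the conical property of $\cm(\bX_{t+1}^0|\cF_t)$ (Proposition~\ref{dual-CondMin-max}) and the already-proved $\bX_t^0\cap(-\bK_t)=\{0\}$; your observation that $c_1+c_2\in 0+\cm(\bX_{t+1}^0|\cF_t)\subseteq\bX_t^0$ is exactly the point that makes this self-contained. Finally, for the equivalence with $\Int(\bX_t^0)^*\ne\emptyset$ the paper cites \cite[Lemma~5.1.2]{kab:saf09}, while you use the elementary bipolar fact that a closed convex cone is pointed iff its dual has non-empty interior, combined with $\bK_t\subseteq\bX_t^0$ and your induction. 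The identity \eqref{eq:6} is obtained by both of you from $\bK_s\subseteq\bX_s^0\subseteq\bK_s+\dots+\bK_T$ together with $\bK_u+\bK_u=\bK_u$ (you phrase this as $\Att_{t,T}$ being stable under addition). Net effect: your argument avoids the external citation and the explicit use of \SNA, at the cost of leaning more heavily on the closedness results of Theorems~\ref{thr:snr-bnr-nra-i} and \ref{thr:Xt-1}; both routes are sound.
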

\begin{proof} 
  Assume \NAS, so that Theorem \ref{thr:Xt-1} applies. Let
  $-g_t\in\Lnot(\bX_t^0\cap(-\bK_t),\cF_t)$. Then $g_t\in\bK_t$ a.s.,
  and there exist $k_u\in\Lnot(\bK_u,\cF_u)$, $u=t,\dots,T$, and
  $\tilde g_T\in \Lnot(\bK_T,\cF_T)$, such that
  $-g_t-k_t-k_{t+1}-\cdots-k_T=\tilde g_T$. Since \SNA\ holds,
  $g_t+k_t=0$, and $g_t=0$.  The reverse implication is trivial. In
  the conical case, since $K_t+K_t=K_t$ for all $t\le T$, \eqref{eq:6}
  follows from the inclusions 
  \begin{displaymath}
    \bK_t\subseteq \bX_t^0\subseteq
    \bK_t+\cdots+\bK_T,\quad  t\leq T.
  \end{displaymath}
  Note that $\bX_T^0=\bK_T$ is strictly proper by assumption. Since
  \begin{displaymath}
    \bX_{t-1}^0=\bK_{t-1}+\cm(\bX_t^0|\cF_{t-1})
    \subseteq \bK_{t-1}+\bX_t^0,
  \end{displaymath}
  the induction argument yields that $\bX_t^0\subseteq
  \bK_t+\cdots+\bK_T$.  Since \SNA\ holds under \NAS, $\bX_t^0$ is
  strictly proper for all $t$.

  By \cite[Lemma~5.1.2]{kab:saf09}, \NAS\ holds if and only if
  $\Xi_t^0$ is closed and $\Xi_t^0=\Lnot(\bX_t^0,\cF_t)$ with
  \begin{displaymath}
    \Int\bK_t^*\cap
    \Int (\bX_t^0)^*=\Int\bK_t^*\cap\Int\,\cm(\bX_{t+1}^0|\cF_t)^*\ne
    \emptyset,\qquad t\le T.
  \end{displaymath}
  Finally, observe that 
  \begin{displaymath}
    \Int\bK_t^*\cap
    \Int\,\cm(\bX_{t+1}^0|\cF_t)^*=\Int\bK_t^*\cap
    \cm(\bX_{t+1}^0|\cF_t)^*=\Int (\bX_t^0)^*. 
  \end{displaymath}
\end{proof}

Equation \eqref{eq:6} means that, in the superhedging problem, we may
replace solvency sets $\bK_t$ with $\bX_t^0$. The solvency sets
$(\bX_t^0)_{t=0,\dots,T}$ satisfy \NAT\ condition by
Lemma~\ref{lemma:nat-implies-inclusion}, which is generally required
to obtain a dual characterisation of the superhedging prices, see
Condition {\bf B} (equivalent to \NAT) in \cite[Sec.~3.6.3]{kab:saf09}.
Therefore, \NAS\ suffices for \cite[Th.~3.6.3]{kab:saf09} to hold
provided that we consider the consistent price systems associated to
$(\bX_t^0)_{t=0,\dots,T}$.

Now consider \NRA\ and \NARA\ condition for the chosen acceptance
sets.  Assume that the solvency sets are conical and satisfy
$\bK_t^*\setminus\{0\}\subseteq\Int\R^d_+$.
Since
$\hat\Xi_t^{0}=\cl_0(\hat\Xi_t^{0})=\cl_0(\Xi_t^{0})=\Xi_t^{0}$, \NRA\
and \NARA\ are equivalent by Proposition~\ref{prop-NRA}.  Denote by
$\Att_{t,T}^p(\Q)$ and $\overline{\Att}_{t,T}^p(\Q)$ for
$p\in[1,\infty]$ the variants of $\Att_{t,T}^p$ and
$\overline{\Att}_{t,T}^p$ when the reference probability measure is
$\Q$. 

\begin{proposition}
  \label{EqNWA} 
  The following statements are equivalent.
  \begin{enumerate}[(i)]
  \item $\overline{\Att}_{t,T}^p(\Q)\cap \Lnot(\R^d_+,\cF_t)=\{0\}$,\, for
    all $t\le T-1$, $p\in[1,\infty)$ and $\Q\sim \P$.
  \item $\cM_{t,T}^{q,w}(\Q)\neq\emptyset$ for every $t\le T-1$,
    $\Q\sim \P$ and $p\in[1,\infty)$.
  \item \NARA.
  \item $\cM_{t,T}^{\infty,w}(\P)\neq\emptyset$ for every $t\le T-1$.
  \item $\cM_{t,T}^{1,w}(\P)\neq\emptyset$ for every $t\le T-1$.
  \end{enumerate}
\end{proposition}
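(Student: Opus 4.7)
The plan is to run the cycle $(iii)\Rightarrow(i)\Rightarrow(ii)\Rightarrow(iv)\Rightarrow(v)\Rightarrow(iii)$, closing it by appealing to the general equivalence of Theorem~\ref{coroEquivPropNARA} at the last link.

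The closing step $(v)\Rightarrow(iii)$ follows directly from Theorem~\ref{coroEquivPropNARA}(i) applied to the acceptance sets $\Acc_{t,s}=\Lnot(\R^d_+,\cF_s)$. These are trivially continuous from below (with the constant approximating sequence $\gamma^n_t\equiv 0$) and sit in the $p=\infty$ framework, whose conjugate is $q=1$. The auxiliary condition \eqref{eq:18} is automatic because $\bK_u^*\subset(\R^d_+)^*=\R^d_+$ pairs non-negatively with every element of $\Lnot(\R^d_+,\cF_u)$. For $(iii)\Rightarrow(i)$ I would use that \NARA\ is phrased purely through a.s. inclusions and $\cl_0$-closures, hence invariant under any equivalent change of measure $\Q$; Theorem~\ref{EquivPropNARA}(i) applied under $\Q$ with any $p\in[1,\infty)$ then delivers the corresponding instance of (i).

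The main work lies in $(i)\Rightarrow(ii)$. Fix $t,p,\Q$; in the coherent conical setting $\overline{\Att}_{t,T}^p(\Q)$ is a closed convex cone in the Banach space $\Lnot[p](\R^d,\cF_T,\Q)$, whose topological dual is $\Lnot[q](\R^d,\cF_T,\Q)$. For each non-trivial $\xi\in\Lnot[p](\R^d_+,\cF_t,\Q)$, assumption (i) and Hahn--Banach separation supply $Z_T^\xi\in\Lnot[q]$ with $\E_\Q\langle Z_T^\xi,\xi\rangle>0\geq\E_\Q\langle Z_T^\xi,x\rangle$ for all $x\in\overline{\Att}_{t,T}^p(\Q)$. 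Since $\Lnot[p](-\bK_s,\cF_s,\Q)\subset\overline{\Att}_{t,T}^p(\Q)$ for every $s\geq t$, a measurable selection argument forces $\E_\Q(Z_T^\xi|\cF_s)\in\Lnot[q](\bK_s^*,\cF_s,\Q)$, so the candidate is already a weakly consistent price system except possibly at time $t$.

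The hard part will be the subsequent exhaustion: I need to aggregate countably many $Z_T^\xi$---indexed so as to sweep every $\cF_t$-partition atom and every coordinate direction $e_i$---into a single $Z_T$ whose $\cF_t$-conditional projection lies in $\bK_t^*\setminus\{0\}$, while preserving the martingale and integrability properties under a single $\Q\sim\P$. I would carry this out by a summable convex combination, following exactly the argument in the proof of Theorem~\ref{coroEquivPropNARA}(ii) and \cite[Th.~4.1]{lep10}; the standing assumption $\bK_t^*\setminus\{0\}\subseteq\Int\R^d_+$ ensures the resulting $Z_t$ is genuinely non-zero on all of $\Omega$. The remaining implications are immediate: $(ii)\Rightarrow(iv)$ specialises $\Q=\P$ and $p=1$ (so $q=\infty$), and $(iv)\Rightarrow(v)$ follows from the trivial inclusion $\Lnot[\infty](\P)\subset\Lnot[1](\P)$ on a probability space, closing the cycle.
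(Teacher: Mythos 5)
Your cycle $(iii)\Rightarrow(i)\Rightarrow(ii)\Rightarrow(iv)\Rightarrow(v)\Rightarrow(iii)$ is sound, and four of its five links coincide with what the paper does: $(v)\Rightarrow(iii)$ via Theorem~\ref{coroEquivPropNARA} with $q=1$ (with \eqref{eq:18} automatic because $\bK_u^*\subset(\R_+^d)^*=\R_+^d$), $(iii)\Rightarrow(i)$ via the invariance of \NARA\ under equivalent measure changes plus Theorem~\ref{EquivPropNARA}(i) applied under $\Q$, and the two trivial specialisations. The one genuine divergence is $(i)\Rightarrow(ii)$. You propose to re-run the Hahn--Banach separation and the Kreps--Yan exhaustion inside each space $\Lnot[p](\R^d,\cF_T,\Q)$. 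The paper never separates in any space beyond the one already covered by Theorem~\ref{coroEquivPropNARA}: it obtains the general-$q$, general-$\Q$ statement (ii) from the single instance (v) by the measure-change/truncation argument of \cite[Lemma~3.2.4]{kab:saf09}, which upgrades an $\Lnot[1]$ weakly consistent price system to a bounded one; the same lemma is what gives the only non-trivial direction $(v)\Rightarrow(iv)$. Your route is workable --- the dual of $\Lnot[p](\Q)$ for $p\in[1,\infty)$ is $\Lnot[q](\Q)$, the attainable set is a closed convex cone in the coherent conical setting, and the normalised summable convex combination keeps the aggregate in $\Lnot[q]$ --- but it repeats the hardest part of Theorem~\ref{coroEquivPropNARA} once for every pair $(p,\Q)$, whereas the price-system upgrade does that work exactly once. (Also, the non-vanishing of $Z_t$ comes from the exhaustion itself, not from the standing assumption $\bK_t^*\setminus\{0\}\subset\Int\R_+^d$, which is used elsewhere.)

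One correction worth making explicit: the acceptance sets $\Acc_{t,s}=\Lnot(\R_+^d,\cF_s)$ are \emph{not} witnessed as continuous from below by $\gamma_t^n\equiv 0$, because the approximating sequence $\xi_s^n\uparrow\xi_s$ need not consist of non-negative vectors, so $\gamma_t^n+\xi_s^n\in\Acc_{t,s}$ can fail. Take instead $\gamma_t^n=\esssup[\cF_t](\xi_s-\xi_s^n)$ componentwise: it dominates $\xi_s-\xi^n_s\geq -\xi^n_s$, is decreasing, and tends to $0$ a.s.\ since $\tn\xi_s^n-\xi_s\tn_{\infty,\cF_t}\to0$ and the sequence is monotone. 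This is not cosmetic, because Theorem~\ref{EquivPropNARA}, which you invoke for $(iii)\Rightarrow(i)$, assumes continuity from below in precisely the direction you need.
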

\begin{proof} 
  By Theorem~\ref{coroEquivPropNARA}, (v) and (iii) are equivalent.
  We deduce the equivalence of (ii) and (iv) by following the proof of
  \cite[Lemma~3.2.4]{kab:saf09}, which makes it possible to construct
  a (weakly)-consistent price system (see Definition~\ref{def:wcps})
  from any consistent price system in $\Lnot[1]$. In particular, (v)
  implies (iv) and, clearly, (iv) implies (v). Then (iii) implies
  (ii), i.e. \NARA\ holds for $Q$ in place of $P$. By
  Theorem~\ref{EquivPropNARA}, (i) holds. At last, (i)
  implies \NARA\ by Theorem \ref{EquivPropNARA}.
\end{proof}

\section{Arbitrage with acceptable expectations}
\label{sec:sandwich-theorems}

Assume that $p=1$, and let $ \Acc_{t,s}\cap \Ltsp[1]$ be the set of all
$\eta_s\in \Ltsp[1]$ such that $\E^g(\eta_s|\cF_t)$ has all
non-negative components. 
In other words, the acceptable positions are those having non-negative
generalised conditional expectation. This is the weakest possible
acceptability criterion, which is always the case (in the static
setting) if the acceptance sets are dilatation monotonic. 

The generalised conditional expectation is well defined for each
$\gamma_s\in \Ltsbarp$ by letting
$\E^g(\gamma_s|\cF_t)=\E^g(\gamma'_s|\cF_t)+\E(\gamma''_s|\cF_t)$, where the
expectation of $\gamma''_s\in\Lnot(\R_+^d,\cF_s)$ may be infinite. If
$\bX_T$ is an $\cF_T$-measurable random upper convex set that admits at
least one selection from $\Ltsp[1]$, then let
\begin{displaymath}
  \rhonot_{t,s}(\bX_T)=\big\{\E^g(\gamma_s|\cF_t):\;
  \gamma_s\in\Lnot[1_{\cF_t}](\bX_T,\cF_s)\big\},
\end{displaymath}
and $\rhos_{t,s}(\bX_T)=\E^g(\bX_T|\cF_t)$ is the generalised conditional
expectation of the random closed set $\bX_T$, see \cite[Def.~6.3]{lep:mol19}. 

Let $\xi\in\Lnot[1](\R^d,\cF_T)$.  By Lemma~\ref{lemma:st-price},
$\hat \Xi_{T-1}^\xi$ is the family of selections of the (possibly,
non-closed) random set
\begin{displaymath}
  \bX_{T-1}^\xi=\bK_{T-1}+\E(\xi|\cF_{T-1})+\E(\bK_T|\cF_{T-1})\,,
\end{displaymath}
which is $\cF_{T-1}$-measurable by Lemma~\ref{A+B-measurable}. Note
that all solvency sets are integrable and so their generalised
conditional expectation coincides with the usual one.  Therefore,
\begin{align*}
  \rhos_{T-2,T-1}(\hat \Xi_{T-1}^\xi)
  &=\E(\bX_{T-1}^\xi|\cF_{T-2})\\
  &=\E(\xi|\cF_{T-2})+\E(\bK_{T-1}+\E(\bK_t|\cF_{T-1})|\cF_{T-2}),\\
  &=\E(\xi|\cF_{T-2})+\E(\bK_T+\bK_{T-1}|\cF_{T-2}).
\end{align*}
Since $k_T\in\Lnot[1_{\cF_{T-1}}](\bK_T,\cF_T)$ and
\begin{multline*}
  \rhonot_{T-2,T-1}(\Xi_{T-1}^\xi)\\
  =\big\{\E^g(k_{T-1}+\E^g(\xi+k_T|\cF_{T-1})|\cF_{T-2}):\; 
  k_{T-1}\in\Lnot(\bK_{T-1},\cF_{T-1})\big\},
\end{multline*}
we deduce that
$\rhos_{T-2,T-1}(\hat\Xi_{T-1}^\xi)\subseteq\rhos_{T-2,T-1}(\Xi_{T-1}^\xi)$. 
Since $\Xi_{T-1}^\xi$ is a subset of $\hat \Xi_{T-1}^\xi$, we have
\begin{displaymath}
  \rhos_{T-2,T-1}(\hat\Xi_{T-1}^\xi)=\rhos_{T-2,T-1}(\Xi_{T-1}^\xi).
\end{displaymath}
Therefore,
\begin{displaymath}
  \bX_{T-2}^\xi=\bK_{T-2}+\E(\xi|\cF_{T-2})
  +\E(\bK_T+\bK_{T-1}|\cF_{T-2}),
\end{displaymath}
Continuing recursively, we obtain $\hat\Xi_t^\xi=\Lnot(\bX_t^\xi,\cF_t)$ with
a not necessarily closed $\cF_{t}$-measurable random set
\begin{equation}
  \label{eq:7}
  \bX_t^\xi=\bK_t+\E(\xi|\cF_t)
  +\E\big(\bK_T+\bK_{T-1}+\cdots+\bK_{t+1}|\cF_t\big).
\end{equation}
Notice that $\bX_t^\xi=\bX_t^0+\E(\xi|\cF_t)$, i.e. $\bX_t^0$
determines all superhedging prices.  Reformulating requirements from
Definition~\ref{def:nra-multiperiod}, we arrive at the following
result.

\begin{proposition}
  \label{SNR-CondExp}
  For the risk arbitrage conditions formulated for the conditional
  expectation as the risk measure, the following hold. 
  \begin{enumerate}[(i)]
  \item  If the solvency sets are strictly proper, \SNR\ is
      equivalent to 
    \begin{displaymath}
      \E(\bK_{t+1}+\cdots+\bK_T|\cF_t)\cap(-\bK_t)=\{0\},
      \;\;\text{a.s.},\quad t=0,\dots,T-1.
    \end{displaymath}
  \item \NARA\ is equivalent to     
    \begin{displaymath}
      \big(\bK_t+\E(\bK_{t+1}+\cdots+\bK_T|\cF_t)\big)\cap
      \R^d_-=\{0\} \;\; \text{a.s.},\quad t=0,\dots,T-1.
    \end{displaymath}
  \end{enumerate}
\end{proposition}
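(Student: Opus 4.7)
The plan is to reduce both equivalences to a.s.\ pointwise conditions on random sets, using the two identities derived immediately above the proposition,
\begin{displaymath}
\rhos_{t,t+1}(\Xi_{t+1}^0)=\E^g(\bK_{t+1}+\cdots+\bK_T|\cF_t),
\end{displaymath}
together with $\hat\Xi_t^0=\Lnot(\bX_t^0,\cF_t)$ for $\bX_t^0=\bK_t+\E^g(\bK_{t+1}+\cdots+\bK_T|\cF_t)$, and the usual measurable-selection translation between selection-level identities and pointwise ones.

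For part~(i), strict properness gives $\bK_t^0=\{0\}$, so Lemma~\ref{lemma:snr-xi} recasts \SNR\ as the condition $\rhos_{t,t+1}(\Xi_{t+1}^0)\cap\Lnot(-\bK_t,\cF_t)=\{0\}$ for every $t$. Substituting the identity above, it remains to show that
\begin{displaymath}
\Lnot\bigl(\E^g(\bK_{t+1}+\cdots+\bK_T|\cF_t)\cap(-\bK_t),\cF_t\bigr)=\{0\}
\end{displaymath}
is equivalent to the pointwise statement in (i). The intersection is an $\cF_t$-measurable random closed set containing the origin (by Lemma~\ref{A+B-measurable} together with measurability of intersections of closed measurable random sets), so on any $\cF_t$-set of positive probability where the pointwise intersection is non-trivial, a measurable selection produces a non-zero element of the selection set; the reverse direction is immediate.

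For part~(ii), the starting point is $\cl_0\Xi_t^0=\cl_0\hat\Xi_t^0$, which combined with $\hat\Xi_t^0=\Lnot(\bX_t^0,\cF_t)$ and Proposition~\ref{MeasurVersClos} yields $\cl_0\hat\Xi_t^0=\Lnot(\cl\bX_t^0,\cF_t)$. Hence \NARA\ is equivalent to $\cl\bX_t^0\cap\R^d_-=\{0\}$ a.s. To match the statement of the proposition, I use that $\bX_t^0$ is a convex upper random set: since $\bK_t$ is upper and every $\bK_s$ contains $\R^d_+$, one has $\bX_t^0+\R^d_+=\bX_t^0$, and for any $y\in\cl\bX_t^0$ and any $\varepsilon>0$, choosing $x_n\in\bX_t^0$ with $\|x_n-y\|<\varepsilon$ gives $y+\varepsilon\one-x_n\in\R^d_+$ componentwise, so $y+\varepsilon\one\in\bX_t^0$. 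If $y\in\cl\bX_t^0\cap\R^d_-$ is non-zero, letting $\varepsilon\downarrow 0$ along strictly negative coordinates of $y$ and modifying on the zero coordinates through the upper-set property produces a non-zero element of $\bX_t^0\cap\R^d_-$, allowing the closure to be dropped.

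The main obstacle is precisely this last reconciliation in part~(ii): a Minkowski sum of two closed convex random sets need not be closed, so \NARA\ a priori involves $\cl\bX_t^0$ rather than $\bX_t^0$, and the passage between the two formulations hinges essentially on the upper-set structure inherited from the solvency set $\bK_t$.
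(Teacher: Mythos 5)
Your reduction is the same one the paper intends: substitute the identities
$\rhos_{t,t+1}(\Xi_{t+1}^0)=\E^g(\bK_{t+1}+\cdots+\bK_T|\cF_t)$ and
$\hat\Xi_t^0=\Lnot(\bX_t^0,\cF_t)$ with
$\bX_t^0=\bK_t+\E^g(\bK_{t+1}+\cdots+\bK_T|\cF_t)$ into
Definition~\ref{def:nra-multiperiod}. Part (i) — Lemma~\ref{lemma:snr-xi} under strict properness, followed by a measurable selection argument on the $\cF_t$-measurable random closed set $\E^g(\bK_{t+1}+\cdots+\bK_T|\cF_t)\cap(-\bK_t)$, which always contains the origin — is correct, and there is no closure issue there because the generalised conditional expectation is a random closed set by definition.

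The gap is in the last step of part (ii), where you drop the closure. What you must prove is that $\bX_t^0\cap\R^d_-=\{0\}$ forces $\cl\bX_t^0\cap\R^d_-=\{0\}$, and your mechanism for this fails. For $y\in\cl\bX_t^0\cap\R^d_-$ with $y\neq0$, the perturbation $y+\eps\one$ does lie in $\bX_t^0$ by the upper-set property, but on every coordinate where $y$ vanishes it equals $\eps>0$, so $y+\eps\one\notin\R^d_-$; and the upper-set property only permits \emph{increasing} coordinates, so "modifying on the zero coordinates" cannot push them back into $\R_-$. The defect is not cosmetic: for a general convex upper set containing the origin the two conditions genuinely differ. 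In $\R^2$ take $\Gamma=\{x:\,x_2>0\}\cup\{x:\,x_1\ge0,\,x_2=0\}$; it is convex, upper, contains $\R^2_+$ and the origin, satisfies $\Gamma\cap\R^2_-=\{0\}$, yet $\cl\Gamma\cap\R^2_-$ is the whole half-line $\{(x_1,0):\,x_1\le0\}$. Since the Minkowski sum defining $\bX_t^0$ need not be closed, you must either establish the equivalence with $\cl\bX_t^0$ in the displayed condition (which is what \NARA\ literally translates to after $\cl_0\Xi_t^0=\cl_0\hat\Xi_t^0=\Lnot(\cl\bX_t^0,\cF_t)$), or use structure of $\bX_t^0$ beyond convexity and the upper-set property to exclude such boundary pathologies. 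The paper's own proof (``reformulating the requirements'') passes over this point in silence, so you have correctly isolated the only non-trivial issue — but your resolution of it does not go through as written.
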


Note that statement (ii) above follows from (i).
Theorem~\ref{coroEquivPropNARA} yields the following result.

\begin{proposition} 
  \label{prop:ce}
  Assume that the solvency sets $(\bK_t)_{t=0,\dots,T}$ are cones. Then
  \NARA\ (resp. \SNR) is equivalent to the existence of a
  deterministic point $z\neq0$ that belongs to all $\bK_t^*$
  (resp. $\Int\bK_t^*$), $t=0,\dots,T$.
\end{proposition}
\begin{proof}
  Any acceptable position from $\Acc_{u-1,u}$ is of the form 
  \begin{displaymath}
    \eta_{u}=
    \big[\gamma_{u}-\E^g(\gamma_{u}|\cF_{u-1})\big]+\E^g(\gamma_{u}|\cF_{u-1})+\zeta^+_{u}
  \end{displaymath}
  with $\E^g(\gamma_{u}|\cF_{u-1})\in \R^d_+$ 
  and
  $\zeta^+_{u}\in\Lnot(\R_+^d,\cF_{u})$. Thus, 
  \begin{align*}
    \E^g\big(\langle Z_u,\eta_u\rangle|\cF_{u-1}\big)
    & \ge \E^g(\langle
    Z_u,\gamma_{u}-\E^g\big(\gamma_{u}|\cF_{u-1})|\cF_{u-1}\big)\\
    & = \E^g\big(\langle
    Z_{u},\gamma_u\rangle|\cF_{u-1}\big) -\langle
    Z_{u-1},\E^g(\gamma_{u}|\cF_{u-1})\rangle.
  \end{align*}
  Hence, $\E^g(\langle Z_u,\eta_u\rangle|\cF_{u-1})\ge 0$ if there
  exists $Z\in\cM_{t,T}^{\infty,w}(\P)$, such that
  \begin{displaymath}
    \E^g\big(\langle Z_u,\gamma_{u}\rangle|\cF_{u-1}\big)
    = \langle Z_{u-1},\E^g(\gamma_{u}|\cF_{u-1})\rangle\quad \text{a.s.}
  \end{displaymath}
  for all $\gamma_{u}\in\Lnot[1_{\cF_{u-1}}](\R^d,\cF_{u})$. The
  equality follows from \eqref{eq:18} by taking
  unconditional expectation (restricting to a partition if necessary)
  and applying the same reason with $-\gamma_u$. 
  Given that $Z_u$ is essentially bounded, it is possible to let
  $\gamma_u$ be equal to one of the component of $Z_u$ multiplied by
  the corresponding basis vector. Thus, the square of every component
  of $(Z_u)_{u=0,\dots,T}$ is a martingale,
  whence $Z_u$ equals to deterministic $z$ for all $u$.  

  The inverse implication follows from
  Theorem~\ref{coroEquivPropNARA}(i) 
  applied to 
  \begin{displaymath}
    \eta_u=\gamma_{u}-\E^g(\gamma_{u}|\cF_{u-1})\in \Acc_{u-1,u}. 
  \end{displaymath}
  The proof for \SNR\ follows from the same argument and
  Theorem~\ref{coroEquivPropNARA}(ii). 
\end{proof}

\begin{remark}
  It is possible to derive the result of Proposition~\ref{prop:ce}
  from Proposition~\ref{SNR-CondExp} by using the fact that the
  expectation of the cone $\bK_t$ is the whole space unless $\bK_t^*$
  contains a deterministic point $z$ distinct from the origin and then
  $\E\bK_t$ is a subset of the half-space with outer normal $(-z)$. In order
  that $\bX_t^0$ does not intersect $\R_-^d$, all cones $\bK_t$ should
  have non-trivial expectation and the sum of these expectations has
  to be non-trivial. This amounts to the existence of a deterministic
  point $z\neq0$ that belongs to $\bK_0^*\cap\cdots\cap \bK_T^*$.
\end{remark}

\section{Application to the  two-dimensional model}
\label{sec:appl-two-dimens}

Consider a financial market model composed of two assets. The first
one has constant value $1$ and the second one is a risky asset
modelled by a bid-ask spread $\bY_t=[S^b_t,S^a_t]$ such that
$0<S^b_t\le S^a_t$ a.s. for all $t\le T$. This is Kabanov's model with
the conical solvency set $\bK_t=C(\bY_t)$, where $C([s',s''])$ is the
positive dual to the smallest cone in $\R^2$ containing the set
$\{1\}\times[s',s'']$.

Consider the acceptance sets from
Section~\ref{sec:conditional-core-as}, so that the conditional core is
the risk measure.
Then $\bX_{T-1}^0$ is the sum of $\bK_{T-1}$ and
$\cm(\bX_T^0,\cF_{T-1})=C(\CM(\bY_T|\cF_{T-1}))$, where
$\CM(\bY_T|\cF_{T-1}))$ is the conditional convex hull of $\bY_T$,
that is, the smallest $\cF_{T-1}$-measurable random closed convex set
that contains $\bY_{T-1}$, see \cite[Def.~5.1]{lep:mol19}. 

Since $\bX_{T-1}^0$ is a random closed set, iterating this argument
yields that $\bX_{t}^0=C(\tilde \bY_{t})$ for $t=0,\dots,T$, where
$\tilde \bY_T=\bY_T$ and
\begin{displaymath}
 \tilde  \bY_{t}=\CM(\tilde \bY_{t+1}|\cF_{t})\cap \bY_{t},\quad
  t=T-1,\dots,0.
\end{displaymath}
Note that we do not make any no arbitrage assumption to obtain
$\bX_t^0$.  Observe that $\tilde\bY_{t}=[\tilde S^b_{t}, \tilde
S^a_{t}]$, where $\tilde S^a_T=S^a_T$, $\tilde S^b_T=S^b_T$, and
\begin{align}
  \label{IterTildeS}
  \tilde S^a_{t}&=S^a_{t}\wedge \esssup[\cF_{t}]\tilde S^a_{t+1},\quad 
  \tilde S^b_{t}=S^b_{t}\vee\essinf[\cF_{t}]\tilde S^b_{t+1},
\end{align}
for $t=T-1,\dots,1$. Since $0<\tilde S^b_t\le \tilde S^a_t$ a.s. for all
$t$, \NRA\ always holds. By Definition \ref{def:nra-multiperiod} and
Lemma~\ref{lemma:NRAT}, we easily deduce the following result.
 
\begin{theorem}{\rm \quad}
  \label{NAS-2d}
  \label{NA2equiv}
  \begin{enumerate}[(i)]
  \item \SNR\ holds if and only if $S_t^b\le \esssup[\cF_t]\tilde
  S_{t+1}^a$  and $S_t^a\ge \essinf[\cF_t]\tilde S_{t+1}^b$ a.s. with strict inequalities when  $ S_{t}^b<S_{t}^a$,  for all $t\le T-1$.
  \item \NAT\ holds if and only if $\esssup[\cF_t]S_{t+1}^a\ge S^a_t$ and
    $S_t^b\ge \essinf[\cF_t] S_{t+1}^b$ a.s.  for all $t\le T-1$.
  \end{enumerate}
\end{theorem}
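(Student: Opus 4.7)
The plan is to reduce both statements to pointwise inclusions of $\cF_t$-measurable random convex cones in $\R^2$, by exploiting the representation $\bX_t^0=C(\tilde\bY_t)$ already established above together with the duality formulas $\cm(C(\bY)|\cH)=C(\CM(\bY|\cH))$ and $\CM(C(\bY)|\cH)=C(\cm(\bY|\cH))$ from Example~\ref{ex:r2-cone} and Proposition~\ref{dual-CondMin-max}. I will use throughout the one-dimensional identity $\CM([S^b,S^a]|\cH)=[\essinf[\cH]S^b,\esssup[\cH]S^a]$ (a consequence of Proposition~\ref{Char-coM-Supfunct}) and the elementary equivalence $C(\bZ_1)\subseteq C(\bZ_2)\iff \bZ_2\subseteq \bZ_1$ for intervals, which follows by dualising via $C(\bZ)^*=\bigcup_{u\ge 0}u(\{1\}\times \bZ)$.

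For part~(ii), Lemma~\ref{lemma:nat-implies-inclusion}(ii) reduces \NAT\ at time $t+1$ to $\cm(\bK_{t+1}|\cF_t)\subseteq \bK_t$ a.s. The duality and the one-dimensional formula above rewrite this as $C([\essinf[\cF_t]S_{t+1}^b,\esssup[\cF_t]S_{t+1}^a])\subseteq C([S_t^b,S_t^a])$, and the cone-inclusion equivalence turns this into the interval inclusion $[S_t^b,S_t^a]\subseteq[\essinf[\cF_t]S_{t+1}^b,\esssup[\cF_t]S_{t+1}^a]$, which is precisely the pair of inequalities in~(ii).

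For part~(i), the identity $\hat\Xi_t^0=\Lnot(C(\tilde\bY_t),\cF_t)$ together with the $\cF_t$-measurability of $-\bK_t$ and $\bK_t^0$ reduces \SNR\ at time $t$ to the a.s.\ pointwise inclusion $C(\tilde\bY_t)\cap(-\bK_t)\subseteq\bK_t^0$. Writing the cones via their defining half-space inequalities, namely $C([a,b])=\{(x,y):x+ay\ge 0,\;x+by\ge 0\}$, and splitting the analysis into $y>0$, $y=0$ and $y<0$, a short two-dimensional computation shows that $C([a,b])\cap(-C([c,d]))$ is trivial if and only if $a<d$ and $b>c$; applied with $a=\tilde S_t^b$, $b=\tilde S_t^a$, $c=S_t^b$, $d=S_t^a$, this characterises the non-triviality of the left-hand side in terms of $\tilde S_t^{a,b}$ and $S_t^{a,b}$.

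On the event $\{S_t^b<S_t^a\}$, $\bK_t$ is proper and $\bK_t^0=\{0\}$, so \SNR\ is equivalent to triviality of the intersection, i.e., to $\tilde S_t^b<S_t^a$ and $\tilde S_t^a>S_t^b$; via~\eqref{IterTildeS} and the hypothesis $S_t^b<S_t^a$ these translate to the strict forms $\essinf[\cF_t]\tilde S_{t+1}^b<S_t^a$ and $\esssup[\cF_t]\tilde S_{t+1}^a>S_t^b$. On $\{S_t^b=S_t^a\}$, $\bK_t$ degenerates to the half-plane $\{x+S_t^by\ge 0\}$ with $\bK_t^0$ its boundary line, so the intersection is never trivial but must collapse to that line; a direct case check shows that this is equivalent to $\tilde S_t^b=S_t^a$ and $\tilde S_t^a=S_t^b$, hence (using $\tilde S_t^b\ge S_t^b$ and $\tilde S_t^a\le S_t^a$) to the non-strict inequalities $\essinf[\cF_t]\tilde S_{t+1}^b\le S_t^a$ and $\esssup[\cF_t]\tilde S_{t+1}^a\ge S_t^b$. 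Combining the two events yields~(i). The main obstacle is the careful handling of the degenerate event $\{S_t^b=S_t^a\}$, where the cones collapse and the correct conclusion is the non-strict inequality, explaining the qualification in the theorem; everything else is a direct application of the formulas of Section~\ref{sec:conditional-minimum} and the reduction of Lemma~\ref{lemma:nat-implies-inclusion}.
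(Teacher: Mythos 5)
Your argument is correct and is essentially the fleshed-out version of the proof the paper only sketches (the authors merely assert the result follows "easily" from Definition~\ref{def:nra-multiperiod}, Lemma~\ref{lemma:NRAT}/\ref{lemma:nat-implies-inclusion} and the representation $\bX_t^0=C(\tilde\bY_t)$): your use of the cone duality $C(\bZ_1)\subseteq C(\bZ_2)\iff\bZ_2\subseteq\bZ_1$ for (ii) and the half-space case analysis for (i) is exactly the intended route. The only cosmetic remark is that on $\{S_t^b=S_t^a\}$ the sandwich $S_t^b\le\tilde S_t^b\le\tilde S_t^a\le S_t^a$ forces $\tilde S_t^b=\tilde S_t^a=S_t^b$, so both \SNR\ and the non-strict inequalities hold automatically there, which slightly simplifies your "direct case check."
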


\begin{remark} 
  \NAS\ is equivalent to \SNR\ in the proper case but also to the
  existence of a strictly consistent price system, see
  \cite[Th.~3.2.2]{kab:saf09}. In the two asset case, the Grigoriev
  theorem, see \cite[Th.~3.2.15]{kab:saf09} and \cite{grig05}, asserts
  that \NAS\ is equivalent to the existence of a (possibly non-strict)
  consistent price system, i.e. the existence of a martingale $Z_t$
  with respect to a probability measure $\Q$ equivalent to $\P$ such
  that $S^b_t\leq Z_t\leq S^a_t$ for all $t$. By Theorem~\ref{NAS-2d},
  the existence of a consistent price system, equivalently, \NAS, implies
  \SNR. Indeed, $\esssup[\cF_t]\tilde S_{t+1}^a\ge
  \E_\Q(Z_{t+1}|\cF_t)\geq S^b_t$ and similarly $\essinf[\cF_t]\tilde
  S_{t+1}^b\le \E_\Q(Z_{t+1}|\cF_t)\leq S^a_t$, the inequalities being
  strict when $S^b_t<S^b_a$. 
\end{remark}

\begin{corollary}
  If there exist probability measures $\Q^a,\Q^b$ which are equivalent
  to $\P$, such that $S^a$ is a $\Q^a$-submartingale and $S^b$ is a
  $\Q^b$-supermartingale, then \NAT\ holds.
\end{corollary}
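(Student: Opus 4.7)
The plan is to apply Theorem~\ref{NA2equiv}(ii), which characterises \NAT\ in the two-asset model as the two conditions
\begin{displaymath}
  \esssup[\cF_t] S^a_{t+1}\ge S^a_t
  \quad\text{and}\quad
  \essinf[\cF_t] S^b_{t+1}\le S^b_t
  \quad\text{a.s., for all}\ t\le T-1.
\end{displaymath}
So I only need to check that these two inequalities follow from the existence of the submartingale/supermartingale measures $\Q^a,\Q^b$.

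The key observation is that, since the $\cF_t$-conditional essential supremum of $\xi\in\Lnot(\R,\cF_{t+1})$ is by definition the smallest $\cF_t$-measurable (extended-real) random variable dominating $\xi$ a.s., one has $\xi\le\esssup[\cF_t]\xi$ a.s., and taking $\cF_t$-conditional expectation under any equivalent measure $\Q$ yields
\begin{displaymath}
  \E_\Q(\xi|\cF_t)\le\esssup[\cF_t]\xi\quad\text{a.s.,}
\end{displaymath}
because the right-hand side is $\cF_t$-measurable. A symmetric bound holds for the essential infimum. This is precisely the sort of estimate already used in the remark following Theorem~\ref{NA2equiv} via Theorem~\ref{essCharact}.

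Applying this to $\xi=S^a_{t+1}$ under $\Q^a$ and using the submartingale property $\E_{\Q^a}(S^a_{t+1}|\cF_t)\ge S^a_t$ immediately gives $\esssup[\cF_t] S^a_{t+1}\ge S^a_t$ a.s. Likewise, applying the infimum version to $\xi=S^b_{t+1}$ under $\Q^b$ together with the supermartingale property $\E_{\Q^b}(S^b_{t+1}|\cF_t)\le S^b_t$ yields $\essinf[\cF_t] S^b_{t+1}\le S^b_t$ a.s. Both conditions of Theorem~\ref{NA2equiv}(ii) hold, so \NAT\ follows.

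There is essentially no obstacle here; the only subtlety is integrability, but $S^a$ being a $\Q^a$-submartingale (resp.\ $S^b$ a $\Q^b$-supermartingale) forces the relevant conditional expectations to be well defined, and the essential supremum/infimum inequalities remain valid even in the extended-real sense.
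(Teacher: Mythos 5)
Your argument is correct and is precisely the intended one: the paper states this corollary without proof as an immediate consequence of Theorem~\ref{NA2equiv}(ii), using exactly the bound $\E_\Q(\xi|\cF_t)\le\esssup[\cF_t]\xi$ (and its infimum counterpart) that the preceding remark invokes via Theorem~\ref{essCharact}. Your handling of the integrability/extended-real issues is also fine, since $S^a_{t+1},S^b_{t+1}\ge 0$.
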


The condition in the following corollary means that $\delta_t^b\eqdef
S_t^b/S_{t-1}^b$ and $\delta_t^a\eqdef S_t^a/S_{t-1}^a$ admit
conditional full supports on $\R_+$ for all $t=1,\dots,T$.

\begin{corollary} 
  If $\P(\delta_t^b\le c|\cF_{t-1}) \P(\delta_t^a\ge c|\cF_{t-1})>0$
  a.s.  for all $t=1,\dots,T$ and all $c>0$, then \NAT\ holds.
\end{corollary}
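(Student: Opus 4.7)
The plan is to reduce the claim to Theorem~\ref{NA2equiv}(ii), which states that \NAT\ is equivalent to the two pointwise inequalities $\esssup[\cF_t] S_{t+1}^a \ge S_t^a$ and $\essinf[\cF_t] S_{t+1}^b \le S_t^b$ a.s. for all $t\le T-1$. Once this reformulation is in hand, each inequality will follow from a single, natural choice of the constant $c$ in the full-support hypothesis \eqref{eq:f-support}.

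First I would observe that \eqref{eq:f-support} forces each factor in the product to be strictly positive, so that
\begin{displaymath}
  \P(\delta_t^a\ge c\mid \cF_{t-1})>0\quad\text{and}\quad
  \P(\delta_t^b\le c\mid \cF_{t-1})>0\quad\text{a.s.}
\end{displaymath}
for every $c>0$ and every $t=1,\dots,T$. Next I would specialise to $c=1$. Since $S_t^a>0$ a.s., the event $\{\delta_{t+1}^a\ge 1\}$ coincides with $\{S_{t+1}^a\ge S_t^a\}$, hence it has strictly positive $\cF_t$-conditional probability. By the definition of the conditional essential supremum, any $\cF_t$-measurable upper bound $\beta_t$ for $S_{t+1}^a$ must satisfy $\beta_t\ge S_t^a$ a.s., and therefore $\esssup[\cF_t] S_{t+1}^a\ge S_t^a$ a.s. The same reasoning applied to $\{\delta_{t+1}^b\le 1\}=\{S_{t+1}^b\le S_t^b\}$ yields $\essinf[\cF_t] S_{t+1}^b\le S_t^b$ a.s.

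Combining these two inequalities with Theorem~\ref{NA2equiv}(ii) gives \NAT, finishing the proof. There is no real obstacle here: the corollary is essentially a one-line consequence of the characterisation, provided one notices that only the value $c=1$ in \eqref{eq:f-support} is required, and that the full-support assumption for all $c>0$ is strictly stronger than what is needed.
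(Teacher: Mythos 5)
Your proof is correct, and it takes a genuinely more economical route than the paper's. Both arguments funnel through Theorem~\ref{NA2equiv}(ii), but they establish the required inequalities differently. You fix $c=1$, note that $\{\delta_{t+1}^a\ge 1\}=\{S_{t+1}^a\ge S_t^a\}$ has strictly positive $\cF_t$-conditional probability, and conclude that any $\cF_t$-measurable a.s.\ upper bound for $S_{t+1}^a$ must dominate $S_t^a$ a.s.\ (since on the $\cF_t$-event where it did not, that conditional probability would vanish); this gives exactly $\esssup[\cF_t]S_{t+1}^a\ge S_t^a$, and dually for $S^b$. The paper instead uses the full range of $c$: from $\gamma\one_{\delta_t^a\ge c}\ge cS_{t-1}^a\one_{\delta_t^a\ge c}$ with $\gamma=\esssup[\cF_{t-1}]S_t^a$, taking conditional expectations and letting $c\to\infty$ it derives the much stronger conclusion $\esssup[\cF_{t-1}]S_t^a=+\infty$ and $\essinf[\cF_{t-1}]S_t^b=0$ a.s., from which the inequalities of Theorem~\ref{NA2equiv}(ii) are immediate. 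Your version makes explicit that the hypothesis \eqref{eq:f-support} for all $c>0$ is far stronger than needed (only $c=1$ enters), whereas the paper's computation extracts the sharper structural fact about the conditional extrema that the full-support assumption actually encodes. Both are valid; just make sure the indexing shift is stated cleanly (the hypothesis at time $t+1$ is what yields the inequality between times $t$ and $t+1$), which you have done.
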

\begin{proof}
  Let $\gamma=\esssup[\cF_{t-1}] S_t^a$. Then
  $\gamma\one_{\delta_t^a \geq c}\ge S_t^a\one_{\delta_t^a\geq
    c}\geq cS_{t-1}^a\one_{\delta_t^a\ge c}$. Taking the conditional
  expectation yields that
  \begin{displaymath}
    \gamma\P(\delta_t^a\ge c|\cF_{t-1})
    \ge cS_{t-1}^a\P(\delta_t^a\ge c|\cF_{t-1}).
  \end{displaymath}
  Then $\gamma\ge cS_{t-1}^a$, and letting $c\to\infty$ yields that
  $\gamma=+\infty$ a.s. Similarly, $\essinf[\cF_{t-1}] S_t^b=0$
  a.s., and Theorem~\ref{NA2equiv}(ii) applies.
\end{proof}

Assume now that the acceptability criterion is based on the
generalised conditional expectation as described in
Section~\ref{sec:sandwich-theorems}. By Proposition~\ref{prop:ce},
\NARA\ holds if and only if there is deterministic $z$ that belongs to
all $\bY_t$, $t=0,\dots,T$, and \SNR\ additionally requires that this
point belongs to the interiors of $\bY_t$.

\begin{example}[Limit order book]
  Consider the two asset setting, where it is allowed to perform
  transactions up to one cash unit amount. This is a simple limit
  order book setting with only one break point. Then $\bK_t$ is the
  sum $[0,\alpha_t]+[0,\beta_t]+\R_+^2$, where $\alpha_t=(1,-S^a_t)$
  and $\beta_t=(-1,S^b_t)$. By Proposition~\ref{SNR-CondExp}, \NARA\
  (with acceptability based on conditional expectation) 
  holds if and only if
  \begin{displaymath}
    \E\Big[\sum_{s=t+1}^T ([0,\alpha_s]+[0,\beta_s])\;\Big|\cF_t\Big]\cap \R_-^d=\{0\}
  \end{displaymath}
  for all $t=0,\dots,T-1$. The sum of segments $[0,\alpha_s]$ and
  $[0,\beta_s]$ is a random convex compact set called a zonotope. The
  setting can be easily extended to the case of limit order books with
  several break points.
\end{example}

\section{Appendix: Random sets and their selections}
\label{sec:meas-select-rand}

Let $\R^d$ be the Euclidean space with norm $\|\cdot\|$ and the Borel
$\sigma$-algebra $\cB(\XX)$. The closure of a set $A\subset\XX$ is
denoted by $\cl A$. 
A set-valued function $\omega\mapsto \bX(\omega)$ from a complete
probability space $(\Omega,\cF,\P)$ to the family of all subsets of
$\XX$ is called \emph{$\cF$-measurable} (or graph-measurable) if its
graph
\begin{displaymath}
  \Gr \bX\eqdef\big\{(\omega,x)\in \Omega\times \XX:
  x\in \bX(\omega)\big\}\subset \Omega\times\XX
\end{displaymath}
belongs to the product $\sigma$-algebra $\cF\otimes \cB(\XX)$.  In
this case, $\bX$ is said to be a \emph{random set}. In the same way
the $\cH$-measurability of $\bX$ with respect to a
sub-$\sigma$-algebra $\cH$ of $\cF$ is defined. 
The random set $\bX$ is said to be \emph{closed} (convex, open) if
$\bX(\omega)$ is a closed (convex, open) set for almost all $\omega$.

\begin{definition}
  An $\cF$-measurable random element $\xi$ in $\XX$ such that
  $\xi(\omega)\in\bX(\omega)$ for almost all $\omega\in\Omega$ is
  said to be an $\cF$-measurable \emph{selection} (selection in short)
  of $\bX$, $\Lnot(\bX,\cF)$ denotes the family of all
  $\cF$-measurable selections of $\bX$, and $\Lnot[p](\bX,\cF)$
  is the family of $p$-integrable ones.
\end{definition}

It is known that an a.s. non-empty random set has at least one
selection, see \cite[Th.~4.4]{hes02}. Let $\cH$ be a
sub-$\sigma$-algebra of $\cF$. 

\begin{definition}
  \label{def:decomp}
  A family $\Xi\subset \Lnot(\XX,\cF)$ is said to be
  \emph{infinitely $\cH$-decomposable} if
  \begin{math}
    \sum_n \xi_n\one_{A_n}\in \Xi
  \end{math}
  for all sequences $(\xi_n)_{n\geq1}$ from $\Xi$ and all
  $\cH$-measurable partitions $(A_n)_{n\geq1}$ of $\Omega$; $\Xi$ is
  \emph{$\cH$-decomposable} if this holds for finite partitions. 
\end{definition}

The decomposable subsets of $\Lnot(\R^d,\cF)$ are called stable and
infinitely decomposable ones are called $\sigma$-stable in
\cite{cher:kup:vog14}. The following result for $\cH=\cF$ is well
known in case $p=1$ \cite{hia:ume77}, where the decomposability
concept was first introduced; see also \cite[Th.~2.1.10]{mo1} for
$\cH=\cF$, and \cite[Prop.~5.4.3]{kab:saf09} 
for $p=0$.

\begin{theorem}[see \protect{\cite[Th.~2.4]{lep:mol19}} 
  and \protect{\cite[Th.~2.1.10]{mo1}}]
  \label{ReprDecompL0} 
  Let $\Xi$ be a non-empty subset of $\Lnot[p](\XX,\cF)$ for $p=0$ or
  $p\in[1,\infty]$. Then
  \begin{displaymath}
    \Xi\cap\Lnot[p](\XX,\cH)=\Lnot[p](\bX,\cH).
  \end{displaymath}
  for an $\cH$-measurable random closed set $\bX$ if and only if $\Xi$
  is $\cH$-decomposable and closed.
\end{theorem}

For $A_1,A_2\subset\XX$, define their elementwise (Minkowski) sum as
\begin{displaymath}
  A_1+A_2\eqdef\{x_1+x_2:\; x_1\in A_1,x_2\in A_2\}\,.
\end{displaymath}
The same definition applies to the sum of subsets of $\Lnot(\XX,\cF)$.
The set of pairwise differences of points from $A_1$ and $A_2$ is
obtained as $A_1+(-A_2)$, or shortly $A_1-A_2$, where
$-A_2\eqdef\{-x:\; x\in A_2\}$ is the centrally symmetric variant of
$A_2$.  For the sum $A+\{x\}$ of a set and a singleton we write
shortly $A+x$. Note that the sum of two closed sets is not necessarily
closed, unless at least one of the closed summands
is compact. The following result differs from \cite[Th.~1.3.25]{mo1}
in considering the possibly non-closed sum of two random closed sets.

\begin{lemma}
  \label{A+B-measurable} 
  Let $\bX$ and $\bY$ be two random sets. Then
  $\Lnot(\bX,\cF)+\Lnot(\bY,\cF)=\Lnot(\bX+\bY,\cF)$.
  If both $\bX$ and $\bY$ are random closed sets, then
  $\bX+\bY$ is measurable.
\end{lemma}
\begin{proof}
  It is trivial that $\Lnot(\bX,\cF)+\Lnot(\bY,\cF)\subseteq
  \Lnot(\bX+\bY,\cF)$. To prove the reverse inclusion,
  consider $\xi\in \Lnot(\bX+\bY,\cF)$. Since $\bX$
  and $\bY$ are $\cF$-measurable, the measurable selection
  theorem \cite[Th.~5.4.1]{kab:saf09} yields that there exist
  $\cF$-measurable selections $\xi'\in \Lnot(\bX,\cF)$ and
  $\xi''\in \Lnot(\bY,\cF)$ such that $\xi=\xi'+\xi''$. 

  Now assume that $\bX$ and $\bY$ are closed and consider
  their Castaing
  representations (see \cite[Def.~1.3.6]{mo1}) 
  $\bX(\omega)=\cl\{\xi'_i(\omega),i\geq1\}$ and
  $\bY(\omega)=\cl\{\xi''_i(\omega),i\geq1\}$. 
  The measurability of $\bX+\bY$ follows from the
  representation
  \begin{multline}
    \label{gr(A+B)}
    \Gr(\bX+\bY)\\
    =\bigcup_{k\geq1}\bigcap_{m\geq1}
    \bigcup_{i,j\geq1}\Big\{(\omega,x):\;\|x-\xi'_i(\omega)-\xi''_j(\omega)\|\le
    \frac{1}{m},\; \|\xi'_i(\omega)\|\le k\Big\}.
  \end{multline}
  Indeed, if $(\omega,x)\in \Gr(\Gamma_1+\Gamma_2)$, then $x=a+b$ for
  $a\in\Gamma_1(\omega)$ and $b\in\Gamma_2(\omega)$. Let $k\geq1$
  such that $\|a\|+1\le k$. Since $a\in \Gamma_1$, there exists a
  subsequence $(\xi'_{n_l})_{l\geq1}$ such that
  $\xi'_{n_l}(\omega)\to a$. We may assume without loss of generality
  that $\|\xi'_{n_l}(\omega)\|\le k$. Similarly,
  $\xi''_{n_l}(\omega)\to b$. Therefore, if $m>0$, then 
  $\|x-\xi'_i(\omega)-\xi''_j(\omega)\|\le \frac{1}{m}$ and
  $\|\xi'_i(\omega)\|\le k$ for some $i,j$.
\end{proof}

\begin{acknowledgements}

  IM was supported by the Swiss National Science Foundation Grant
  200021-153597.
 \\
 EM thanks the program {\sl Investissements d'Avenir} from the French foundation ANR which supports the  Bachelier colloquium, Metabief, France. 
\end{acknowledgements}


\end{document}